\font\twelvemsb=msbm10 at 12pt
\newtheorem{theorem}{Theorem}[section]
\newtheorem{lemma}[theorem]{Lemma}
\newtheorem{proposition}[theorem]{Proposition}
\theoremstyle{definition}
\newtheorem{definition}[theorem]{Definition}
\newtheorem{remark}[theorem]{Remark}
\newcommand{\be}{\begin{equation}}
\newcommand{\ee}{\end{equation}}
\newcommand{\bq}{\begin{eqnarray}}
\newcommand{\eq}{\end{eqnarray}}
\newcommand{\ind}{1\hspace{-2.1mm}{1}} 
\newcommand{\I}{\mathtt{i}}
\newcommand{\D}{\mathrm{d}}
\newcommand{\E}{\mathrm{e}}
\newcommand{\sgn}{\mathrm{sgn}}
\begin{document}

\title{Asymptotic formulae for implied volatility in the Heston model
\thanks{
The authors would like to thank J.~Appleby, J.~Feng, J.P.~Fouque, J.~Gatheral, A.~Gulisashvili, M.~Keller-Ressel, A.~Lewis, R.~Lee and C.~Martini 
for many useful discussions and the anonymous referees for useful comments.
}}
\date{}
\author{ Martin Forde\thanks{Department of Mathematical Sciences, Dublin City University, {\tt martin.forde@dcu.ie}.}
         \and Antoine Jacquier\thanks{Department of Mathematics, Imperial College London and Zeliade Systems, Paris, {\tt ajacquie@imperial.ac.uk}.}
          \and Aleksandar Mijatovi\'{c}\thanks{Department of Mathematics, Imperial College London, {\tt a.mijatovic@imperial.ac.uk}.}}
\maketitle
\begin{abstract}
In this paper we prove an approximate formula expressed in terms of elementary functions for the implied volatility in the Heston model. 
The formula consists of the constant and first order terms in the large maturity expansion of the implied volatility function. 
The proof is based on saddlepoint methods and classical properties of holomorphic functions.
\end{abstract}

\section{Introduction}
In financial markets stochastic models are widely used by traders and risk managers to price and 
hedge financial products.
The models are chosen on economic grounds to reflect the observed characteristics of market data, 
such as leptokurtic returns of asset prices and random instantaneous volatility of these returns.
Pricing and hedging in realistic models
are numerically intensive procedures that need to be performed very quickly 
because future investment decisions depend on the
outcomes of these computations.
Therefore market practitioners 
pay particular attention to 
tractability when 
deciding which model to use.
Among the plethora of possible choices, stochastic volatility models are 
extremely popular particularly in equity, 
foreign exchange and interest rate markets since (i) they feature most of the  
characteristics of these markets, and (ii) they are numerically tractable.

In practice, stochastic volatility models are first calibrated on market data, then used for pricing. 
Pricing financial products is mathematically tantamount either to solving a PDE problem with boundary conditions 
(the final payoff of the product) or to calculating the expectation of this payoff using probabilistic tools such as Monte Carlo simulation or stochastic integration. 
For most models closed-form formulae are scarcely available, and accurate algorithms have been extended and used such as
finite-differences~\cite{Kluge}, ADI schemes~\cite{Foulon}, or quadrature~\cite{AMST06} methods. 
The calibration step involves a proper selection of the data to be fitted by a model. 
A common practice is to calibrate the so-called implied volatility rather than option prices directly.
The implied volatility is a standardised measure of option prices which makes them comparable even though the underlying assets are not the same.
Since this calibration step is based on optimisation algorithms, the lack of a closed-form formula for the implied volatility makes it very time consuming. 
For instance, the SABR stochastic volatility model has become very popular because a closed-form approximation formula for the implied volatility was derived in \cite{SABR} and hence made the model easily tractable. 
Likewise, perturbation methods as developed in \cite{FPS00} have proved to be very useful for obtaining a closed-form approximation formula of option prices. 
Although these methods only hold under some constraints on the parameters, they provide useful initial reference points for calibration.

The Heston model~\cite{Heston} introduced in 1993 has become one of the most widely used stochastic volatility models in the derivatives market (see~\cite{Gatheral}, \cite{Lewis00}, \cite{AMST06}, \cite{AP07}, \cite{LK08}). 
In this paper, we provide a closed-form approximation for the implied volatility in this model.
The idea behind this result is the following: 
suppose one wants to calibrate the Heston model on market data.
This can be performed in two different ways: (i) one uses a global optimisation algorithm, 
which involves computing the implied volatility at each observed point until the algorithm converges; 
(ii) one specifies an initial set of parameters for the model and runs a local optimisation algorithm such as the least-squares method. 
The latter solution is the most widely used in practice since it is less computer-intensive.
However its robustness heavily relies on the initial set of parameters to be specified.
Simple closed-form approximations for the Heston model make this choice robust and accurate.
One first calibrates the approximation on market data---which is straightforward since this is a closed-form---then one uses this calibrated set of parameters as a starting point in the whole calibration process.

Let us consider an European option with maturity $t$ and maturity-dependent strike $K=S_0 \exp(xt)$, 
then our main result is the following asymptotic closed-form formula for the implied volatility 
$\hat{\sigma}_t^2(x)$:
\begin{equation}\label{eq:Intro1}
\hat{\sigma}_t^2(x)=\hat{\sigma}_{\infty}^2(x)+t^{-1}\frac{8\hat{\sigma}_{\infty}^4(x)}{4x^2-\hat{\sigma}_{\infty}^4(x)}\log\left(\frac{A(x)}{A_{\mathrm{BS}}(x,\hat{\sigma}_{\infty}(x),0)}\right)+o\left(t^{-1}\right)
\end{equation}
as the maturity $t$ tends to infinity, where $\hat{\sigma}_\infty^2$ is defined in \eqref{DefOfSigmaInf}, $A$ in \eqref{eq:AHeston} and $A_{\mathrm{BS}}$ in \eqref{eq:DefOfABS}. For a constant strike $K=S_0\exp(x)$, we obtain the following formula:
\begin{equation}\label{eq:Intro2}
\sigma_t^2(x)= 8V^*(0)+t^{-1}4\left(x\left(2\,p^*(0)-1\right)-2\log\left(-A(0)\sqrt{2V^*(0)}\right)\right)+o\left(t^{-1}\right)
\end{equation}
as the maturity $t$ tends to infinity, where $V^*$ is given by \eqref{DefOfVStar} and $p^*$ by \eqref{Saddlepoint}.

It is a well-known fact that for a fixed strike, the implied volatility flattens as the maturity increases~\cite{Rogers}; 
this is confirmed by formula \eqref{eq:Intro2} above, the zeroth order term of which was already known 
(see \cite{Lewis00}, \cite{FJ09II}). However, the maturity-dependent strike formulation in formula \eqref{eq:Intro1} above reveals that the implied volatility smile does not flatten but rather spreads out in a very specific way as the maturity increases.

In the fixed-strike case, Lewis~\cite{Lewis00} pioneered the research on large-time asymptotics 
of implied volatility in stochastic volatility models by studying 
the first eigenvalue and eigenfunction of the generator of the underlying 
stochastic process.
Recently Tehranchi \cite{Tehr09} studied the large-time behaviour of the implied volatility 
when the stock price is a non-negative local martingale and obtained an analogue of 
formula~\eqref{eq:Intro2} in that setting. 
Comparatively, there has been a profusion of work on small-time asymptotics, 
based on differential geometry techniques~\cite{Labordere}, 
PDE methods \cite{Berestycki} or large deviations techniques (\cite{FJ09I} and \cite{FFF09}). 
Likewise, many papers have studied the behaviour of the implied volatility smile in the 
wings (see \cite{BF1}, \cite{BF2}, \cite{Gulisashvili1}, \cite{Gulisashvili2}, \cite{Lee042}).

The proof of our main result, Theorem \ref{thm:HestonLargeT}, is based on two methods: 
first, we use saddlepoint approximation methods to study the behaviour of the call price 
function as an inverse Fourier transform. This idea has already been applied by several authors, including \cite{CarrMadan}, \cite{Glasserman}, \cite{AitSahalia} and \cite{RogersSaddle} in order to speed up the computation of option pricing algorithms based on inverse Fourier transforms. We are also able to obtain the saddlepoint in closed form, thus avoiding any numerical approximations in determining it. The second step in our proof relies on Cauchy's integral theorem and contour integration for holomorphic functions in order to obtain precise estimates of call option prices in the large maturity limit.

The paper is organised as follows. Section \ref{section:notations} contains the large-time asymptotic formula for call options under the Heston and the Black-Scholes
models, both in the maturity-dependent and in the fixed-strike case. 
The proof of the main theorem, Theorem \ref{thm:HestonLargeT}, is given in Section \ref{ProofMainTheorem}. In section \ref{section:ResultsImpliedVol}, we translate these results into implied volatility asymptotics and prove formulae \eqref{eq:Intro1} and \eqref{eq:Intro2} above. In Section \ref{section:Numerics}, we calibrate the Heston model and provide numerical examples based on formulae \eqref{eq:Intro1} and \eqref{eq:Intro2}.

\section{Large-time behaviour of call options}\label{section:notations}

Throughout this article, we work on a model $(\Omega,\mathcal{F},P)$ with a filtration $(\mathcal{F}_t)_{t \ge 0}$ supporting two Brownian motions, and satisfying the usual conditions. Let $(S_t)_{t\geq 0}$ denote a stock price process and we let $X_t:=\log(S_t)$. Interest rates and dividends are considered null. We assume the following Heston dynamics for the log-stock price:
\begin{equation}\label{eq:HestonModel}
        \begin{array}{ll}
        \D X_{t}=-\frac{1}{2}Y_t\D t+\sqrt{Y_{t}}\D W^1_{t},\,\, X_0=x_0\in\mathbb{R},\\
        \D Y_{t}=\kappa\left(\theta-Y_t\right)\D t+\sigma\sqrt{Y_t}\D W^2_{t},\,\, Y_0=y_0>0,\\
        \D\langle W^1, W^2\rangle_{t}=\rho \D t
        \end{array}\
\end{equation}
with $\kappa,\theta,\sigma,y_0>0,|\rho|<1$.

The Feller condition $2\kappa \theta>\sigma^2$ ensures that $0$ is an unattainable boundary for the process $Y$. 
If this condition is violated zero is an attainable, regular and reflecting
boundary (see chapter 15 in~\cite{KarlinTaylor} for the classification of boundary
points of one-dimensional diffusions). 
Since the analysis in this paper relies solely on the study of the behaviour of the 
Laplace transform of the process $X$, 
which remains well defined even if the Feller condition is not satisfied,
we do not assume  that the inequality
$2\kappa \theta>\sigma^2$
holds.

Let us now define $\bar{\kappa}:=\kappa-\rho\sigma$, $\bar{\rho}:=\sqrt{1-\rho^2}$\label{defofrhobar}, and $\bar{\theta}:=\kappa\theta/\bar{\kappa}$. 
Throughout the whole paper, we will assume $\bar{\kappa}>0$. 
This assumption ensures (see Theorem 2.1 in~\cite{FJ09II}) that moments of $S$ greater than $1$ exist for all times $t$. 
This condition is fundamental for the analysis in the paper and is 
usually assumed in the literature 
(see \cite{KR08} and \cite{AP07}). 
When this condition is violated the limiting logarithmic Laplace transform
$V$ defined in~\eqref{DefOfFunctionV} of the process $S$ does not have the
same properties, 
and further research is needed to understand how the implied volatility behaves in this case. 
We know from~\cite{AP07} and~\cite{FJ09II} that $\bar{\kappa}$ is the mean-reversion level of the 
process $Y$ 
under the so-called Share measure that is equivalent to the original
probability measure with the Radon-Nikodym  derivative given by the share price 
itself.  
If $\bar{\kappa}\leq 0$, the  process $Y$ 
will be neither mean-reverting nor ergodic under the Share measure. 
In the equities market this does not constitute a problem since the calibrated 
correlation $\rho$ is always negative. However this assumption may be
restrictive in markets such as foreign exchange, and further research is required
to relax it.
Let $V$ be the limiting logarithmic moment generating function of $X$ defined as
\begin{equation}\label{DefOfFunctionV}
V(p):=\lim_{t\to\infty}t^{-1}\log\mathbb{E}\left(\exp\Big(p\left(X_t-x_0\right)\Big)\right),
\end{equation}
for all $p$ such that the limit exists and is finite. It follows from \cite{AP07} that $V$ is a well defined and strictly convex function on 
$\left(p_-,p_+\right)$ 
and is 
infinite outside, where
\begin{equation}\label{DefOfPpm}
p_{\pm}:=\left(-2 \kappa \rho+\sigma \pm \sqrt{\sigma^2+4 \kappa^2-4 \kappa \rho\sigma}\right)/\left(2\sigma\bar{\rho}^2\right),
\end{equation}
with $p_-<0$ and $p_+>1$. Furthermore the function $V$ takes the following form
\begin{equation}\label{eq:V(p)}
V(p)=\frac{\kappa\theta}{\sigma^2}\Big(\kappa-\rho\sigma
p-d(-\I p)\Big),\quad\text{for }p\in\left(p_-,p_+\right),
\end{equation}
where
\begin{equation}\label{DefinitionOfD}
d(k):=\sqrt{(\kappa-\I \rho \sigma k)^2 + \sigma^2(\I k+ k^2)},\quad\text{for }k\in\mathbb{C},
\end{equation}
and we take the principal branch for the complex square root function in \eqref{DefinitionOfD}.

Let us now define the Fenchel-Legendre transform 
$V^*(x):=\sup\{px-V(p):\>p\in(p_-,p_+)\}$
of
$V$,
which was computed in~\cite{FJ09II}
and is given by the formula
\begin{equation}\label{DefOfVStar}
V^*(x)=p^*(x)x-V(p^*(x)),\quad\text{for all }x\in\mathbb{R},
\end{equation}
where the function $p^*:\mathbb{R}\to\left(p_-,p_+\right)$ is defined by 
\begin{equation}\label{Saddlepoint}
p^*(x):=\frac{\sigma-2\kappa\rho+(\kappa\theta\rho+x\sigma)\left((\sigma^2+4\kappa^2-4\kappa\rho\sigma)/(x^2\sigma^2+2x\kappa\theta\rho\sigma+\kappa^2\theta^2)\right)^{1/2}}{2\sigma\bar{\rho}^2},\quad\text{for all }x\in\mathbb{R}.
\end{equation}
Tedious but straightforward calculations using the explicit formulae above
yield the following proposition.
\begin{proposition}\label{PropIndicators}
The function $p^*:\mathbb{R}\to(p_-,p_+)$, where $p_-$ and $p_+$ are defined in \eqref{DefOfPpm}, is strictly increasing, infinitely differentiable and satisfies the following properties
$$p^*\left(-\theta/2\right)=0,\quad p^*\left(\bar{\theta}/2\right)=1,\quad \lim\limits_{x\to-\infty}p^*(x)=p_-\quad \text{and}\lim\limits_{x\to+\infty}p^*(x)=p_+,$$
as well as the equation
\begin{equation}\label{VPrimeOfpStar}
V'\left(p^*(x)\right)=x,\quad\text{for all }x\in\mathbb{R}.
\end{equation}
\end{proposition}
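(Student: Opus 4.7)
The plan is to establish all claims by direct calculation from the explicit formulae \eqref{eq:V(p)} and \eqref{Saddlepoint}. Writing $D(p) := d(-\I p) = \sqrt{(\kappa-\rho\sigma p)^{2} + \sigma^{2}p(1-p)}$, differentiation of \eqref{eq:V(p)} gives
\[
V'(p) = -\frac{\kappa\theta}{\sigma^{2}}\left(\rho\sigma + \frac{\sigma(\sigma - 2\kappa\rho) - 2\bar{\rho}^{2}\sigma^{2}p}{2D(p)}\right),
\]
which is smooth on $(p_-, p_+)$ since the two roots of the quadratic $D(p)^{2}$ are precisely $p_{\pm}$, so $D(p)>0$ strictly in between. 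Because $V$ is strictly convex on $(p_-,p_+)$ (as noted in the text), $V'$ is strictly increasing, so its inverse is well-defined, strictly increasing, and infinitely differentiable by the inverse function theorem. The core task is therefore to verify that the closed-form expression \eqref{Saddlepoint} really is this inverse, i.e.\ that $V'(p^{*}(x)) = x$ holds for every $x \in \mathbb{R}$.

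To do so I would rearrange $V'(p)=x$. After collecting terms, this is equivalent to
\[
p = \frac{\sigma - 2\kappa\rho}{2\sigma\bar{\rho}^{2}} + \frac{D(p)\,(\kappa\theta\rho + x\sigma)}{\kappa\theta\,\sigma\bar{\rho}^{2}},
\]
and matching with \eqref{Saddlepoint} reduces to the identity
\[
\frac{2D(p)}{\kappa\theta} = \left(\frac{\sigma^{2} + 4\kappa^{2} - 4\kappa\rho\sigma}{x^{2}\sigma^{2} + 2x\kappa\theta\rho\sigma + \kappa^{2}\theta^{2}}\right)^{1/2}.
\]
Squaring both sides, substituting $D(p)^{2} = (\kappa-\rho\sigma p)^{2} + \sigma^{2}p(1-p)$, and eliminating $p$ via the affine relation just displayed yields, after some algebra, a polynomial identity in $x$. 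This is the tedious but mechanical step alluded to in the statement; the absence of a genuine obstacle comes from $D(p)^{2}$ being merely quadratic in $p$, so everything reduces to manipulating quadratics and one square root. Sign consistency must be tracked carefully since we are working with principal square roots, but $D(p)>0$ on $(p_-,p_+)$ and the numerator $\kappa\theta\rho+x\sigma$ has the correct sign in each regime.

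Once \eqref{VPrimeOfpStar} is in hand, the remaining assertions follow easily. The special values come from direct substitution into \eqref{Saddlepoint}: at $x=-\theta/2$ the radicand $x^{2}\sigma^{2} + 2x\kappa\theta\rho\sigma + \kappa^{2}\theta^{2}$ and the factor $\kappa\theta\rho+x\sigma$ combine so that the large fraction exactly cancels $\sigma-2\kappa\rho$, giving $p^{*}(-\theta/2)=0$; at $x=\bar{\theta}/2=\kappa\theta/(2\bar{\kappa})$ an analogous simplification gives $p^{*}(\bar{\theta}/2)=1$. For the limits, rewrite $x^{2}\sigma^{2} + 2x\kappa\theta\rho\sigma + \kappa^{2}\theta^{2} = (x\sigma+\kappa\theta\rho)^{2} + \kappa^{2}\theta^{2}\bar{\rho}^{2}$, whence
\[
\frac{\kappa\theta\rho + x\sigma}{\sqrt{x^{2}\sigma^{2} + 2x\kappa\theta\rho\sigma + \kappa^{2}\theta^{2}}} \longrightarrow \pm 1 \quad \text{as} \quad x\to\pm\infty,
\]
and the resulting limits are exactly $p_{\pm}$ in \eqref{DefOfPpm}. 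Strict monotonicity is then immediate from strict convexity of $V$ (alternatively, one may differentiate \eqref{Saddlepoint} and observe that the derivative is strictly positive).
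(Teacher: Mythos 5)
Your proposal is correct and follows exactly the route the paper itself indicates (the paper gives no proof beyond the remark that "tedious but straightforward calculations using the explicit formulae" yield the result): direct computation of $V'$ from \eqref{eq:V(p)}, verification that the closed form \eqref{Saddlepoint} satisfies $V'(p^*(x))=x$, and then the special values, limits and monotonicity by substitution and strict convexity. Your key identity does check out, since completing the square gives $d(-\I p^*(x))^2=\frac{\sigma^2+4\kappa^2-4\kappa\rho\sigma}{4\bar{\rho}^2}\left(1-\frac{(\kappa\theta\rho+x\sigma)^2}{x^2\sigma^2+2x\kappa\theta\rho\sigma+\kappa^2\theta^2}\right)=\frac{\kappa^2\theta^2(\sigma^2+4\kappa^2-4\kappa\rho\sigma)}{4\left(x^2\sigma^2+2x\kappa\theta\rho\sigma+\kappa^2\theta^2\right)}$, as you require.
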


Since the image of $p^*$ is  $\left(p_-,p_+\right)$, the function $V^*$ is well defined on $\mathbb{R}$.
The following properties of $V^*$ are easy to prove \label{PropertiesOfVStar}
and will be used throughout the paper:
\begin{itemize}
\item[(a)] $V^{*'}(x)=p^*(x)$ for all $x\in\mathbb{R}$;
\item[(b)] $V^{*''}(x)>0$ for all $x\in\mathbb{R}$;
\item[(c)] $x\mapsto V^*(x)$ is non-negative, has a unique minimum at $-\theta/2$ and $V^*(-\theta/2)=0$;
\item[(d)] $x\mapsto V^*(x)-x$ is non-negative, has a unique minimum at $\bar{\theta}/2$ and $V^*\left(\bar{\theta}/2\right)=\bar{\theta}/2$.
\end{itemize}
From the definition \eqref{DefOfVStar} of $V^*$ and relation \eqref{VPrimeOfpStar}, 
the equality in (a) follows. The inequality in (b) is a consequence of (a) and Proposition \ref{PropIndicators}. Now, (a), (b) and Proposition \ref{PropIndicators} imply that $-\theta/2$ is the only local minimum of the function $V^*$ and is therefore a global minimum. The definition of $V^*$ given in \eqref{DefOfVStar} implies $V^*\left(-\theta/2\right)=-V(0)=0$. Since the stock price $S$ is a true martingale (see~\cite{AP07}), we have $V(1)=0$ and Proposition \ref{PropIndicators} implies that $V^*\left(\bar{\theta}/2\right)=\bar{\theta}/2>0$. This proves (c). From (a) and Proposition \ref{PropIndicators}, we know that the function $x\mapsto V^{*}(x)-x$ has a unique minimum attained at $\bar{\theta}/2$ and $V^*\left(\bar{\theta}/2\right)-\bar{\theta}/2=0$. Therefore (b) implies (d). 

\subsection{Large-time behaviour of call options under the Heston model}
In this section, we derive the asymptotic behaviour of call option prices under the Heston dynamics \eqref{eq:HestonModel} as the maturity $t$ tends to infinity, 
both for maturity-dependent and for fixed strikes. Before diving into the core of this paper, let us introduce the function 
$\mathcal{I}:\mathbb{R}\times\mathbb{R}_+\times\mathbb{R}^2\to\mathbb{R}$ by
\begin{equation}\label{eq:DefOfI}
\mathcal{I}\left(x,t;a,b\right):=\left(1-\E^{xt}\right)\ind_{\left\{x<a\right\}}+\ind_{\left\{a<x<b\right\}}+\frac{1}{2}\ind_{\left\{x=b\right\}}+\left(1-\E^{at}\right)\ind_{\left\{x=a\right\}},
\end{equation}
which will feature in the main formulae of Theorem~\ref{thm:HestonLargeT}
and Proposition~\ref{prop:BStimedep}.
The next theorem is the main result of the paper and its proof is given in Section \ref{ProofMainTheorem}.
\begin{theorem}\label{thm:HestonLargeT}
For the Heston model and the assumptions above,  we have the following asymptotic behaviour for the price of a call option with strike $S_0\exp(xt)$ for all $x\in\mathbb{R}$,
$$\frac{1}{S_0}\mathbb{E}\left(S_t-S_0\E^{xt}\right)^{+}=\mathcal{I}\left(x,t;-\theta/2,\bar{\theta}/2\right)
+(2\pi t)^{-1/2}\exp\Big(-\left(V^*(x)-x\right)t\Big)A(x)\left(1+O\left(1/t\right)\right),\quad\text{as }t\to\infty,$$
where
\begin{equation}\label{eq:AHeston}
A(x) := \frac{1}{\sqrt{V''\left(p^*(x)\right)}}\left\{
            \begin{array}{ll}
\displaystyle \frac{U(p^*(x))}{p^{*}(x)\left(p^{*}(x)-1\right)},&\text{if }x\in\mathbb{R}\setminus\left\{-\theta/2,\bar{\theta}/2\right\},\\
\\
\displaystyle -1-\sgn\left(x\right)\left(\frac{1}{6}\frac{V'''\left(p^*(x)\right)}{V''\left(p^*(x)\right)}-U'\left(p^*(x)\right)\right),&\text{if }x\in\left\{-\theta/2,\bar{\theta}/2\right\},\\
            \end{array}
            \right.
\end{equation}
where
\begin{equation}\label{eq:DefOfU}
U(p):= \left(\frac{2d\left(-\I p\right)}{\kappa-\rho\sigma p+d\left(-\I p\right)}\right)^{2\kappa\theta/\sigma^2}\exp\left(\frac{ y_0}{\kappa \theta}V(p)\right),
\end{equation}
$V$ is defined in \eqref{eq:V(p)}, $p^*$ in \eqref{Saddlepoint}, $V^*$ in \eqref{DefOfVStar} and $d$ in \eqref{DefinitionOfD}
and 
$\sgn(x)$
equals
$1$ 
if 
$x$
is positive and 
$-1$
otherwise.
\end{theorem}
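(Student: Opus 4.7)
The plan is to represent the call price as an inverse Mellin transform and then apply the saddle-point method. For any $c\in(1,p_{+})$, one checks via Fubini and the elementary identity $(\E^{y}-\E^{k})^{+}=\frac{1}{2\pi\I}\int_{c-\I\infty}^{c+\I\infty}\frac{\E^{py-(p-1)k}}{p(p-1)}\,\D p$ that
\[
\frac{1}{S_{0}}\mathbb{E}\bigl(S_{t}-S_{0}\E^{xt}\bigr)^{+}=\frac{\E^{xt}}{2\pi\I}\int_{c-\I\infty}^{c+\I\infty}\frac{\phi_{t}(p)\,\E^{-pxt}}{p(p-1)}\,\D p,
\]
where $\phi_{t}(p):=\mathbb{E}[\E^{p(X_{t}-x_{0})}]$ is finite on $(p_{-},p_{+})$. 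The closed-form Heston MGF factors explicitly as $\phi_{t}(p)=U_{t}(p)\,\E^{tV(p)}$ with $U_{t}(p)=U(p)+O(\E^{-\delta t})$ uniformly on compact subsets of the open strip $(p_{-},p_{+})+\I\mathbb{R}$, the remainder being governed by the exponentially small factor $\E^{-t\operatorname{Re}d(-\I p)}$ appearing in the affine coefficients.

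The second step is to deform the contour from $\operatorname{Re}(p)=c$ to the vertical line $\operatorname{Re}(p)=p^{*}(x)$ via Cauchy's theorem. The only singularities of the integrand in the strip are the simple poles at $p=0$ and $p=1$, whose residues (using $\phi_{t}(0)=\phi_{t}(1)=1$ from the martingale property) equal $-1$ and $\E^{-xt}$ respectively. Combined with the prefactor $\E^{xt}$, and with the facts from Proposition~\ref{PropIndicators} that $p^{*}$ is strictly increasing with $p^{*}(-\theta/2)=0$ and $p^{*}(\bar{\theta}/2)=1$, the residues picked up whenever the shift crosses $\{0,1\}$ collapse exactly to $\mathcal{I}(x,t;-\theta/2,\bar{\theta}/2)$.

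Third, on the shifted line $\operatorname{Re}(p)=p^{*}(x)$, strict convexity of $V$ on $(p_{-},p_{+})$ guarantees that $\operatorname{Re}(V(p)-px)$ attains its unique maximum along the line at the real point $p^{*}(x)$, so the line is a steepest-descent contour. Using
\[
V(p^{*}+\I\tau)-(p^{*}+\I\tau)x=-V^{*}(x)-\tfrac{1}{2}V''(p^{*})\tau^{2}+O(\tau^{3})
\]
and replacing $U_{t}$ by $U$, a standard Laplace argument yields the leading contribution $(2\pi t)^{-1/2}\exp(-(V^{*}(x)-x)t)\,A(x)$ with $A(x)$ given by the first branch of \eqref{eq:AHeston}. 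Controlling the cubic remainder in the exponent and the far tail of the contour---through a uniform modulus bound on $\phi_{t}(p^{*}+\I\tau)$ together with an application of Watson's lemma---promotes this to the multiplicative $1+O(1/t)$ error.

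Finally, the main obstacle is the degenerate situation $x\in\{-\theta/2,\bar{\theta}/2\}$, in which the saddle $p^{*}(x)$ sits exactly on the pole $p=0$ or $p=1$ so that the first branch of $A$ becomes formally $0/0$. The remedy is to indent the steepest-descent contour by a small semicircle around the offending pole: the half-residue picked up by the indentation, together with a Taylor expansion of both $V$ and $U$ carried one extra order around the saddle, reproduces the second branch of $A$, with the $-1$, the $\sgn(x)$, and the combination $V'''(p^{*})/V''(p^{*})$ and $U'(p^{*})$ arising naturally from the coalescing pole--saddle analysis. Justifying this limit rigorously, and ensuring uniformity of the $O(1/t)$ remainder in a neighbourhood of the two exceptional abscissae, is the most delicate step of the argument.
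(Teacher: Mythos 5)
Your overall architecture is the same as the paper's: an inverse-transform representation of the call price (your Mellin form is equivalent to the Lee inversion of Section~\ref{sub:LeeFourier}), the factorisation $\phi_t=U\,\E^{tV}\left(1+O(\E^{-\delta t})\right)$ of Lemma~\ref{lem:AsymptoticBehaviourPhi}, a shift of the contour to the abscissa $p^*(x)$ whose crossed residues produce $\mathcal{I}$, a Laplace expansion at the saddle for generic $x$, and an indented contour with a half-residue at the two exceptional points. The genuine gap is in your localisation step. You assert that ``strict convexity of $V$ on $(p_-,p_+)$ guarantees that $\Re(V(p)-px)$ attains its unique maximum along the line at the real point $p^*(x)$''. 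Convexity of the restriction of $V$ to the real axis implies nothing about $\Re V$ on a vertical line; the soft bound $|\phi_t(p^*+\I\tau)|\le\phi_t(p^*)$ gives only a non-strict inequality, whereas the claimed multiplicative error $1+O(1/t)$ requires a quantitative gap $\Re V(p^*+\I\tau)\le V(p^*)-c(\delta)$ for $|\tau|\ge\delta$, plus genuine decay as $|\tau|\to\infty$. Your proposed tail control (``uniform modulus bound'' plus Watson's lemma) does not suffice: with only $|\phi_t(p^*+\I\tau)|\le\phi_t(p^*)$ and the $O(\tau^{-2})$ decay of $1/(p(p-1))$, the tail beyond $|\tau|>T$ is $O\!\left(\E^{-(V^*(x)-x)t}/T\right)$, which is not $O(1/t)$ relative to the main term $t^{-1/2}\E^{-(V^*(x)-x)t}$ unless $T\gtrsim t^{3/2}$, and then the intermediate range needs the missing gap anyway. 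The paper fills exactly this hole with two Heston-specific computations: Lemma~\ref{lem:FAlongHorizontalContour} (on each horizontal line of the strip, $k_r\mapsto\Re\left(-\I kx-V(-\I k)\right)$ has its unique minimum at $k_r=0$ and is strictly monotone on each side, proved via the positivity of the quartic $u^2+v^2$) and Lemma~\ref{LemmaTailEstimates} (exponential tail decay from $\Re V(-\I(k_r+\I k_i))\sim-\kappa\theta\bar{\rho}|k_r|/\sigma$). Some argument of this type must replace your convexity claim before the expansion can reach the stated accuracy.

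At $x\in\{-\theta/2,\bar{\theta}/2\}$ your plan (indentation, half-residue, one extra order of Taylor expansion) matches the paper in outline, but two points need care. First, the first branch of $A$ does not become $0/0$ there: since $U(0)=U(1)=1$, the integrand has a simple pole at the saddle. Second, the device that makes the remaining principal-value integral amenable to a saddle-point expansion is not the indentation alone but integration along the true steepest-descent curve of Lemma~\ref{lem:LemmaContour}: there the exponent is real, and by the symmetry of the curve taking real parts annihilates the singular $\I/s$ part exactly, so $\Re(q)$ extends holomorphically and Olver's theorem applies (Lemma~\ref{LemmaSpecialCase}). Along your vertical line the $\I/\tau$ part does not cancel and contributes at order $\E^{-(V^*(x)-x)t}$, i.e.\ above the $t^{-1/2}$ term; this contribution, together with the side on which you indent, is precisely what must combine with the crossed residues to reproduce the boundary values $\tfrac12$ and $1-\E^{-\theta t/2}$ of $\mathcal{I}$, so asserting that the bookkeeping ``collapses exactly'' is not yet a proof at these two points.
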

\begin{remark}
\label{rem:Poles_Of_A}
Property (b) on page~\pageref{PropertiesOfVStar} implies that the square root 
$\sqrt{V''\left(p^*(x)\right)}$
is a strictly positive real number. 
Note also that the function 
$A$ 
defined in~\eqref{eq:AHeston}
is not continuous at the points 
$-\theta/2$ and $\bar{\theta}/2$ 
by Proposition~\ref{PropIndicators}
(see also Figure \ref{FigureAandABS} on page \pageref{FigureAandABS}).
\end{remark}
\begin{remark}
It is proved in Proposition \ref{PropIndicators} that $p^*\left(-\theta/2\right)=0$ and $p^*\left(\bar{\theta}/2\right)=1$. Note further that $U'(0)=(\theta-y_0)/(2\kappa)$ and $U'(1)=(y_0-\bar{\theta})/(2\bar{\kappa})$.
\end{remark}
\begin{remark} Theorem \ref{thm:HestonLargeT} is similar in spirit to the saddlepoint approximation for a density of random variable $X$ given in Butler \cite{But}
$$f_X(x) \approx \Big(2\pi K''(p^*(x))\Big)^{-1/2}\exp\Big(K(p^*(x))-xp^*(x)\Big),$$
where $K(p):=\log\mathbb{E}(\exp(pX))$, and $p^*(x)$ is the unique solution to $K'(p^*(x))=x$. Here $X:=X_t-x_0$, $K(p)=tV(p)+O(1)$, and we substitute $x$ to $xt$ so that
$$f_{X_t-x_0}(xt) \approx \left(2\pi V''\left(p^*(x)\right)t\right)^{-1/2}\exp\left(V\left(p^*(x)\right)t-xp^*(x)t\right)= \left(2\pi V''\left(p^*(x)\right)t\right)^{-1/2}\exp\left(-V^*\left(x\right)t\right).$$
\end{remark}
\noindent In order to precisely compare our result to the existing literature, we prove the following lemma, which gives the asymptotic behaviour of vanilla call options when the strike $K$ is fixed, independent of the maturity. 
The following lemma was derived in \cite{Lewis00}, Chapter 6; a rigorous proof is detailed here in Appendix~\ref{section:ProofofFixedStrike}.
\begin{lemma}
\label{lem:LargeTFixedK}
Under the same assumptions as Theorem \ref{thm:HestonLargeT}, for any $x\in\mathbb{R}$, we have the following behaviour for a call option with fixed strike $K=S_0\,\exp(x)$,
$$\frac{1}{S_0}\mathbb{E}\left(S_t-K\right)^{+}=1+\frac{A(0)}{\sqrt{2\pi t}}\exp\Big((1-p^*(0))x-V^*(0)t\Big)\left(1+O(1/t)\right),\text{ as }t\to\infty.$$
\end{lemma}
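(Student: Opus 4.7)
The plan is to represent the call price as an inverse Laplace transform along a vertical contour inside the Heston moment generating function's strip of regularity, shift the contour through the saddlepoint $p^{*}(0)\in(0,1)$, and apply a classical saddlepoint expansion.

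Step one is the contour representation. From the elementary identity
\begin{equation*}
(\E^{y}-\E^{x})^{+} = \frac{1}{2\pi\I}\int_{c-\I\infty}^{c+\I\infty}\frac{\E^{py+(1-p)x}}{p(p-1)}\,\D p,\qquad c>1,\ y\neq x,
\end{equation*}
applied with $y=X_{t}-x_{0}$, together with Fubini (justified because $\bar{\kappa}>0$ gives $p_{+}>1$, so the moment generating function $\Phi_{t}(p):=\mathbb{E}(\E^{p(X_{t}-x_{0})})$ is finite on the contour for some $c\in(1,p_{+})$), I would obtain
\begin{equation*}
\frac{1}{S_{0}}\mathbb{E}(S_{t}-K)^{+}=\frac{1}{2\pi\I}\int_{c-\I\infty}^{c+\I\infty}\frac{\Phi_{t}(p)\,\E^{(1-p)x}}{p(p-1)}\,\D p.
\end{equation*}

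Step two is to shift the contour to $\mathrm{Re}(p)=p^{*}(0)$. This is permissible since the integrand is holomorphic in the strip $(p_{-},p_{+})\setminus\{0,1\}$ and the closed-form Heston formula provides enough decay as $|\mathrm{Im}(p)|\to\infty$ to kill the horizontal segments at $\pm\I R$. By Proposition \ref{PropIndicators} the shift crosses only the pole at $p=1$; its residue equals $\Phi_{t}(1)=1$ by the martingale property of $S$, and this accounts for the leading constant $1$ in the formula.

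Step three is the saddlepoint estimate of the remaining integral on $\{p^{*}(0)+\I u:u\in\mathbb{R}\}$. Because $V'(p^{*}(0))=0$ and $V''(p^{*}(0))>0$ by Proposition \ref{PropIndicators} and property (b) on page \pageref{PropertiesOfVStar}, this vertical line is a steepest-descent contour for $\exp(tV(p))$. Writing $\Phi_{t}(p)=\exp(tV(p))\,U(p)\bigl(1+O(1/t)\bigr)$ as in the Heston closed form with $U$ defined in \eqref{eq:DefOfU}, noting that the factor $\E^{(1-p)x}$ is uniformly bounded on the contour and contributes only $\E^{(1-p^{*}(0))x}$ at the saddle, and Taylor-expanding $V(p^{*}(0)+\I u)=V(p^{*}(0))-\tfrac{1}{2}V''(p^{*}(0))u^{2}+O(u^{3})$ reduces the integral to a Gaussian. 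Evaluating it and using $V(p^{*}(0))=-V^{*}(0)$ together with the definition of $A(0)$ in \eqref{eq:AHeston} (the top case applies, as $0\notin\{-\theta/2,\bar{\theta}/2\}$) produces exactly the claimed asymptotic.

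The main obstacle is the rigorous quantitative control. One must split the vertical contour into a central piece of length $t^{-1/2+\delta}$, on which the Taylor expansion of $V$ and the $(1+O(1/t))$ approximation of $\Phi_{t}$ are uniform, and two tails whose contribution must be shown to be exponentially smaller than the saddle value. The latter requires decay estimates on the Heston MGF for large $|\mathrm{Im}(p)|$ and strict convexity of $\mathrm{Re}\,V$ transverse to $p^{*}(0)$. These are precisely the estimates used in the proof of Theorem \ref{thm:HestonLargeT}; the fixed-strike case is in fact simpler, because the prefactor $\E^{(1-p)x}$ does not interact with the saddle location as it does in the maturity-dependent case.
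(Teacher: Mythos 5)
Your proposal is correct and is essentially the same argument the paper uses: the paper's proof of this lemma simply invokes the Lee--Fourier inversion formula and the saddlepoint machinery of Theorem \ref{thm:HestonLargeT}, with the residue giving the constant $1$ and the integral along the contour through $\I p^*(0)$ giving the $O(t^{-1/2})$ correction via Lemmas \ref{lem:AsymptoticBehaviourPhi}--\ref{LemmaTailEstimates} and Proposition \ref{prop:MainProp}. Your Mellin/Bromwich parametrization in $p$, the by-hand contour shift across the pole at $p=1$ (residue $\Phi_t(1)=1$ by the martingale property), and the Laplace-method evaluation at the saddle $p^*(0)$ are all equivalent to the paper's steps under the substitution $p=\I k$.
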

\subsection{Large-time behaviour of the the Black-Scholes call option formula}
By a similar analysis, we can deduce the large-time asymptotic call price for the Black-Scholes model. 
This result is of fundamental importance for us since it will allow us to compute the implied volatility 
by comparing the Black-Scholes and the Heston call option prices. Throughout the rest of the paper, we let $C_{\mathrm{BS}}(S_0,K,t,\Sigma)$ denote the Black-Scholes price of a European call option written on a reference stock price $S$, with strike $K>0$, initial stock price $S_0>0$, time-to-maturity $t\geq 0$, and volatility $\Sigma>0$ (with zero interest rate and zero dividend). 
Similar to Section \ref{section:notations}, let us define the function $V_{\mathrm{BS}}:\mathbb{R}\to\mathbb{R}$ as in \eqref{DefOfFunctionV}, where $X_t:=\log(S_t)$. In the Black-Scholes case, it reads
\begin{equation}\label{eq:DefOfVbs}
V_{\mathrm{BS}}(p)=p\left(p-1\right)\Sigma^2/2,\quad\text{for all }p\in\mathbb{R}.
\end{equation}
Similarly to \eqref{DefOfVStar}, we can define the functions $V^*_{\mathrm{BS}}:\mathbb{R}\times\mathbb{R}_+^*\to\mathbb{R}$ and $p_{\mathrm{BS}}^*:\mathbb{R}\to\mathbb{R}$, by
\begin{equation}\label{eq:DefOfVStarBS}
V^*_{\mathrm{BS}}\left(x,\Sigma\right):=\left(x+\Sigma^2/2\right)^2/\left(2\Sigma^2\right),\quad\text{for all }x\in\mathbb{R},\ \Sigma\in\mathbb{R}_+^*,
\end{equation}
and
\begin{equation}\label{eq:DefOfpStarBS}
p^*_{\mathrm{BS}}(x):=\left(x+\Sigma^2/2\right)/\Sigma^2,\quad\text{for all }x\in\mathbb{R}.
\end{equation}
The following proposition, proved in Appendix \ref{ProofLemma24}, gives the behaviour of the Black-Scholes price as the maturity tends to infinity.
\begin{proposition}\label{prop:BStimedep}
Let $\sigma>0$ and 
let the real number 
$a_1$
satisfy
$a_1>-\sigma^2 t$ 
for large times 
$t$.
Then for all $x\in\mathbb{R}$, 
we have the following asymptotic behaviour for the Black-Scholes call option formula in the large-strike, large-time case
$$
\frac{1}{S_0}C_{\mathrm{BS}}\left(S_0,S_0 \E^{xt},t,\sqrt{\sigma^2+\frac{a_1}{t}}\right)
= \mathcal{I}\left(x,t;-\frac{\sigma^2}{2},\frac{\sigma^2}{2}\right)
+\frac{A_{\mathrm{BS}}(x,\sigma,a_1)}{\sqrt{2\pi t}}\E^{-\left(V_{\mathrm{BS}}^*(x,\sigma)-x\right)t}(1+ O(1/t)),
$$
where
\begin{equation}\label{eq:DefOfABS}
A_{\mathrm{BS}}(x,\sigma,a_1)  := \exp\left(\frac{1}{8}a_1\left(\frac{4x^2}{\sigma^4}-1\right)\right)\frac{\sigma^3}{x^2-\sigma^4/4}\ind_{\{x\ne\pm\sigma^2/2\}}+\frac{a_1/2-1}{\sigma }\ind_{\{x=\pm\sigma^2/2\}},
\end{equation}
and where the function $\mathcal{I}$ is defined in \eqref{eq:DefOfI}.
\end{proposition}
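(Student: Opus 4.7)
The strategy is to expand the explicit Black--Scholes call formula in $1/t$. Writing $\Sigma_t:=\sqrt{\sigma^2+a_1/t}$ and letting $\Phi,\phi$ denote the standard normal CDF and PDF,
$$\frac{1}{S_0}C_{\mathrm{BS}}(S_0, S_0 \E^{xt}, t, \Sigma_t) = \Phi(d_1) - \E^{xt}\Phi(d_2),\qquad d_{1,2}(t) := \frac{(-x\pm\Sigma_t^2/2)\sqrt{t}}{\Sigma_t}.$$
The signs of the leading $\sqrt{t}$-coefficients $-x\pm\sigma^2/2$ drive each $d_i$ to $\pm\infty$ except in the degenerate cases $x=\pm\sigma^2/2$, where one of them is $\Theta(1/\sqrt{t})$. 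This splits the proof into five regimes, exactly as encoded by $\mathcal{I}(x,t;-\sigma^2/2,\sigma^2/2)$.

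In the three generic regimes $x\ne\pm\sigma^2/2$, I would apply the Mills-ratio expansion
$$1 - \Phi(z) = \frac{\phi(z)}{z}\bigl(1+O(z^{-2})\bigr),\qquad z\to+\infty,$$
and its $z\to-\infty$ analogue to each of $\Phi(d_1)$ and $\Phi(d_2)$. Their leading constants ($0$ or $1$) combine into the appropriate branch of $\mathcal{I}$. The next-order term requires a careful Taylor expansion of $d_i^2/2$ in $1/t$: using $\Sigma_t^{-2}=\sigma^{-2}(1-a_1/(\sigma^2 t)+O(t^{-2}))$, the order-$1/t$ terms cancel to yield
$$\frac{d_1^2}{2} = \bigl(V^*_{\mathrm{BS}}(x,\sigma)-x\bigr)t + \frac{a_1}{8}\Big(1 - \frac{4x^2}{\sigma^4}\Big) + O(t^{-1}),$$
and an analogous expression for $d_2^2/2$ whose leading term is $V^*_{\mathrm{BS}}(x,\sigma)\,t$. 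After the $\E^{xt}$ prefactor is absorbed, both Mills corrections sit at the same exponential scale $\E^{-(V^*_{\mathrm{BS}}(x,\sigma)-x)t}$; pairing them via the partial-fraction identity $\tfrac{1}{\sigma^2/2-x}+\tfrac{1}{\sigma^2/2+x}=\tfrac{\sigma^2}{\sigma^4/4-x^2}$ (and its signed variants for the other two regimes) produces exactly $\sigma^3/(x^2-\sigma^4/4)\cdot\exp(a_1(4x^2/\sigma^4-1)/8)$, matching $A_{\mathrm{BS}}$ off the singular set.

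The two boundary cases $x=\pm\sigma^2/2$ require separate treatment, since one of $d_1,d_2$ is $\Theta(1/\sqrt{t})$ (explicitly $\pm a_1/(2\Sigma_t\sqrt{t})$) while the other remains $\Theta(\sqrt{t})$. For the small argument I would replace Mills by the Taylor expansion $\Phi(h)=1/2+\phi(0)h+O(h^3)$; for the large argument Mills still applies. The leading parts reproduce $\mathcal{I}$, and the surviving correction collapses to the singular-locus value $(a_1/2-1)/\sigma$ of $A_{\mathrm{BS}}$.

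The main obstacle is the $1/t$-bookkeeping for $d_i^2/2$: because $\Sigma_t^2$ and $\Sigma_t^{-2}$ expand with a first-order cancellation, the constant $a_1(1-4x^2/\sigma^4)/8$ in the exponent only emerges after all cross-terms are retained. A complementary subtlety is that $\Phi(d_1)$ and $-\E^{xt}\Phi(d_2)$ taken individually produce different rational prefactors of distinct sign, and only their paired combination through the above partial-fraction identity recovers the correct $A_{\mathrm{BS}}(x,\sigma,a_1)$.
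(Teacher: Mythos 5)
Your proposal is correct and follows essentially the same route as the paper's proof in the appendix: write the Black--Scholes price as $\Phi(z_+)-\E^{xt}\Phi(z_-)$, apply the standard tail expansion of $\Phi$ in the generic regimes with the $1/t$-Taylor bookkeeping of the exponent producing the factor $\exp\bigl(\tfrac{a_1}{8}(4x^2/\sigma^4-1)\bigr)$ and the partial-fraction combination $\tfrac{\sigma}{x-\sigma^2/2}-\tfrac{\sigma}{x+\sigma^2/2}=\tfrac{\sigma^3}{x^2-\sigma^4/4}$, and treat $x=\pm\sigma^2/2$ separately by Taylor expanding $\Phi$ near zero to obtain $(a_1/2-1)/\sigma$. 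No substantive difference from the paper's argument.
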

\begin{remark}
\label{rem:BSLargeT}
If we set 
$a_1=0$ 
in Proposition \ref{prop:BStimedep}, we obtain the large-time expansion for a call option under the standard Black-Scholes model with volatility $\sigma$ and log-moneyness equal to $xt$.
\end{remark}
\noindent As in the Heston model above, we derive here the equivalent of Proposition \ref{prop:BStimedep} when the strike does not depend on the maturity anymore.
\begin{lemma}\label{lem:FixedStrikeOptionBSLemma}
With the assumptions above, we have the following behaviour for the Black-Scholes call option formula in the fixed-strike, large-time case
$$\frac{1}{S_0}C_{\mathrm{BS}}\left(S_0,S_0 \E^{x},t,\sqrt{\sigma^2+a_1/t}\right)=1-\frac{2\sqrt{2}}{\sigma\sqrt{\pi  t}}\exp\left(-\sigma^2 t/8+x/2-a_1/8\right)\left(1+O(1/t)\right).$$
\end{lemma}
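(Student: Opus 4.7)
The plan is to compute the Black--Scholes price directly via the Gaussian tail expansion, using the explicit formula $C_{\mathrm{BS}}(S_0,K,t,\Sigma)=S_0N(d_1)-KN(d_2)$ where, with $K=S_0\E^x$ and $\Sigma^2=\sigma^2+a_1/t$, one has
\[
d_{1,2}=\frac{-x\pm \Sigma^2 t/2}{\Sigma\sqrt{t}},\quad d_1\to+\infty,\ d_2\to-\infty\ \text{as}\ t\to\infty.
\]
So $C_{\mathrm{BS}}/S_0 = 1-N(-d_1)-\E^x N(d_2)$, and the task reduces to finding the leading behaviour of the sum $N(-d_1)+\E^x N(d_2)$ of two Gaussian tails.

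The key trick is a symmetry identity. From $d_1+d_2=-2x/(\Sigma\sqrt{t})$ and $d_1-d_2=\Sigma\sqrt{t}$, a direct computation gives $d_1^2-d_2^2=-2x$, hence $\E^x\varphi(d_2)=\varphi(d_1)$, where $\varphi$ is the standard normal density. Applying the classical Mills ratio expansion $N(-y)=\varphi(y)y^{-1}(1-y^{-2}+O(y^{-4}))$ as $y\to+\infty$ to both $N(-d_1)$ and $N(d_2)=N(-|d_2|)$, and using this symmetry, the two terms combine into
\[
N(-d_1)+\E^x N(d_2)=\varphi(d_1)\Bigl(\frac{1}{d_1}+\frac{1}{|d_2|}\Bigr)\bigl(1+O(1/t)\bigr).
\]
The sum of reciprocals simplifies via $1/d_1+1/|d_2|=(d_1+|d_2|)/(d_1|d_2|)=\Sigma\sqrt{t}/\bigl(\Sigma^2 t/4-x^2/(\Sigma^2 t)\bigr)=4/(\Sigma\sqrt{t})+O(t^{-3/2})$.

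It then remains to expand $d_1^2/2=\Sigma^2 t/8-x/2+x^2/(2\Sigma^2 t)$ and to substitute $\Sigma^2 t=\sigma^2 t+a_1$ and $1/\Sigma=\sigma^{-1}(1+O(1/t))$. The resulting exponent $-\sigma^2 t/8+x/2-a_1/8+O(1/t)$, combined with the prefactor $4/(\sigma\sqrt{2\pi t})(1+O(1/t))=2\sqrt{2}/(\sigma\sqrt{\pi t})(1+O(1/t))$, yields exactly the claimed formula after absorbing the subleading $x^2/(2\Sigma^2 t)$ correction and the Mills remainder into the uniform $O(1/t)$ factor.

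There is no real obstacle, the computation is elementary. The only point requiring care is that $N(-d_1)$ and $\E^x N(d_2)$ are of the \emph{same} leading order rather than one dominating the other, so one must retain both terms, use the symmetry $\E^x\varphi(d_2)=\varphi(d_1)$ to factor out a common exponential, and check that the $1/d_1^2$ and $1/d_2^2$ corrections from Mills' ratio, together with the $x^2/(2\Sigma^2 t)$ term in the exponent and the $a_1/t$ expansion of $\Sigma$, all collapse into a single $1+O(1/t)$ multiplicative remainder. Alternatively the same result follows from Proposition~\ref{prop:BStimedep} by substituting $x\mapsto x/t$, which confirms the computation but requires checking uniformity of the remainder in the slope parameter.
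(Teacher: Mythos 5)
Your proof is correct and follows essentially the same route as the paper, which simply invokes the Black--Scholes formula \eqref{eq:BSFormula} together with the Gaussian tail approximation \eqref{OlverApproxPhi}; your use of the identity $\E^{x}\varphi(d_2)=\varphi(d_1)$ to combine the two tails is exactly the computation the paper leaves implicit. No gaps.
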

This lemma is immediate from the Black-Scholes formula given in Appendix \ref{ProofLemma24} and the approximation \eqref{OlverApproxPhi} for the Gaussian cumulative distribution function.
\section{Large-time behaviour of implied volatility}\label{section:ResultsImpliedVol}
The previous section dealt with large-time asymptotics for call option
prices.  In this section, we translate these results into asymptotics for
the implied volatility. Recall that \cite{FJ09II} and \cite{Lewis00} have already derived the leading order term for the implied
volatility in the large-time, fixed-strike case. Our goal here
is to obtain the leading order and the correction term in the large-time,
large-strike case. Theorem \ref{thm:ImpliedVol} provides the main result, i.e. the large-time behaviour of the implied volatility in the large strike case. In the following, $\hat{\sigma}_t(x)$ will denote the implied volatility corresponding to a vanilla call option with maturity $t$ and (maturity-dependent) strike $S_0\exp(xt)$ in the Heston model \eqref{eq:HestonModel}. We now define the functions $\hat{\sigma}_\infty^2:\mathbb{R}\to\mathbb{R}_+$ and $\hat{a}_1:\mathbb{R}\to\mathbb{R}$ by
\begin{equation}\label{DefOfSigmaInf}
\hat{\sigma}^2_{\infty}(x):=2\left(2V^*(x)-x+2
\left(\ind_{x\in(-\theta/2,\bar{\theta}/2)}-\ind_{x\in\mathbb{R}\setminus(-\theta/2,\bar{\theta}/2)}\right)\sqrt{V^*(x)^2-V^*(x)x}\right),
\text{ for all }x\in\mathbb{R},
\end{equation}
and
\begin{equation}\label{eq:Eqa1}
\hat{a}_1(x):=2\left\{
        \begin{array}{ll}
\displaystyle          \Big(x^2/\hat{\sigma}_{\infty}^4(x)-1/4\Big)^{-1}\log\Big(A(x)/A_{\mathrm{BS}}(x,\hat{\sigma}_{\infty}(x),0)\Big),\quad\quad\quad\quad\quad x\in\mathbb{R}\setminus\left\{-\frac{\theta}{2},\frac{\bar{\theta}}{2}\right\},\\ \\
\displaystyle 1-\frac{\hat{\sigma}_\infty(x)}{\sqrt{V''(p^*(x))}}\left(1+\sgn(x)\left(\frac{V'''(p^*(x))}{6V''(p^*(x))}-U'(p^*(x))\right)\right),\quad x\in\left\{-\frac{\theta}{2},\frac{\bar{\theta}}{2}\right\},
        \end{array}\
        \right.
\end{equation}
where $A$ is defined in \eqref{eq:AHeston}, $A_{\mathrm{BS}}$ in \eqref{eq:DefOfABS}, $U$ in \eqref{eq:DefOfU}, $V$ in \eqref{eq:V(p)},  $p^*$ in \eqref{Saddlepoint} and $V^*$ in \eqref{DefOfVStar}. They are all completely explicit, so that the functions $\hat{\sigma}^2_{\infty}$ and $\hat{a}_1$ are also explicit. From the properties of $V^*$ proved on page \pageref{PropertiesOfVStar}, $V^*(x)$ and $V^*(x)-x$ are non-negative, so that $\hat{\sigma}^2_\infty(x)$ is a well defined real number for all $x\in\mathbb{R}$. Then the following theorem holds.
\begin{theorem}
\label{thm:ImpliedVol}
The functions $\hat{\sigma}_\infty$ and $\hat{a}_1$ are continuous on $\mathbb{R}$ and
$$\hat{\sigma}_t^2(x)=\hat{\sigma}_{\infty}^2(x)+\hat{a}_1(x)/t+o\left(1/t\right),\quad\text{for all }x\in\mathbb{R},\text{ as }t\to\infty.$$
Furthermore the error term
$|\hat{\sigma}^2_t(x)-\hat{\sigma}^2_\infty(x)-\hat{a}_1(x)/t|t$
tends to zero as 
$t$
goes to infinity 
uniformly in
$x$
on compact subsets of 
$\mathbb{R}\backslash\{-\theta/2,\bar{\theta}/2\}$.
\end{theorem}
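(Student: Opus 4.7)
The plan is to read off the implied-volatility expansion by equating the large-$t$ expansions of the Heston call price (Theorem~\ref{thm:HestonLargeT}) and of the Black-Scholes call price with perturbed volatility $\sqrt{\sigma^2+a_1/t}$ (Proposition~\ref{prop:BStimedep}), and then to invert the defining relation $C_H(t,x)=C_{\mathrm{BS}}(t,x,\hat\sigma_t(x))$ via a Vega bound. In this scheme, $\hat\sigma_\infty$ is determined by matching the dominant exponential rates, and $\hat a_1$ by matching the sub-exponential prefactors, so the identification of both quantities is essentially algebraic.

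\emph{Leading order and sign choice.} The equation $V^*_{\mathrm{BS}}(x,\sigma)=V^*(x)$, i.e.\ $(x+\sigma^2/2)^2=2V^*(x)\sigma^2$, is a quadratic in $\sigma^2$ with discriminant $16V^*(x)(V^*(x)-x)\ge 0$ by properties (c)--(d) on page~\pageref{PropertiesOfVStar}, and its two positive roots are the two branches in \eqref{DefOfSigmaInf}. A direct check using (c)--(d) shows that the ``$+$'' root always places $x$ inside $(-\hat\sigma_\infty^2(x)/2,\hat\sigma_\infty^2(x)/2)$ while the ``$-$'' root always places it outside; hence the sign in \eqref{DefOfSigmaInf} is forced by requiring that the Black-Scholes indicator $\mathcal{I}(x,t;-\hat\sigma_\infty^2(x)/2,\hat\sigma_\infty^2(x)/2)$ have the same large-$t$ limit as the Heston indicator $\mathcal{I}(x,t;-\theta/2,\bar\theta/2)$, i.e., ``$+$'' on $(-\theta/2,\bar\theta/2)$ and ``$-$'' outside.

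\emph{Subleading order and inversion.} With $\hat\sigma_\infty(x)$ fixed, Proposition~\ref{prop:BStimedep} with $\sigma=\hat\sigma_\infty(x)$ and free parameter $a_1$, combined with Theorem~\ref{thm:HestonLargeT}, yields
\[
\frac{1}{S_0}\Bigl[C_{\mathrm{BS}}\bigl(t,x,\sqrt{\hat\sigma_\infty^2+a_1/t}\bigr)-\mathbb{E}(S_t-S_0\E^{xt})^+\Bigr]=\frac{A_{\mathrm{BS}}(x,\hat\sigma_\infty,a_1)-A(x)}{\sqrt{2\pi t}}\E^{-(V^*(x)-x)t}\bigl(1+O(1/t)\bigr).
\]
For $x\notin\{-\theta/2,\bar\theta/2\}$, solving $A_{\mathrm{BS}}(x,\hat\sigma_\infty(x),a_1)=A(x)$ using $A_{\mathrm{BS}}(x,\sigma,0)=\sigma^3/(x^2-\sigma^4/4)$ and the exponential factor in \eqref{eq:DefOfABS} reproduces the first branch of \eqref{eq:Eqa1}; at $x\in\{-\theta/2,\bar\theta/2\}$ both $A$ and $A_{\mathrm{BS}}$ switch to their second branches, the equation $A=A_{\mathrm{BS}}$ becomes affine in $a_1$, and one recovers the second branch of \eqref{eq:Eqa1}. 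With $\hat a_1(x)$ thus defined, set $\sigma_t:=\sqrt{\hat\sigma_\infty^2(x)+\hat a_1(x)/t}$; the definition of $\hat\sigma_t(x)$ then gives $C_{\mathrm{BS}}(t,x,\sigma_t)-C_{\mathrm{BS}}(t,x,\hat\sigma_t(x))=O(t^{-3/2}\E^{-(V^*(x)-x)t})$, and since $\partial_\sigma C_{\mathrm{BS}}\sim S_0\sqrt{t/(2\pi)}\,\E^{-(V^*(x)-x)t}$ in this regime, the mean value theorem gives $|\hat\sigma_t(x)-\sigma_t|=O(1/t^2)$, hence $\hat\sigma_t^2(x)=\hat\sigma_\infty^2(x)+\hat a_1(x)/t+o(1/t)$.

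\emph{Continuity, uniformity, and the main obstacle.} Continuity of $\hat\sigma_\infty$ on $\mathbb{R}$ is immediate off $\{-\theta/2,\bar\theta/2\}$ and holds at these points because $\sqrt{V^*(V^*-x)}$ vanishes there by (c)--(d), so the sign ambiguity is harmless. Uniform convergence of the $O(1/t)$ remainders on compacts disjoint from $\{-\theta/2,\bar\theta/2\}$ follows because the saddle $p^*(x)$ lies in a compact subset of the analyticity strip $(p_-,p_+)$ there, so the saddle-point estimates behind Theorem~\ref{thm:HestonLargeT} and Proposition~\ref{prop:BStimedep} pass through with uniform constants, and the Vega lower bound is likewise uniform. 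I expect the hardest part to be the continuity of $\hat a_1$ at the two boundary points: both $x^2/\hat\sigma_\infty^4(x)-1/4$ and $\log\bigl(A(x)/A_{\mathrm{BS}}(x,\hat\sigma_\infty(x),0)\bigr)$ vanish there, so the ratio must be resolved by L'H\^opital's rule together with second-order Taylor expansions of $V$, $p^*$, $U$, $A$, and $A_{\mathrm{BS}}$ near $p^*=0$ and $p^*=1$, and the resulting limit must be shown to coincide with the second branch of \eqref{eq:Eqa1}, with due attention to the $\sgn(x)$ terms inherited from \eqref{eq:AHeston}.
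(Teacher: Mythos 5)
Your identification of $\hat\sigma_\infty$ via \eqref{eq:Quadratic}, of $\hat a_1$ via the matching $A(x)=A_{\mathrm{BS}}(x,\hat\sigma_\infty(x),\hat a_1(x))$ (including the affine equation at $x\in\{-\theta/2,\bar\theta/2\}$), and the indicator identity are all in line with the paper. The genuine gap is in the inversion step. Your mean-value/Vega argument needs a lower bound on $\partial_\Sigma C_{\mathrm{BS}}$ at an \emph{unknown intermediate} volatility between $\hat\sigma_t(x)$ and $\sigma_t:=\sqrt{\hat\sigma_\infty^2(x)+\hat a_1(x)/t}$, and the Vega equals $S_0\sqrt{t/(2\pi)}\,\exp\bigl(-(V^*_{\mathrm{BS}}(x,\Sigma)-x)t\bigr)$, whose exponential factor moves by $\E^{\pm c\eta t}$ when $\Sigma$ moves by $\eta$. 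So the asymptotic $\partial_\Sigma C_{\mathrm{BS}}\sim S_0\sqrt{t/(2\pi)}\,\E^{-(V^*(x)-x)t}$ is only valid once you already know that $\hat\sigma_t^2(x)=\hat\sigma_\infty^2(x)+O(1/t)$ — which is essentially the statement to be proved. Your proposal never establishes even the leading-order convergence $\hat\sigma_t(x)\to\hat\sigma_\infty(x)$ (matching the two price expansions is not by itself an inversion), so as written the phrase ``in this regime'' hides a circularity, and the conclusion $|\hat\sigma_t(x)-\sigma_t|=O(1/t^2)$ does not follow: with only $|\hat\sigma_t-\hat\sigma_\infty|<\delta$ known, the MVT bound is off by a factor $\E^{C\delta t}$. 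The argument can be repaired by a bootstrap (first prove convergence, then a $O(1/t)$ rate for the variance, then apply the MVT), but those steps are exactly what is missing.

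The paper avoids this sensitivity altogether with a two-step monotone sandwich rather than a Vega bound: first, using the exponential-order bounds from Theorem~\ref{thm:HestonLargeT}, equation \eqref{eq:Quadratic} and the strict monotonicity of $\Sigma\mapsto V^*_{\mathrm{BS}}(x,\Sigma)-x$, it traps the Heston price between Black-Scholes prices at volatilities $\hat\sigma_\infty(x)\pm\delta$, giving $|\hat\sigma_t(x)-\hat\sigma_\infty(x)|<\delta$; second, it compares the Heston price with Black-Scholes prices at variances $\hat\sigma_\infty^2(x)+(\hat a_1(x)\pm\delta)/t$, using the identity $A_{\mathrm{BS}}(x,\hat\sigma_\infty,\hat a_1)\E^{\pm\epsilon(\delta)}=A_{\mathrm{BS}}(x,\hat\sigma_\infty,\hat a_1\pm\delta)\E^{\mp\epsilon(\delta)}$ from \eqref{eq:DefOfABS}, so that all compared prices share the same exponential rate and only prefactors differ; monotonicity of the Black-Scholes price in volatility then yields $|\hat\sigma_t^2(x)-\hat\sigma_\infty^2(x)-\hat a_1(x)/t|<\delta/t$, locally uniformly in $x$, and a finite covering gives uniformity on compacts. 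Separately, you correctly flag the continuity of $\hat a_1$ at $\{-\theta/2,\bar\theta/2\}$ as a $0/0$ limit, but you only outline a plan; the paper resolves it by an explicit second-order expansion of $\hat\sigma_\infty^2$ near $\bar\theta/2$ and of the ratio $A(x)/A_{\mathrm{BS}}(x,\hat\sigma_\infty(x),0)$, showing the limit equals the second branch of \eqref{eq:Eqa1} — that computation would still need to be carried out in your write-up.
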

\begin{proof}
We first prove that the functions 
$\hat{\sigma}_\infty$ 
and 
$\hat{a}_1$ are continuous. In fact, the continuity of the function $\hat{\sigma}_\infty$ follows from properties (c) and (d) on 
page~\pageref{PropertiesOfVStar}. 
We have already observed that the function $A$ defined 
in~\eqref{eq:AHeston} 
is discontinuous 
at $\bar{\theta}/2$ and $-\theta/2$ 
(see Remark~\ref{rem:Poles_Of_A}).
Elementary calculations show that in the neighbourhood of 
$\bar{\theta}/2$ we have the following expansion 
(where
$\Theta:=(\bar{\theta}/V''(1))^{1/2}$)
$$\hat{\sigma}^2_\infty(x)=\bar{\theta}+2\left(1-\Theta\right)\left(x-\bar{\theta}/2\right)+
\frac{2}{V''(1)}\left(1-\frac{1}{\Theta}+\frac{V'''(1)}{6V''(1)^{2}}\Theta\right)
\left(x-\bar{\theta}/2\right)^2
+
O\left(\left|x-\bar{\theta}/2\right|^{3}\right).$$
This expansion, together with formula~\eqref{eq:DefOfABS},
implies that the function
$x\mapsto A_{\mathrm{BS}}\left(x,\hat{\sigma}_\infty(x),0\right)$ 
is also discontinuous at 
$\bar{\theta}/2$ and $-\theta/2$. 
Since 
the quotient 
$x\mapsto A(x)/A_{\mathrm{BS}}(x,\hat{\sigma}_\infty(x),0)$ 
is strictly positive on $\mathbb{R}\setminus\left\{-\theta/2,\bar{\theta}/2\right\}$, 
the function $\hat{a}_1$ is therefore continuous on this set. 
The expansion above can also be used to obtain the following expression
$$
\frac{A(x)}{A_{\mathrm{BS}}(x,\hat{\sigma}_\infty(x),0)} = 1 +\frac{1}{V''(1)}
\left(U'(1)-1-\frac{V'''(1)}{6V''(1)} +\frac{1}{\Theta} \right)(x-\bar{\theta}/2)+O\left(|x-\bar{\theta}/2|^2\right),
$$
which implies 
the equality
$\lim\limits_{x\to\bar{\theta}/2}\hat{a}_1(x)=\hat{a}_1\left(\bar{\theta}/2\right)$.
A similar argument shows continuity of $\hat{a}_1$ at $-\theta/2$.

We now prove the formula in the theorem in the case $x>\bar{\theta}/2$
(recall that $\bar{\theta}>0$). 
Note that $\hat{\sigma}_{\infty}(x)$ as defined in \eqref{DefOfSigmaInf}  satisfies the following quadratic equation
\begin{equation}\label{eq:Quadratic}
V^*(x)-x=V_{\mathrm{BS}}^*(x,\hat{\sigma}_{\infty}(x))-x,\quad\text{for all }x\in\mathbb{R},
\end{equation}
where $V^*$ is given by \eqref{DefOfVStar} and $V^*_{\mathrm{BS}}$ by \eqref{eq:DefOfVStarBS}. 
The proof of the theorem is in two steps: first, we prove the uniform convergence (on small intervals) 
of the implied variance to 
$\hat{\sigma}^2_{\infty}(x)$,
then we derive a similar result for the first order correction term. 
As a first step, we have to prove that, for all $\delta>0$, 
there exists $t^*(\delta)>0$ 
such that for all 
$t>t^*(\delta)$
and all
$x$
in some small neighbourhood of 
$x'>\bar{\theta}/2$
we have
$|\hat{\sigma}_t(x)-\hat{\sigma}_{\infty}(x)|<\delta$.
By Theorem \ref{thm:HestonLargeT} and \eqref{eq:Quadratic} 
we know that for all $\epsilon>0$, 
there exists $t^*(\epsilon)$ 
and an interval containing 
$x'$
such that for all 
$x$
in this interval 
and all
$t>t^*(\epsilon)$ 
we have the lower bound
\begin{equation}\label{eq:lowerbound}
\exp\Big(-\left(V_{\mathrm{BS}}^*(x,\hat{\sigma}_{\infty}(x))-x+\epsilon\right)t\Big)= \exp\Big(-\left(V^*(x)-x+\epsilon\right)t\Big) 
<\frac{1}{S_0}\mathbb{E}\left(S_t-S_0\E^{xt}\right)^{+},
\end{equation}
and the upper bound
\begin{equation}\label{eq:upperbound}
\frac{1}{S_0}\mathbb{E}\left(S_t-S_0\E^{xt}\right)^{+}< \exp\Big(-\left(V^*(x)-x-\epsilon\right)t\Big) =\exp\Big(-\left(V_{\mathrm{BS}}^*(x,\hat{\sigma}_{\infty}(x))-x-\epsilon\right)t\Big).
\end{equation}
Note that
$$\hat{\sigma}^2_{\infty}(x)-2x=4\Big((V^*(x)-x)-\sqrt{(V^*(x)-x)^2+(V^*(x)-x)x}\Big)<0,\quad\text{for all}\quad x\in\mathbb{R}\backslash(-\theta/2,\bar{\theta}/2),$$
since $V^*(x)-x> 0$ by property (d) on page \pageref{PropertiesOfVStar}. For $x$ fixed, the function $\Sigma\mapsto V_{\mathrm{BS}}^*(x,\Sigma)-x$ defined on
$(0,\sqrt{2x})$ is continuous 
and strictly decreasing, where $V^*_{\mathrm{BS}}$ is given in \eqref{eq:DefOfVStarBS}. 
Thus, for any $\delta>0$ such that 
$\hat{\sigma}_\infty(x)\pm\delta\in(0,\sqrt{2x})$ 
for all 
$x$
in 
the small neighbourhood of 
$x'$
define
\begin{align*}
\epsilon_1(\delta) & := \Big(V_{\mathrm{BS}}^*(x,\hat{\sigma}_{\infty}(x))-V_{\mathrm{BS}}^*(x,\hat{\sigma}_{\infty}(x)+\delta)\Big)/2>0,\\
\epsilon_2(\delta) & := \Big(V_{\mathrm{BS}}^*(x,\hat{\sigma}_{\infty}(x)-\delta)-V_{\mathrm{BS}}^*(x,\hat{\sigma}_{\infty}(x))\Big)/2>0.
\end{align*}
Note that the limits
$\lim\limits_{\delta\to 0}\epsilon_1(\delta)=\lim\limits_{\delta\to 0}\epsilon_2(\delta)=0$ 
are uniform in 
$x$
on the chosen neighbourhood of  
$x'$.
Combining \eqref{eq:lowerbound}, \eqref{eq:upperbound} and Proposition \ref{prop:BStimedep}, there exists $t^*(\delta)$ such that for all $t>t^*(\delta)$
and all 
$x$
near
$x'$
we have
$$\frac{1}{S_0}C_{\mathrm{BS}}\left(S_0,S_0\E^{xt},t,\hat{\sigma}_\infty(x)-\delta\right)\le 
\exp\Big(-(V_{\mathrm{BS}}^*(x,\hat{\sigma}_{\infty}(x)-\delta)-x-\epsilon_2(\delta))t\Big)<
\frac{1}{S_0}\mathbb{E}\left(S_t-S_0\E^{xt}\right)^{+},$$
and
$$\frac{1}{S_0}\mathbb{E}\left(S_t-S_0\E^{xt}\right)^{+} <
\exp\Big(-(V_{\mathrm{BS}}^*(x,\hat{\sigma}_{\infty}(x)+\delta)-x+\epsilon_1(\delta))t\Big)\le
\frac{1}{S_0}C_{\mathrm{BS}}\left(S_0,S_0 \E^{xt},t,\hat{\sigma}_\infty(x)+\delta\right).$$
Thus, by the monotonicity of the Black-Scholes call option formula as a function of the volatility, 
we have the following bounds for the implied volatility $\hat{\sigma}_t(x)$ at maturity $t$
$$\hat{\sigma}_{\infty}(x)-\delta<\hat{\sigma}_t(x)<\hat{\sigma}_{\infty}(x)+\delta.$$

In the second step of the proof we show that for all $\delta>0$ there exist $t^*(\delta)$ and a small
neighbourhood of 
$x'$
such that, for all $t>t^*(\delta)$ and all $x$ in this neighbourhood,
the following holds
\begin{equation}\label{eq:ToProveCorrectionTerm}
|\hat{\sigma}^2_t(x)-\hat{\sigma}^2_\infty(x)-\hat{a}_1(x)/t|<\delta/t.
\end{equation}
Note that the definition in \eqref{eq:Eqa1} implies
$$A(x)=A_{\mathrm{BS}}(x,\hat{\sigma}_{\infty}(x),\hat{a}_1(x)),\quad\text{for all }x\in\mathbb{R}.$$
Theorem \ref{thm:HestonLargeT} implies that 
for any $\epsilon>0$ there exist 
a small interval containing 
$x'$
and 
$t^*(\epsilon)$ such that for all 
$x$
in this interval and all
$t>t^*(\epsilon)$ we have
\be
\label{eq:eq1}
\frac{1}{S_0}\mathbb{E}\left(S_t-S_0\E^{xt}\right)^{+} < \frac{A_{\mathrm{BS}}(x,\hat{\sigma}_{\infty}(x),\hat{a}_1(x))}{\sqrt{2\pi t}}\E^{-(V_{\mathrm{BS}}^*(x,\hat{\sigma}_{\infty}(x))-x)t}\E^{\epsilon},
\ee
and
$$\frac{A_{\mathrm{BS}}(x,\hat{\sigma}_{\infty}(x),\hat{a}_1(x))}{\sqrt{2\pi t}}\E^{-(V_{\mathrm{BS}}^*(x,\hat{\sigma}_{\infty}(x))-x)t}\E^{-\epsilon}< \frac{1}{S_0}\mathbb{E}\left(S_t-S_0\E^{xt}\right)^{+}.$$
Let $\delta>0$ 
and define 
$\epsilon(\delta):=\left(4x^2/\hat{\sigma}^4_{\infty}(x)-1\right)\delta/16>0$.
The reason for this definition lies in 
the following identity
$$A_{\mathrm{BS}}(x,\hat{\sigma}_{\infty}(x),\hat{a}_1(x))\E^{\pm\epsilon(\delta)}= 
A_{\mathrm{BS}}(x,\hat{\sigma}_{\infty}(x),\hat{a}_1(x)\pm\delta)\E^{\mp\epsilon(\delta)}\quad\text{for all}\quad x\in\mathbb{R}$$
which holds
by definition~\eqref{eq:DefOfABS}.
By \eqref{eq:eq1} there exist $t^*(\delta)$ and a neighbourhood of 
$x'$ such that for all 
$x$
in this neighbourhood and all
$t>t^*(\delta)$
we have
\begin{align*}
\frac{1}{S_0}\mathbb{E}\left(S_t-S_0\E^{xt}\right)^{+}
& < \frac{A_{\mathrm{BS}}(x,\hat{\sigma}_{\infty}(x),\hat{a}_1(x)+\delta) }{\sqrt{2\pi t}} \E^{-(V_{\mathrm{BS}}^*(x,\hat{\sigma}_{\infty}(x))-x)t}\E^{-\epsilon(\delta)}\nonumber\\
 & \le \frac{1}{S_0}C_{\mathrm{BS}}\left(S_0,S_0 \E^{xt},t,\sqrt{\hat{\sigma}_{\infty}^2(x)+(\hat{a}_1(x)+\delta)/t}\right),\nonumber
\end{align*}
and
\begin{align*}
\frac{1}{S_0}\mathbb{E}\left(S_t-S_0\E^{xt}\right)^{+}
 & >  \frac{A_{\mathrm{BS}}(x,\hat{\sigma}_{\infty}(x),\hat{a}_1(x)-\delta)}{\sqrt{2\pi t}} \E^{-(V_{\mathrm{BS}}^*(x,\hat{\sigma}_{\infty}(x))-x)t}\E^{\epsilon(\delta)}\nonumber\\
 & \ge \frac{1}{S_0}C_{\mathrm{BS}}\left(S_0,S_0\E^{xt},t,\sqrt{\hat{\sigma}_{\infty}^2(x)+(\hat{a}_1(x)-\delta)/t}\right).\nonumber
\end{align*}
By strict monotonicity of the Black-Scholes price as a function of the volatility, we obtain
$$\hat{\sigma}_{\infty}^2(x)+(\hat{a}_1(x)-\delta)/t<\hat{\sigma}^2_t(x) < \hat{\sigma}_{\infty}^2(x)+(\hat{a}_1(x)+\delta)/t,$$
for all 
$x$
in some interval containing 
$x'$
and all 
$t>t^*(\delta)$.
This proves \eqref{eq:ToProveCorrectionTerm}. 
An analogous argument implies the inequality in~\eqref{eq:ToProveCorrectionTerm}
for
$x'\in\mathbb{R}\backslash\{-\theta/2,\bar{\theta}/2\}$
since the functions
$A$
and
$x\mapsto A_{\mathrm{BS}}(x,\hat{\sigma}_{\infty}(x),\hat{a}_1(x))$
are continuous on this set and
by~\eqref{eq:Quadratic}
the following identity holds
$$
\mathcal{I}(x,t;-\theta/2,\bar{\theta}/2)= \mathcal{I}(x,t;-\hat{\sigma}_\infty^2(x)/2,\hat{\sigma}_\infty^2(x)/2)
\quad\text{for all}\quad (x,t)\in\mathbb{R}\times\mathbb{R}_+
$$
(the function
$\mathcal{I}$ is
defined in~\eqref{eq:DefOfI}).
In the cases
$x'\in\{-\theta/2,\bar{\theta}/2\}$
a similar argument proves~\eqref{eq:ToProveCorrectionTerm}
at the point 
$x'$
but not necessarily on its neighbourhood.

All that is left to prove is that the inequality 
in~\eqref{eq:ToProveCorrectionTerm}
holds uniformly on compact subsets of  the complement
$\mathbb{R}\backslash\{-\theta/2,\bar{\theta}/2\}$.
To every point in a compact set we can 
associate a small interval that contains it 
such that inequality~\eqref{eq:ToProveCorrectionTerm}
holds for all $x$ in that interval and all large 
times $t$. This defines a cover of the compact set.
We can therefore find a finite collection of such intervals
that also covers our compact set. It now follows 
that the inequality in~\eqref{eq:ToProveCorrectionTerm}
holds for any $x$ in the compact set 
and all times $t$
that are larger than the maximum of the finite number of
$t^*(\delta)$
that correspond to the intervals in the finite family that
covers the original set.
\end{proof}

\begin{figure}
\begin{center}
\subfigure{\includegraphics[scale=0.3]{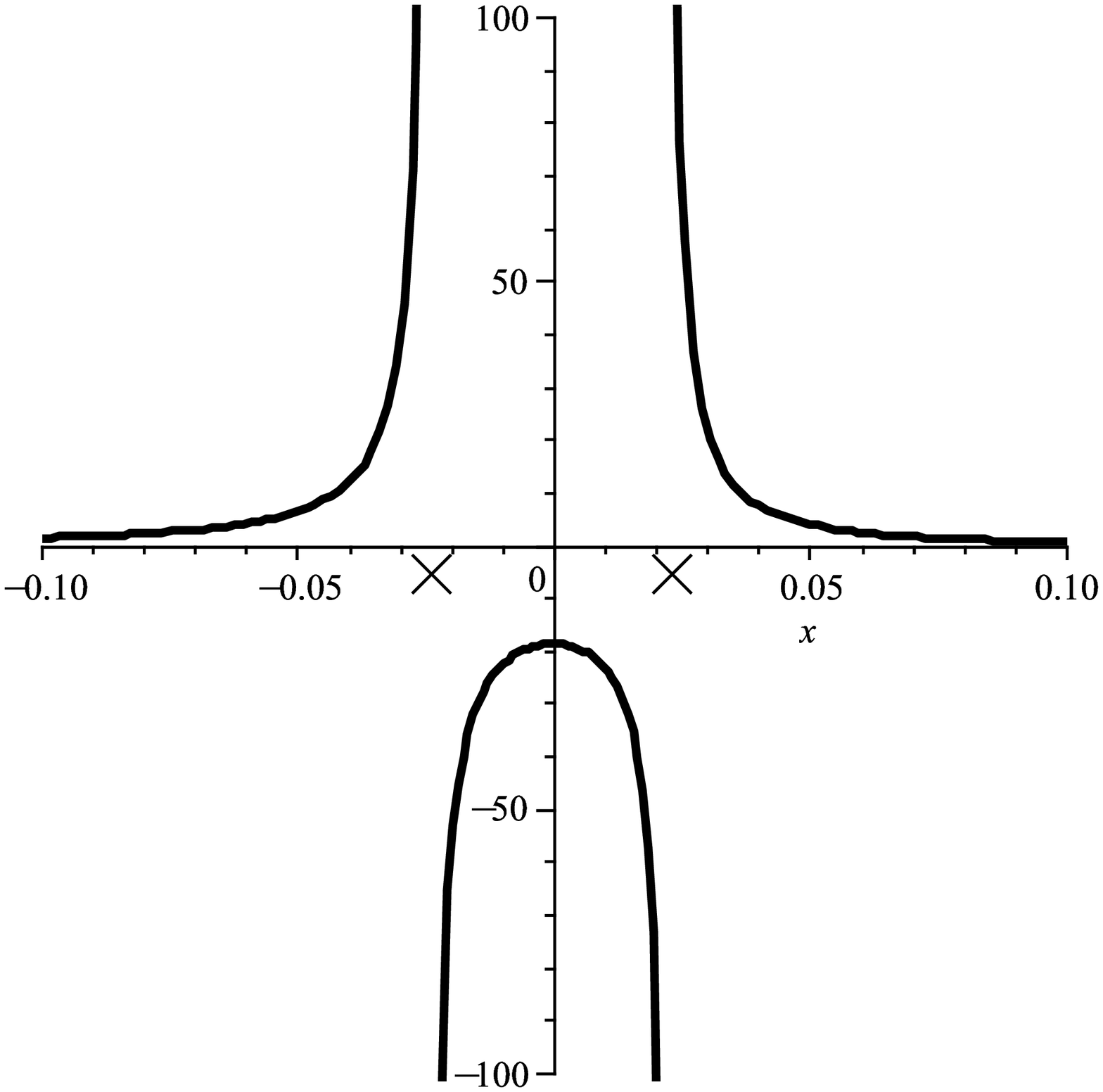}}
\subfigure{\includegraphics[scale=0.3]{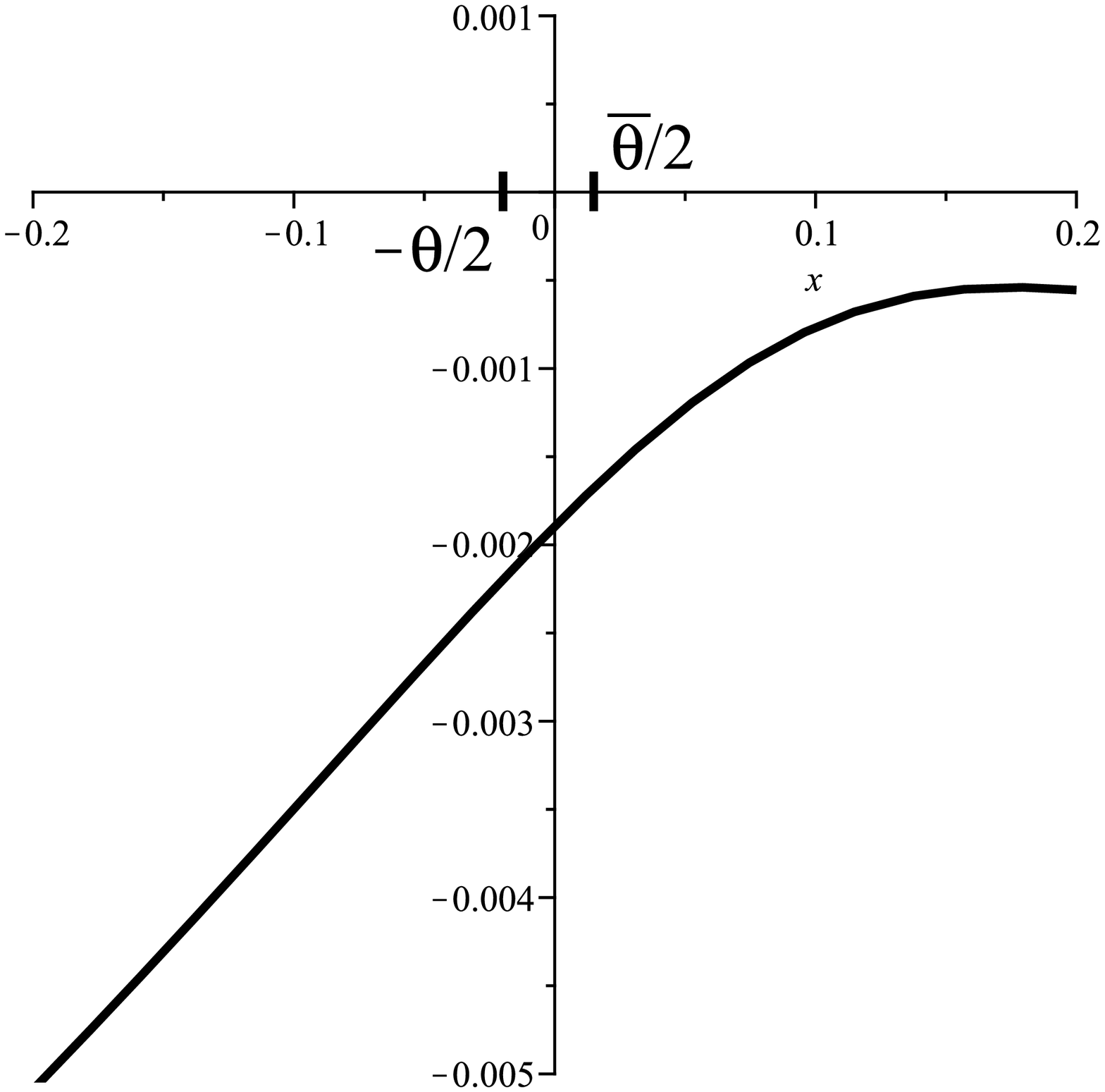}}
\caption{We plot the two functions $A$ (left) and $\hat{a}_1$ (right) with the calibrated parameters on 
page~\pageref{CalibratedParam}. 
The two crosses on the left graph represent the values 
$A(-\theta/2)$
and
$A(\bar{\theta}/2)$
at
$-\theta/2\approx-0.025$ and $\bar{\theta}/2\approx 0.022$
where the function 
$A$ 
is discontinuous.}
\label{FigureAandABS}
\end{center}
\end{figure}
\subsection{The large-time, fixed-strike case}
This section is the translation of Lemmas \ref{lem:LargeTFixedK} and \ref{lem:FixedStrikeOptionBSLemma} in terms of implied volatility asymptotics, and improves the understanding of the behaviour of the Heston implied volatility in the long term. Let $\sigma_t(x)$ denote the implied volatility corresponding to a vanilla call option with maturity $t$ and fixed strike $K=S_0 \exp(x)$ in the Heston model \eqref{eq:HestonModel}. Let us define the function $a_1:\mathbb{R}\to\mathbb{R}$ by
\begin{equation}\label{eq:DefOfa1}
a_1(x):= -8\log\left(-A(0)\sqrt{2V^*(0)}\right)+4\left(2\,p^*(0)-1\right)x,\quad\text{for all }x\in\mathbb{R},
\end{equation}
where $A$ is defined in \eqref{eq:AHeston}, $V^*$ in \eqref{DefOfVStar} and $p^*$ in \eqref{Saddlepoint}. Elementary calculations show that $A(0)<0$. From the properties of $V^*$ on page \pageref{PropertiesOfVStar}, $a_1(x)$ is then well defined as a real number for all $x\in\mathbb{R}$.
\begin{theorem}\label{prop:ImpliedVolFixedStrike}With the assumptions above, we have the following behaviour for the implied volatility in the fixed-strike case
$$\sigma_t^2(x)= 8V^*(0)+a_1(x)/t+o\left(1/t\right),\quad\text{for all }x\in\mathbb{R},\text{ as }t\to\infty,$$
where $V^*$ is given by \eqref{DefOfVStar} and $a_1$ by \eqref{eq:DefOfa1}.
The error term 
$|\sigma_t^2(x)- 8V^*(0)-a_1(x)/t|t$
tends to zero 
as 
$t$
goes to infinity
uniformly on compact subsets of 
$\mathbb{R}$.
\end{theorem}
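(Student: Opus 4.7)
The plan is to adapt the proof of Theorem \ref{thm:ImpliedVol} to the fixed-strike setting, using Lemma \ref{lem:LargeTFixedK} in place of Theorem \ref{thm:HestonLargeT} and Lemma \ref{lem:FixedStrikeOptionBSLemma} in place of Proposition \ref{prop:BStimedep}. First, I identify the leading term: matching the exponential decay rate $V^*(0)$ of the Heston call price correction in Lemma \ref{lem:LargeTFixedK} with the decay rate $\sigma^2/8$ of the Black-Scholes correction in Lemma \ref{lem:FixedStrikeOptionBSLemma} forces the constant-order term $\sigma_\infty^2=8V^*(0)$.

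Second, I postulate $\sigma_t^2(x)=8V^*(0)+a_1/t$ for some constant $a_1=a_1(x)$ to be identified, and match the prefactors. In Lemma \ref{lem:FixedStrikeOptionBSLemma} the combination $-\sigma^2t/8-a_1/8$ in the exponential is exactly $-\sigma_t^2 t/8$, so the Black-Scholes expansion at volatility $\sqrt{8V^*(0)+a_1/t}$ and the Heston expansion in Lemma \ref{lem:LargeTFixedK} must agree modulo a factor $1+O(1/t)$. Solving for $a_1$ the algebraic equation obtained by equating the two closed-form prefactors (simplifying using $\sigma_\infty=2\sqrt{2V^*(0)}$ in the factor $2\sqrt{2}/\sigma$, then taking logarithms) will yield the explicit expression \eqref{eq:DefOfa1}. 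Here I need that $A(0)<0$, which is verified directly from \eqref{eq:AHeston} using the explicit signs of its components, so that $\log(-A(0)\sqrt{2V^*(0)})$ is well defined.

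Third, to upgrade this formal matching into a rigorous statement, I use the same squeeze argument as in the proof of Theorem \ref{thm:ImpliedVol}. Given $\delta>0$, the key identity is of the form $A_{\mathrm{BS}}(0,\sigma_\infty,a_1)\mathrm{e}^{\pm\epsilon(\delta)}=A_{\mathrm{BS}}(0,\sigma_\infty,a_1\pm\delta)\mathrm{e}^{\mp\epsilon(\delta)}$ for an appropriately chosen small $\epsilon(\delta)>0$, which here is a constant independent of $x$ (in contrast to Theorem \ref{thm:ImpliedVol}, where it depended on $x$). Combining this identity with the multiplicative $(1+O(1/t))$ errors in Lemmas \ref{lem:LargeTFixedK} and \ref{lem:FixedStrikeOptionBSLemma}, and invoking strict monotonicity of the Black-Scholes price in volatility, then sandwiches $\sigma_t^2(x)$ between $8V^*(0)+(a_1(x)-\delta)/t$ and $8V^*(0)+(a_1(x)+\delta)/t$ for all sufficiently large $t$.

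Finally, uniformity on compact subsets of $\mathbb{R}$ is much simpler here than in Theorem \ref{thm:ImpliedVol}: the leading constants $V^*(0)$, $A(0)$ and $p^*(0)$ are all independent of $x$, and the $x$-dependence enters only linearly through the factors $(1-p^*(0))x$ and $x/2$ in the two exponentials, which are uniformly controlled on any compact set by a standard finite subcover argument mirroring the last paragraph in the proof of Theorem \ref{thm:ImpliedVol}. The main obstacle is therefore not conceptual but bookkeeping in the second step: tracking the $\sqrt{2}$ factors coming from the ratio of the $\sqrt{\pi t}$ and $\sqrt{2\pi t}$ normalising constants, and the $-a_1/8$ term coming from expanding the Black-Scholes exponent, so as to recover the explicit form of \eqref{eq:DefOfa1} with the correct signs and constants.
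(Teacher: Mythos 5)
Your proposal follows essentially the same route as the paper: the paper's proof likewise matches the decay rate to get $8V^*(0)$, identifies $a_1$ by equating the prefactor of Lemma \ref{lem:LargeTFixedK} with the fixed-strike Black--Scholes expansion of Lemma \ref{lem:FixedStrikeOptionBSLemma} (using $A(0)<0$), and then runs the same $\epsilon(\delta)$-squeeze with monotonicity of the Black--Scholes price in volatility and a finite-subcover argument for uniformity on compacts. The only cosmetic difference is that the paper phrases the key identity directly in terms of the fixed-strike prefactor $1-\tfrac{1}{2\sqrt{\pi V^*(0)t}}\E^{x/2-a_1(x)/8-V^*(0)t}$ rather than $A_{\mathrm{BS}}$, which is what your bookkeeping step would reduce to.
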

\begin{proof}
The proof of this theorem is similar to the proof of Theorem \ref{thm:ImpliedVol}.
It is in fact simpler because we do not have to consider special cases for 
$x'\in\mathbb{R}$.
We therefore only give an argument that implies the inequality
\begin{equation}
\label{eq:Ineq_Fixed_Strike}
\sigma_t^2(x)-8 V^*(0)-a_1(x)/t<\delta/t
\end{equation}
holds on some small neighbourhood of 
$x'$
for all large 
$t$.
Lemma \ref{lem:LargeTFixedK} and \eqref{eq:DefOfa1} imply that for all $\epsilon>0$, 
there exists a 
$t^*(\epsilon)$ 
and a neighbourhood of 
$x'$
such that for all 
$x$
in this neighbourhood and all
$t>t^*(\epsilon)$ we have
\begin{align*}
\frac{1}{S_0}\mathbb{E}\left(S_t-S_0 \E^{x}\right)^{+} & <\left(1+(2\pi t)^{-1/2}A(0)\exp\left(x(1-p^*(0))-V^*(0)t\right)\right)\E^{\epsilon}\\
 & =\left(1-\frac{1}{2\sqrt{\pi V^*(0)t}}\exp\left(x/2-a_1(x)/8-V^*(0)t\right)\right)\E^{\epsilon}.
\end{align*}
For any $\delta>0$ a continuity argument implies the existence of 
$\epsilon(\delta)>0$,
which tends to zero uniformly on a neighbourhood 
of
$x'$,
such that
$$\left(1-\frac{1}{2\sqrt{\pi V^*(0)t}}\E^{x/2-a_1(x)/8-V^*(0)t}\right)\E^{\epsilon(\delta)}=\left(1-\frac{1}{2\sqrt{\pi V^*(0)t}}\E^{x/2-\left(a_1(x)+\delta\right)/8-V^*(0)t}\right)\E^{-\epsilon(\delta)}.$$
Therefore there exists $t^*(\delta)>0$ and an interval containing 
$x'$
such that for all
$x$
in this interval and all 
$t>t^*(\delta)$
we have
$$\frac{1}{S_0}\mathbb{E}\left(S_t-\E^x\right)^{+} < \frac{1}{S_0}C_{BS}\left(S_0,S_0\E^x,t,\sqrt{8V^*(0)+(a_1(x)+\delta)/t}\right).$$
The monotonicity of the Black-Scholes formula as a function of the volatility implies~\eqref{eq:Ineq_Fixed_Strike}.
The proof can now be concluded in the same way as in Theorem~\ref{thm:ImpliedVol}.
\end{proof}

\section{Numerical results}\label{section:Numerics}
We present here some numerical evidence of the validity of the asymptotic formula for the implied volatility obtained in Theorem \ref{thm:ImpliedVol}.
We calibrated the Heston model on the European vanilla options on the Eurostoxx 50 on February, 15th, 2006. 
The maturities range from one year to nine years, the strikes from $1460$ up to $7300$, and the initial spot $S_0$ equals $3729.79$. 
The calibration, performed using the Zeliade Quant Framework, by Zeliade Systems, on the whole implied volatility surface gives the following parameters:\label{CalibratedParam}
$\kappa=1.7609$, $\theta=0.0494$, $\sigma=0.4086$, $y_0=0.0464$ and $\rho=-0.5195$.
We then use these calibrated parameters to generate an implied volatility smile for two different maturities, 
$5$ and $9$ years
(note that for convenience the smiles were generated with interest rates and dividends 
equal to zero).
The two plots below contain the generated implied volatility smile for the Heston model 
for the two maturities as well as the zeroth and first order approximations obtained in~\eqref{DefOfSigmaInf} and in 
Theorem~\ref{thm:ImpliedVol}. 
The errors plotted are the differences between the generated Heston implied volatility 
and the implied volatility obtained by the two asymptotic approximations.

\begin{figure}
\begin{center}
\subfigure{\includegraphics[scale=0.3]{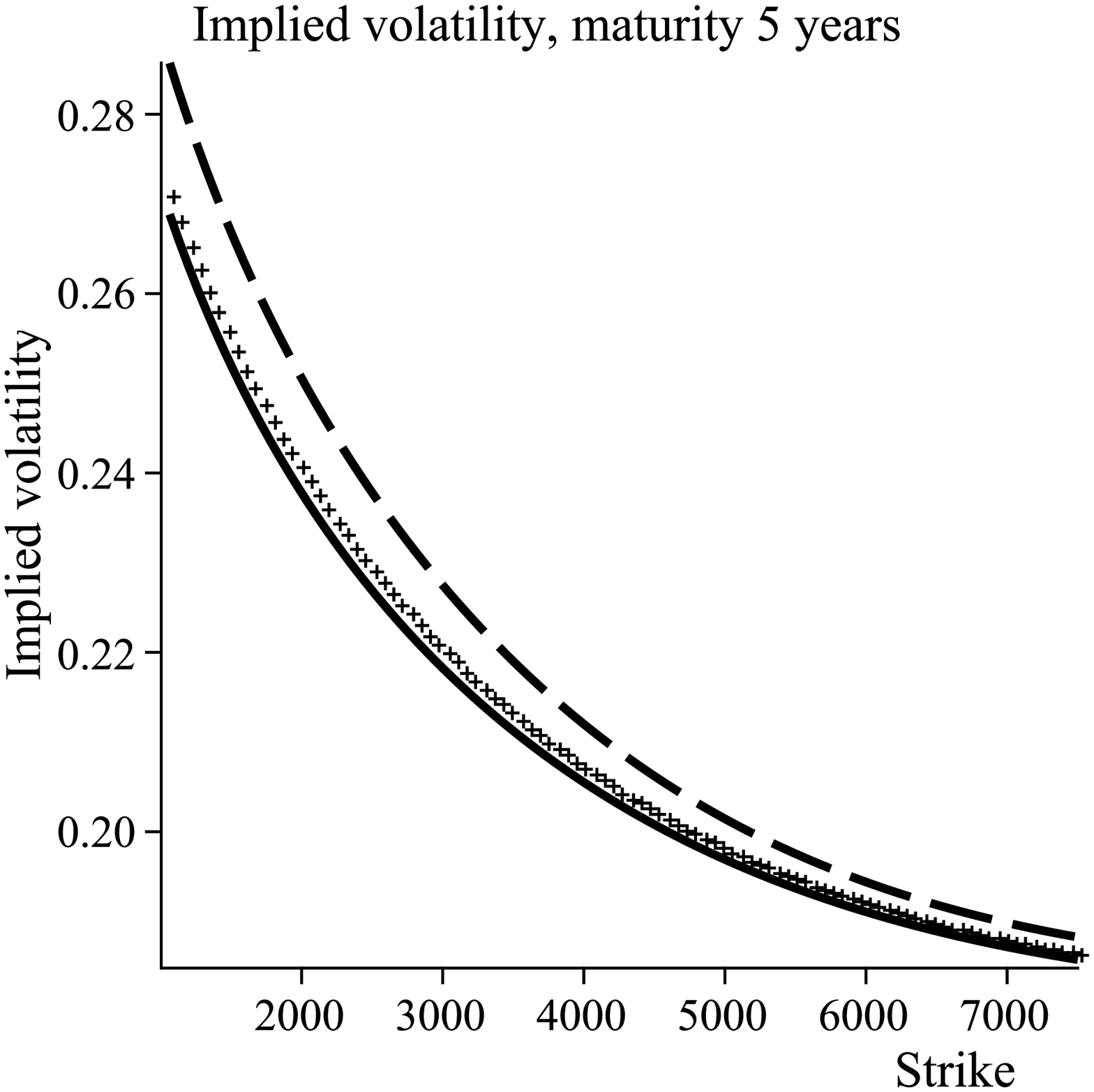}}
\subfigure{\includegraphics[scale=0.3]{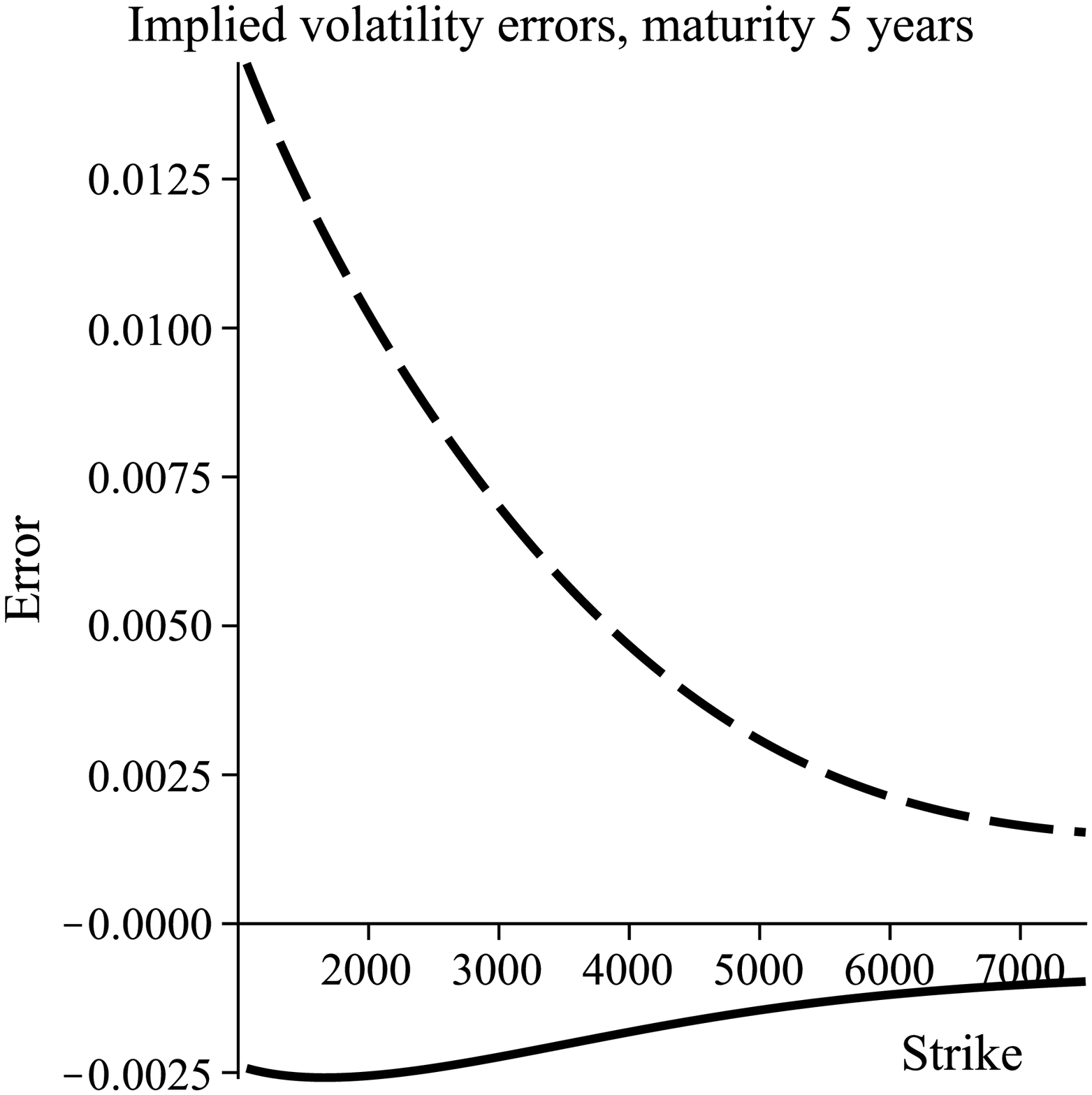}}
\label{fig:FigureSmiles5Years}
\caption{The left plot represents the leading order term $\hat{\sigma}_{\infty}$ (dashed) defined in \eqref{DefOfSigmaInf}, the asymptotic formula in Theorem
\ref{thm:ImpliedVol} (solid) and the true implied volatility (crosses) as functions of the strike $K$ for a maturity equal to $5$ years.
On the right, we plot the corresponding errors between the true implied volatility and $\hat{\sigma}_\infty$ (dashed) and between the true implied volatility and our formula (solid).
The parameter values are given Section~\ref{section:Numerics}.}
\end{center}

\begin{center}
\subfigure{\includegraphics[scale=0.3]{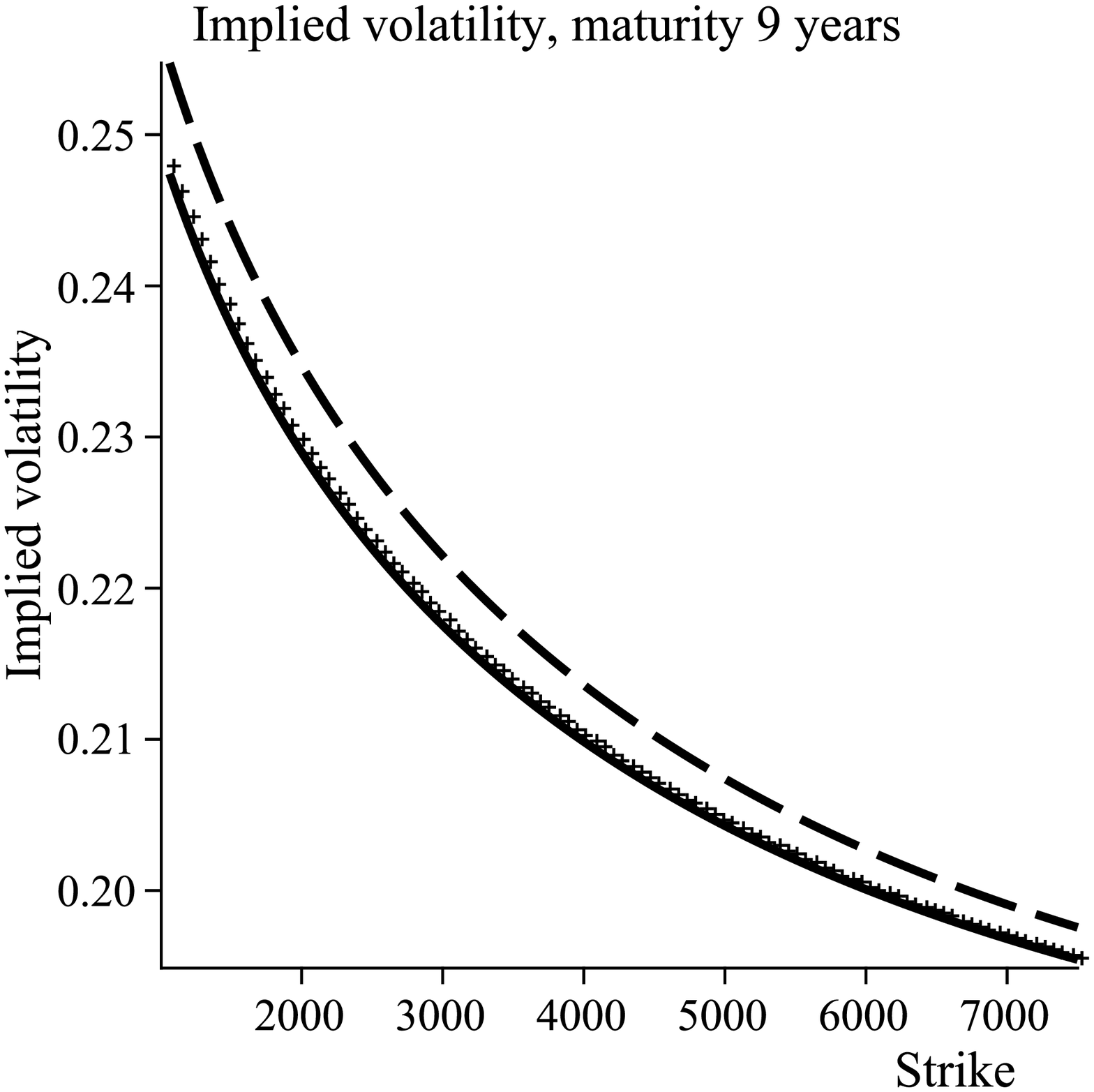}}
\subfigure{\includegraphics[scale=0.3]{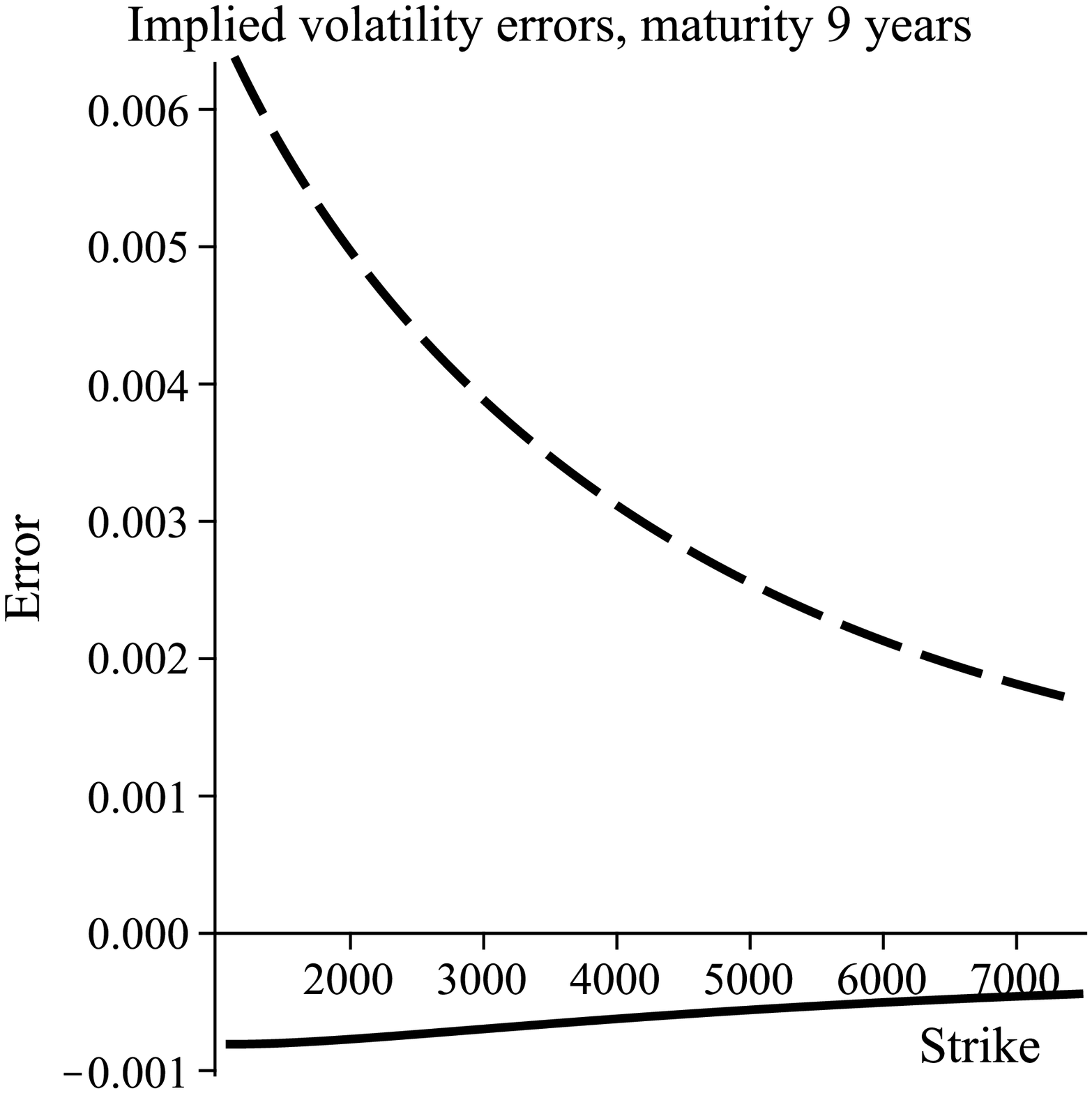}}
\caption{The left plot represents the leading order term $\hat{\sigma}_{\infty}$ (dashed) defined in \eqref{DefOfSigmaInf}, the asymptotic formula in Theorem
\ref{thm:ImpliedVol} (solid) and the true implied volatility (crosses) as functions of the strike $K$ for a maturity equal to $9$ years.
On the right, we plot the corresponding errors between the true implied volatility and $\hat{\sigma}_\infty$ (dashed) and between the true implied volatility and our formula (solid).
The parameter values are given Section~\ref{section:Numerics}.}
\label{fig:FigureSmiles9Years}
\end{center}
\end{figure}

\section{Proof of Theorem \ref{thm:HestonLargeT}}\label{ProofMainTheorem}
The proof of the theorem is divided into a series of steps: we first write the Heston 
call price in terms of an inverse Fourier transform of the characteristic function of 
the stock price \eqref{eq:I}.  Then we prove a large-time estimate for the characteristic 
function (Lemma \ref{lem:AsymptoticBehaviourPhi}). The next step is to deform the contour 
of integration of the inverse Fourier transform through the saddlepoint of the integrand 
(Equation \eqref{Saddlepoint} and Proposition \ref{PropIndicators}). Finally, studying 
the behaviour of the integral around this saddlepoint (Proposition~\ref{prop:MainProp}) 
and bounding the remaining terms (Lemma \ref{LemmaTailEstimates}) completes the proof. 
The special cases
$x=-\theta/2$
and
$x=\bar{\theta}/2$
in formula~\eqref{eq:AHeston}
are proved in 
Sections~\ref{sec:Construction} and~\ref{SubsectionProofSpecial}.

\subsection{The Lee-Fourier inversion formula for call options}\label{sub:LeeFourier}
Using similar notation to Lee \cite{Lee04}, set
$$A_{t,X}:=\left\{\nu \in \mathbb{R}: \mathbb{E}\left(\exp\Big(\nu\left(X_t-x_0\right)\Big)\right)<\infty\right\},\quad\text{for all }t\geq 0,$$
and define the characteristic function $\phi_t:\mathbb{R}\to\mathbb{C}$ of $X_t-x_0$ by
$$\phi_t(z):=\mathbb{E}\left(\exp\Big(\I z(X_t-x_0)\Big)\right),\quad\text{for all }t\geq 0.$$
From Theorem 1 in \cite{LK08} and Proposition 3.1 in \cite{AP07}, we know that $\phi_t(z)$ can be analytically extended for any $z\in\mathbb{C}$ such that $-\Im(z)\in A_{t,X}$, and from our assumptions on the parameters in Section \ref{section:notations}, $\left(p_-,p_+\right)\subseteq A_{t,X}$ for all $t\geq 0$. By Theorem 5.1 in \cite{Lee04}, for any $\alpha\in \left(p_-,p_+\right)$, we have the
following Fourier inversion formula for the price of a call option on $S_t$
\begin{eqnarray*}
\frac{1}{S_0}\mathbb{E}\left(S_t-K\right)^{+} & = & \phi_t(-\I)\ind_{\lbrace 0<\alpha<1\rbrace}+\left(\phi_t(-\I)-\E^x\phi_t(0)\right)\ind_{\lbrace \alpha<0 \rbrace}+\frac{1}{2}\phi_t(-\I)\ind_{\{\alpha=1\}}\\
 & + & \left(\phi_t(-\I)-\frac{\phi_t(0)}{2}\E^{xt}\right)\ind_{\{\alpha=0\}}+\frac{1}{\pi}\int_{\gamma_+}\Re\left(\E^{-\I zx} \,\frac{\phi_t(z-\I)}{\I z-z^2}\right)\D z,
\end{eqnarray*}
where $x:=\log(K/S_0)$ and $\gamma_+:\mathbb{R}\to\mathbb{C}$ is a contour such that $\gamma_+(u):=u-\I(\alpha-1)$. The first four terms on the right hand side are complex residues that arise when we cross the poles of $\left(\I z-z^2\right)^{-1}$ at $z=0$ and $z=\I$. We now set $k=\I-z$, substitute $x$ to  $xt$, and use the fact that $S_t$ is a true martingale for all $t\geq 0$ (see Proposition 2.5 in \cite{AP07}). From now on, as $k$ will always denote a complex number, we use the notation $k=k_r+\I k_i$ for $k_r,k_i\in\mathbb{R}$.
Note that
$$\Re\left(\E^{\I kxt}\frac{\phi_t(-k)}{\I k-k^2}\right)=\mathbb{E}\left(\Re\left(\E^{\I kxt}\frac{\E^{-\I k(X_t-x_0)}}{\I k-k^2}\right)\right),$$
that $k_r\mapsto\Re\left(\E^{\I kxt}\frac{\E^{-\I k(X_t-x_0)}}{\I k-k^2}\right)$ is an even function and $k_r\mapsto\Im\left(\E^{\I kxt}\frac{\E^{-\I k(X_t-x_0)}}{\I k-k^2}\right)$ an odd function. Clearly the normalised call price $S_0^{-1}\mathbb{E}\left(S_t-S_0\exp(xt)\right)^{+}$ is real, so if we take the real part of both sides and break up the integral, we obtain
\begin{align}\label{eq:I}
\frac{1}{S_0}\mathbb{E}\left(S_t-S_0\E^{xt}\right)^{+} & = \ind_{\lbrace 0<\alpha<1\rbrace}+\left(1-\E^{xt}\right)\ind_{\lbrace \alpha<0\rbrace}+ \frac{1}{2}\ind_{\{\alpha=1\}}+\left(1-\frac{1}{2}\E^{xt}\right)\ind_{\{\alpha=0\}}\nonumber\\
 & + \frac{\exp(xt)}{2\pi}\Re\left(\left(\int_{\gamma_\alpha}+\int_{\zeta_\alpha}\right)
\E^{\I kxt}\frac{\phi_t(-k)}{\I k-k^2}\D k\right),
\end{align}
for any $R>0$, where, for any $\alpha\in\mathbb{R}$, we define the contours \begin{equation}\label{ContourGamma}
\gamma_\alpha:(-\infty,-R]\cup[R,+\infty)\to\mathbb{C}\text{ such that }\gamma_\alpha(u):=u+\I\alpha,
\end{equation}
and 
\begin{equation}\label{ContourZeta}
\zeta_\alpha:(-R,R)\to\mathbb{C}\text{ such that }\zeta_\alpha(u):=u+\I\alpha.
\end{equation}
For ease of notation, we do not write explicitly the dependence of these contours on $R$. We will see later how to choose $R$. In the following lemma, we characterise the large-time asymptotic behaviour of the characteristic function $\phi_t$.
\begin{lemma}\label{lem:AsymptoticBehaviourPhi}
For all $k\in\mathbb{C}$ such that $-k_i\in\left(p_-,p_+\right)$, we have
$$\phi_t(k)=\exp\Big(V(\I k)t\Big)U(\I k)\left(1+\epsilon(k,t)\right),\quad\text{as }t\to\infty,$$
$\Re(d(k))>0$, and $\epsilon(k,t)=O\left(\E^{-t\Re(d(k))}\right)$, where $U$ is defined in \eqref{eq:DefOfU}, $p_-$, $p_+$ in \eqref{DefOfPpm}, $d$ in \eqref{DefinitionOfD} and $V$ is the analytic continuation of formula \eqref{eq:V(p)}.
\end{lemma}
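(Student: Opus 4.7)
The starting point is the explicit closed-form Heston characteristic function. For $k\in\mathbb{C}$ in the strip $-k_i\in(p_-,p_+)$, the affine structure of~\eqref{eq:HestonModel} (see~\cite{AP07} and~\cite{LK08}) yields $\phi_t(k)=\exp(C(t,k)+D(t,k)\,y_0)$, where, setting $g(k):=(\kappa-\I\rho\sigma k-d(k))/(\kappa-\I\rho\sigma k+d(k))$,
\[
C(t,k)=\frac{\kappa\theta}{\sigma^2}\left[(\kappa-\I\rho\sigma k-d(k))t-2\log\!\left(\frac{1-g(k)\E^{-d(k)t}}{1-g(k)}\right)\right],
\]
\[
D(t,k)=\frac{\kappa-\I\rho\sigma k-d(k)}{\sigma^2}\cdot\frac{1-\E^{-d(k)t}}{1-g(k)\E^{-d(k)t}},
\]
and $d$ is taken in the principal branch of the square root. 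My plan is to let $t\to\infty$ inside this formula: the exponentially small terms $\E^{-d(k)t}$ drop out, and the remaining static contribution factors exactly as $U(\I k)$.

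The crucial analytic ingredient is the bound $\Re(d(k))>0$ on the strip. Since $d(k)^2$ is the quadratic $(\kappa-\I\rho\sigma k)^2+\sigma^2(\I k+k^2)$ in $k$, a short calculation using~\eqref{DefOfPpm} shows that its zeros sit on the imaginary axis exactly at $k=-\I p_\pm$. Hence the strip $-k_i\in(p_-,p_+)$ is a connected open set disjoint from the zero set of $d^2$, and it contains the point $k=0$ at which $d(0)=\kappa>0$. Combined with a direct check that $d(k)^2\notin(-\infty,0)$ on the strip (so the principal square root is continuous there), this forces $\Re(d(k))>0$ throughout.

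Given $\Re(d(k))>0$, every occurrence of $\E^{-d(k)t}$ in $C$ and $D$ is $O(\E^{-t\Re(d(k))})$, and the resulting Taylor expansion is well defined because $|g(k)|<1$ (another consequence of $\Re(d(k))>0$), so $1-g(k)\E^{-d(k)t}$ is bounded away from zero for large $t$. The linear-in-$t$ part of $C(t,k)$ equals $V(\I k)t$, and the identity $1-g(k)=2d(k)/(\kappa-\I\rho\sigma k+d(k))$ identifies the constant remainder in $C$ as the factor $(2d(k)/(\kappa-\I\rho\sigma k+d(k)))^{2\kappa\theta/\sigma^2}$; similarly, $D(t,k)y_0\to (y_0/(\kappa\theta))V(\I k)$. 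Exponentiating and matching against~\eqref{eq:DefOfU} delivers $\phi_t(k)=\exp(V(\I k)t)\,U(\I k)(1+\epsilon(k,t))$ with $\epsilon(k,t)=O(\E^{-t\Re(d(k))})$, as claimed.

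The main obstacle is the branch-cut bookkeeping for $d$, $g$, and the logarithm inside $C$: with a careless choice of branch, jumps of $2\pi\I$ in the logarithm can appear as $k$ moves across the strip, invalidating the Taylor expansion. This is avoided here by the real-versus-imaginary-part analysis of $d(k)^2$ that establishes $\Re(d(k))>0$: it keeps $g(k)$ inside the open unit disk and hence $1-g(k)\E^{-d(k)t}$ away from the negative real axis, so the principal branch of $\log$ is smooth along the whole expansion path.
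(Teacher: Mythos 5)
Your proof is correct and follows essentially the same route as the paper: both start from the closed-form Heston characteristic function, use $\Re(d(k))>0$ on the strip $-k_i\in(p_-,p_+)$ to make every $\E^{-d(k)t}$ term decay, identify the $t$-independent factor with $U(\I k)$ via $1-g(k)=2d(k)/\left(\kappa-\I\rho\sigma k+d(k)\right)$ and the limit of the $y_0$-term, and bound the remainder by $O\!\left(\E^{-t\Re(d(k))}\right)$ (you even supply more detail than the paper on why $\Re(d(k))>0$). The only blemish is the parenthetical claim that $|g(k)|<1$ is "another consequence of $\Re(d(k))>0$" — that implication is not immediate (it amounts to $\Re\!\left((\kappa-\I\rho\sigma k)\overline{d(k)}\right)>0$) — but it is also unnecessary: for fixed $k$ the finiteness of $g(k)$ together with $\Re(d(k))>0$ already gives $g(k)\E^{-d(k)t}\to0$, so $1-g(k)\E^{-d(k)t}$ stays away from the branch cut for large $t$, exactly as in the paper's argument.
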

\noindent If $-k_i$ is not in $\left(p_-,p_+\right)$, this large time behaviour of $\phi_t$ still holds, but $\Re(d(k))$ might be null (for instance if $k_r=0$) so that $\epsilon(k,t)$ does not tend to $0$ as $t\to\infty$.
\begin{proof}
From \cite{AMST06}, we have, for all $k\in\mathbb{C}$ such that $-k_i\in\left(p_-,p_+\right)$,
\begin{align}
\label{eq:CF}
\phi_t(k) = & \exp\left(V(\I k)t-\frac{2\kappa\theta}{\sigma^2}\log\left(\frac{1-g(k)\E^{-d(k)t}}{1-g(k)}\right)\right)\exp\left(\frac{y_0}{\kappa\theta}V(\I k)\frac{1-\E^{-d(k)t}}{1-g(k)\E^{-d(k)t}}\right),
\end{align}
where $d$ is defined in \eqref{DefinitionOfD}, $V$ is the analytic extension of formula \eqref{eq:V(p)} and the correct branch for the complex logarithm and the complex square root function is the principal branch (see also \cite{AMST06} and \cite{LK08}) and $g:\mathbb{C}\to\mathbb{C}$ is defined by
\begin{equation}\label{eq:DefOfG}
g(k):= \frac{\kappa-\I \rho\sigma k -d(k)}{\kappa-\I \rho\sigma k+d(k)},\quad\text{for all }k\in\mathbb{C}.
\end{equation}
For all $k\in\mathbb{C}$ such that $-k_i\in\left(p_-,p_+\right)$, we have $\Re(d(k))>0$. Let 
$$\epsilon_1(k,t):=\left(1-g(k)\E^{-d(k)t}\right)^{-2\kappa\theta/\sigma^2} \text{ and }\epsilon_2(k,t):=\exp\left\{-\frac{2d(k)V(\I k)y_0}{\kappa\theta\left(\kappa-\rho\sigma \I k+d(k)\right)}\left(\E^{d(k)t}-g(k)\right)^{-1}\right\}.$$
Then we have
$$\phi_t(k)=\exp\Big(V(\I k)t\Big)U(\I k)\epsilon_1(k,t)\epsilon_2(k,t),\quad\text{for all }t\geq 0.$$
Then, as $t$ tends to infinity we have
$$\epsilon_{1}(k,t)= 1+\frac{2\kappa\theta g}{\sigma^2}\E^{-d(k)t}+O\left(\E^{-2d(k)t}\right)\text{ and }\epsilon_{2}(k,t)= 1+\frac{c}{\E^{d(k)t}-1}+O\left(\left(\E^{d(k)t}-1\right)^{-2}\right)$$
for some constant $c$. Set $\epsilon(k,t):=\epsilon_1(k,t)\epsilon_2(k,t)-1$ and the lemma follows.
\end{proof}

\subsection{The saddlepoint and its properties}
We first recall the definition of a saddlepoint in the complex plane (see \cite{Bleistein}):
\begin{definition}
Let $F:\mathcal{Z}\to\mathbb{C}$ be an analytic complex function on an open set $\mathcal{Z}$. A point $z_0\in\mathcal{Z}$ such that the complex derivative $\frac{\D F}{\D z}$ vanishes is called a saddlepoint.
\end{definition}
\noindent Note that the function $V:\left(p_-,p_+\right)\to\mathbb{R}$ defined in \eqref{DefOfFunctionV} can be analytically extended and we define 
for every 
$x\in\mathbb{R}$
the function $F_x:\mathcal{Z}\to\mathbb{C}$ by
\begin{equation}
\label{eq:F}
F_x(k):=-\I kx-V(-\I k),\quad\text{where }\mathcal{Z}:=\left\{k\in\mathbb{C}:k_i\in\left(p_-,p_+\right)\right\}.
\end{equation}
Note that the exponent of the integrand in \eqref{eq:I} has the form $-F_x(k)t$ by Lemma \ref{lem:AsymptoticBehaviourPhi}, therefore the saddlepoint properties of $F_x$ given in the following elementary lemma are fundamental.
\begin{lemma}\label{lemma:saddlepointz}
The saddlepoints of the complex function $F_x:\mathcal{Z}\to\mathbb{C}$ are given by
$$z_0^{\pm}(x)=\I\frac{\sigma-2\kappa\rho\pm(\kappa\theta\rho+x\sigma)\eta(x^2\sigma^2+2x\kappa\theta\rho\sigma+\kappa^2\theta^2)^{-1/2}}{2\sigma\bar{\rho}^2}\in\mathcal{Z},$$
where $\eta:=\sqrt{\sigma^2+4\kappa^2-4\kappa \rho\sigma}$.
\end{lemma}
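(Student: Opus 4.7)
The plan is to reduce the saddlepoint equation $F_x'(z)=0$ to a scalar algebraic equation in the auxiliary variable $p:=-\I z$ and then solve it explicitly. Differentiating $F_x(k)=-\I kx-V(-\I k)$ via the chain rule yields $F_x'(k)=-\I x+\I V'(-\I k)$, so $F_x'(z)=0$ is equivalent to $V'(p)=x$ with $p=-\I z$. Hence the saddlepoints of $F_x$ in $\mathcal{Z}$ are in bijection with the roots of $V'(p)=x$ on the analytic continuation of $V$ to $\{p:\Re(p)\in(p_-,p_+)\}$.

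I would then compute $V'$ explicitly using \eqref{eq:V(p)}. A short simplification of \eqref{DefinitionOfD} gives $d(-\I p)=\sqrt{D(p)}$ with $D(p):=\kappa^2+\sigma(\sigma-2\kappa\rho)p-\sigma^2\bar{\rho}^2p^2$, so that $V'(p)=-\kappa\theta\rho/\sigma-(\kappa\theta/2\sigma^2)\,D'(p)/\sqrt{D(p)}$. Imposing $V'(p)=x$ and isolating the square root reduces the saddlepoint equation to
$$\frac{D'(p)}{2\sqrt{D(p)}}=-\frac{\sigma(x\sigma+\kappa\theta\rho)}{\kappa\theta}.$$
Squaring both sides yields a quadratic in $p$; after grouping terms and using the identity $(\sigma-2\kappa\rho)^2+4\kappa^2\bar{\rho}^2=\eta^2$, the discriminant factors as $16\sigma^6(\bar{\rho}^2+B)B\eta^2$ where $B:=(x\sigma+\kappa\theta\rho)^2/(\kappa^2\theta^2)$. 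The quadratic formula, combined with the algebraic simplification $\bar{\rho}^2+B=(x^2\sigma^2+2x\kappa\theta\rho\sigma+\kappa^2\theta^2)/(\kappa^2\theta^2)$, produces precisely two roots $p^{\pm}(x)$ whose imaginary multiples $z_0^{\pm}(x)=\I p^{\pm}(x)$ match the formula in the lemma. One recognises $p^{+}(x)$ as $p^*(x)$ of \eqref{Saddlepoint}, so by Proposition \ref{PropIndicators} one has $z_0^{+}(x)\in\mathcal{Z}$ automatically.

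Two small matters remain. First, one must verify $z_0^{-}(x)\in\mathcal{Z}$ as well, which reduces to a direct sign comparison of the explicit formula for $p^{-}(x)$ with the endpoints $p_{\pm}=((\sigma-2\kappa\rho)\pm\eta)/(2\sigma\bar{\rho}^2)$ from \eqref{DefOfPpm}; the common occurrence of $\eta$ in both formulas makes this a routine, if slightly tedious, verification. More conceptually, squaring may introduce spurious roots, so one must confirm that both $p^{\pm}(x)$ satisfy the original unsquared equation on consistent branches of the complex square root. This is the only conceptually delicate point: the two saddlepoints correspond to the two sheets of the Riemann surface of $p\mapsto\sqrt{D(p)}$, which explains why exactly two appear; the principal-branch convention fixed in \eqref{DefinitionOfD} singles out $z_0^{+}(x)$ as the one relevant for the contour deformation performed later in Section \ref{ProofMainTheorem}.
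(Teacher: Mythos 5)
Your proposal is correct and takes essentially the same route as the paper: using the explicit representation \eqref{eq:V(p)}, the equation $F_x'(z)=0$ reduces (after isolating the square root and squaring) to a quadratic in $p=-\I z$ whose two roots yield $z_0^{\pm}(x)$, with membership in $\mathcal{Z}$ checked against the endpoints $p_{\pm}$ of \eqref{DefOfPpm}. The only difference is that you make explicit the branch/spurious-root subtlety created by squaring, which the paper's brief proof subsumes in the bare assertion that the equation ``is quadratic,'' and which is indeed resolved exactly as you indicate, with the principal branch singling out $z_0^{+}(x)=\I p^*(x)$.
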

\begin{proof}
Since we are looking for the saddlepoint in $\mathcal{Z}$, we can use the representation in \eqref{eq:V(p)} for the function $V$. Therefore the equation
$F_x'(z)=0$ is quadratic and hence has the two purely imaginary solutions 
$z_0^{\pm}(x)$ since the expression 
$(x^2\sigma^2+2x\kappa\theta\rho\sigma+\kappa^2\theta^2)=\left(x\sigma+\kappa\theta\rho\right)^2+\kappa^2\theta^2(1-\rho^2)$ 
is
strictly positive for any 
$x\in\mathbb{R}$. It is also clear from the definition of $p_-$ and $p_+$ given in \eqref{DefOfPpm} and the assumptions on the coefficients
that $\Im\left(z_0^\pm(x)\right)\in\left(p_-,p_+\right)$, and therefore $z_0^\pm(x)\in\mathcal{Z}$ are saddlepoints of $F_x$.
\end{proof}
\noindent The next task is to choose the saddlepoint of the function $F_x$ in such a way that it converges to the 
saddlepoint of the function 
$F_x^{\mathrm{BS}}(k):=-\I kx-V_{BS}(-\I k)$ for all $k\in\mathcal{Z}$, where $V_{\mathrm{BS}}$ is given by \eqref{eq:DefOfVbs}, in the Black-Scholes model as both the
volatility of volatility and the correlation in model \eqref{eq:HestonModel} tend to zero. It is easy to see that the saddlepoint of 
$F_x^{\mathrm{BS}}$ equals $\I p^*_{\mathrm{BS}}(x)$ for any $x\in\mathbb{R}$, were $p^*_{\mathrm{BS}}$ is given by \eqref{eq:DefOfpStarBS}. We can rewrite
$\Im\left(z_0^\pm(x)\right)$ defined in Lemma \ref{lemma:saddlepointz} as
\begin{equation}\label{eq:ConvPbsToP}
\Im\left(z_0^\pm(x)\right)=\frac{1}{2\bar{\rho}^2}-\frac{\kappa\rho}{\sigma\bar{\rho}^2}\pm\frac{\kappa\theta\rho\eta(x^2\sigma^2+2x\kappa\theta\rho\sigma+\kappa^2\theta^2)^{-1/2}}{2\sigma\bar{\rho}^2}\pm\frac{x\sigma\eta(x^2\sigma^2+2x\kappa\theta\rho\sigma+\kappa^2\theta^2)^{-1/2}}{2\sigma\bar{\rho}^2},
\end{equation}
where $\bar{\rho}$ is defined page \pageref{defofrhobar}. The first term converges to $1/2$ and the last one to $\pm x/\theta$ as $(\rho,\sigma)$ tends to $0$. When both $\rho$ and $\sigma$ tend to $0$, a Taylor expansion at first order of the third term gives
\begin{equation}
\label{eq:Limit_RhoSigma}
\frac{\kappa\theta\rho\eta(x^2\sigma^2+2x\kappa\theta\rho\sigma+\kappa^2\theta^2)^{-1/2}}{2\sigma\bar{\rho}^2}=\frac{\rho\eta}{2\sigma\bar{\rho}^2}.
\end{equation}
Take now the positive sign in \eqref{eq:ConvPbsToP}, then the second and third terms cancel out in the limit because 
$\eta$ 
converges to 
$2\kappa$ 
as 
$(\rho,\sigma)$ 
tends to $0$. In that case we have 
$\lim\limits_{(\rho,\sigma)\to 0} \Im\left(z_0^+(x)\right)=1/2+x/\theta=p_{\mathrm{BS}}^*(x)$ 
for all $x\in\mathbb{R}$, where the Black-Scholes variance is equal to $\theta$. 
If we were to take 
$\Im\left(z_0^-(x)\right)$
for the saddlepoint, in the limit we would not recover 
$p_{\mathrm{BS}}^*(x)$
by~\eqref{eq:Limit_RhoSigma}, since the function $(\rho,\sigma)\mapsto \rho/\sigma$ has no limit as the pair $(\rho,\sigma)$ tends to $0$.
Therefore we define the saddlepoint to be
$z_0^+(x)$.
Moreover we observe the following equality
\begin{equation*}
\Im(z_0^+(x))
=p^*(x),\quad\text{for all }x\in\mathbb{R},
\end{equation*}
where 
$p^*(x)$
is defined in~\eqref{Saddlepoint}.
\begin{remark}
In the fixed strike case, i.e. when $x=0$, we obtain
$$p^*(0)=\frac{-2\kappa\rho+\sigma+\rho \eta}{2\sigma\bar{\rho}^2}.$$
The corresponding saddlepoint $\I p^*(0)$ is the same as the one in Chapter 6 in \cite{Lewis00}.
\end{remark}
The following lemma is of fundamental importance and will be the key tool for Proposition \ref{prop:MainProp}.
\begin{lemma}
\label{lem:FAlongHorizontalContour}
Let $k\in\mathcal{Z}$. Then, for any 
$k_i\in\left(p_-,p_+\right)$, 
the function 
$k_r\mapsto\Re\left(-\I kx-V(-\I k)\right)$ has a unique minimum at $0$
and is strictly decreasing
(resp. increasing)
for 
$k_r\in(-\infty,0)$
(resp. 
$k_r\in(0,\infty)$).
\end{lemma}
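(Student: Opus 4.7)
My plan is to reduce the claim to a statement about $\Re(d(-k))$ as a function of $k_r$ alone, and then to exploit the explicit form of $d$ to exhibit monotonicity directly. First I would note that $\Re(-\I kx)=k_i x$, so for fixed $k_i$ the $k_r$-dependence of $\Re(F_x(k_r+\I k_i))$ comes entirely from $-\Re(V(k_i-\I k_r))$. Using the closed form \eqref{eq:V(p)}, this in turn reduces to showing that $k_r\mapsto \Re(d(-k))$ attains a strict minimum at $k_r=0$ and is strictly monotone on each side of $0$, since the remaining terms in $V$ depend only on $\Re(p)=k_i$.

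The key observation is that a direct expansion gives
\begin{equation*}
d(-k)^{2}=\sigma^{2}\bar{\rho}^{2}k^{2}+\I(2\kappa\rho\sigma-\sigma^{2})k+\kappa^{2},
\end{equation*}
and this polynomial factorises as $\sigma^{2}\bar{\rho}^{2}(k-\I p_{+})(k-\I p_{-})$, with $p_{\pm}$ exactly the roots given in \eqref{DefOfPpm}. Writing $k=k_r+\I k_i$ with $k_i\in(p_-,p_+)$, the factor $(k_i-p_+)(k_i-p_-)$ is strictly negative, and a short calculation then yields
\begin{equation*}
\Re\bigl(d(-k)^{2}\bigr)=\sigma^{2}\bar{\rho}^{2}\bigl(k_r^{2}+(p_+-k_i)(k_i-p_-)\bigr),\qquad \Im\bigl(d(-k)^{2}\bigr)=\sigma^{2}\bar{\rho}^{2}(2k_i-p_+-p_-)\,k_r.
\end{equation*}
Both $\Re(d(-k)^{2})$ and $|\Im(d(-k)^{2})|$ are strictly increasing in $k_r^{2}$, with the real part strictly positive at $k_r=0$.

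Next I would invoke the principal-branch identity $\Re(\sqrt{z})^{2}=\tfrac{1}{2}(\Re z+|z|)$, valid whenever $\Re(\sqrt{z})\geq 0$, which is the case here since $d(-\I k_i)>0$ for $k_i\in(p_-,p_+)$ (by strict convexity of $V$) and $k\mapsto d(-k)$ is continuous and never vanishes on the horizontal line. Applying this identity, $\Re(d(-k))^{2}$ is a sum of two terms each strictly increasing in $k_r^{2}$, hence $\Re(d(-k))$ itself is strictly decreasing on $(-\infty,0)$ and strictly increasing on $(0,\infty)$, with a unique minimum at $k_r=0$. Combining this with the reduction in the first paragraph (noting the sign flip coming from $V(p)=\tfrac{\kappa\theta}{\sigma^{2}}(\kappa-\rho\sigma p-d(-\I p))$) yields the claim.

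The only genuine subtlety is the last step, where one must ensure the principal branch of the square root in \eqref{DefinitionOfD} is consistent along the entire horizontal line $\{k_r+\I k_i:k_r\in\mathbb{R}\}$ so that $\Re(d(-k))$ is a single smooth positive function. This is where I expect to spend most care: one checks that $d(-k)^{2}$ stays off the negative real axis for $k_i\in(p_-,p_+)$, which follows from the factorisation above together with the strict inequalities $k_i>p_-$ and $k_i<p_+$, guaranteeing that either $\Im(d(-k)^{2})\neq 0$ or $\Re(d(-k)^{2})>0$. Once this is in place, the monotonicity argument is routine.
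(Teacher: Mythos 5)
Your argument is correct, and it follows the same overall strategy as the paper — fix $k_i$, observe that the $k_r$-dependence of $\Re(F_x)$ sits entirely in $\Re(d(-k))$ (equivalently $\Re\sqrt{d(-k)^2}$), and conclude via the principal-branch identity $\Re(\sqrt{z})=\bigl(\tfrac{1}{2}(\Re z+|z|)\bigr)^{1/2}$ — but the key monotonicity step is handled differently. The paper writes $d(-k)^2=u(k_r)+\I v(k_r)$ with $u(k_r)=\sigma^2\bar{\rho}^2k_r^2+\psi(k_i)$ and $v(k_r)$ linear in $k_r$, and then shows that both $u$ and $g=u^2+v^2$ have a unique minimum at $k_r=0$; the latter requires expanding $g$ as an even quartic and checking that the coefficient $\chi(k_i)$ of $k_r^2$ is positive via a discriminant computation (which incidentally holds for every real $k_i$, not only $k_i\in(p_-,p_+)$). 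You instead factorise $d(-k)^2=\sigma^2\bar{\rho}^2(k-\I p_+)(k-\I p_-)$, which is correct since $p_\pm$ are precisely the zeros of $p\mapsto d(-\I p)^2$, and exploit $k_i\in(p_-,p_+)$ to get $\Re\bigl(d(-k)^2\bigr)=\sigma^2\bar{\rho}^2\bigl(k_r^2+(p_+-k_i)(k_i-p_-)\bigr)>0$ at once; monotonicity of $\Re z$ and of $|z|$ in $k_r^2$ is then immediate and no quartic/discriminant analysis is needed. Your factorisation also settles the branch question directly (the image avoids the closed negative real half-line), which is the point the paper leaves implicit in its use of the square-root identity. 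Two small remarks: the parenthetical justification of $d(-\I k_i)>0$ "by strict convexity of $V$" is unnecessary — positivity follows from your own factorisation; and the claim that $|\Im(d(-k)^2)|$ is \emph{strictly} increasing in $k_r^2$ fails in the degenerate case $2k_i=p_++p_-$, where it vanishes identically, but this does not harm the proof since the strict increase of $\Re\bigl(d(-k)^2\bigr)$ (and hence of $|d(-k)^2|$, as the modulus of a quantity with positive, strictly increasing real part and non-decreasing $|\Im|$) already gives strict monotonicity of $\Re(d(-k))$ in $k_r^2$.
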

\begin{proof}
Note that the statement in the lemma
is equivalent to the map 
$k_r\mapsto-\Re\left(V(-\I(k_r+\I k_i))\right)$ having a unique minimum at $k_r=0$ for any $k_i\in\left(p_-,p_+\right)$
and being increasing (resp. decreasing) on the positive
(resp. negative) halfline. 
Let $k_i\in\left(p_-,p_+\right)$, then
$$\Re\Big(V(-\I(k_r+\I k_i))\Big) = \frac{\kappa \theta}{\sigma^2} \left(\kappa-\rho \sigma k_i-\Re\left(\sqrt{u(k_r)+\I v(k_r)}\right)\right),$$
where
$$u(k_r) := \sigma^2\bar{\rho}^2k_r^2-\sigma^2\bar{\rho}^2 k_i^2-\sigma(2\kappa\rho-\sigma)k_i+\kappa^2\quad\text{ and }\quad v(k_r) := \left(2\kappa\rho-\sigma+2\sigma\bar{\rho}^2 k_i\right)\sigma k_r.$$
From the identity and the fact that the principal value of the square-root is used, we get
\begin{equation}\label{RealPartSqrt}
\Re\left(\sqrt{u(k_r)+\I v(k_r)}\right)=\frac{1}{2}\sqrt{2u(k_r)+2\sqrt{u^2(k_r)+v^2(k_r)}}
\end{equation}
is monotonically increasing in $u$, $u^2$ and $v^2$. First, note
that $u'(k_r)=2\sigma^2 \bar{\rho}^2 k_r$, hence $u$ is a parabola with a unique minimum at $k_r=0$, so that, from \eqref{RealPartSqrt}, it suffices to prove the following claim:\\
\textbf{Claim:} For every $k_i\in\left(p_-,p_+\right)$, the function $g:=u^2+v^2$ has a unique (strictly positive) minimum attained at $k_r=0$ and is strictly increasing (resp. decreasing) for $k_r>0$ (resp. $k_r<0$).\\
Let us write $u(k_r)=\sigma^2\bar{\rho}^2k_r^2+\psi(k_i)$, for all $k_r\in\mathbb{R}$, where $\psi(k_i):=\kappa^2-\sigma^2\bar{\rho}^2 k_i^2-\sigma(2\kappa\rho-\sigma)k_i$. We have
\begin{equation}\label{Eq:Polynom}
g(k_r)=\sigma^4\bar{\rho}^4k_r^4+\left(2\bar{\rho}^2\psi(k_i)+\left(2\kappa\rho-\sigma+2\sigma\bar{\rho}^2k_i\right)^2\right)\sigma^2k_r^2+\psi(k_i)^2\quad\text{for all }k_r\in\mathbb{R}.
\end{equation}
The coefficient $\sigma^2\bar{\rho}^4$ and the constant $\kappa^2$ are strictly positive, 
so the claim follows if $\chi(k_i)>0$ for all $k_i\in(p_-,p_+)$, where
$$\chi(k_i):=\left(2\bar{\rho}^2\psi(k_i)+\left(2\kappa\rho-\sigma+2\sigma\bar{\rho}^2k_i\right)^2\right)=
2\sigma^2\bar{\rho}^4k_i^2 + 2\sigma\bar{\rho}^2\left(2\kappa\rho-\sigma\right)k_i+\kappa^2+\left(2\kappa\rho-\sigma\right)^2.$$
The discriminant is $\Delta_{\chi}=-4\sigma^2\bar{\rho}^4\left(2\kappa^2+\left(2\kappa\rho-\sigma\right)^2\right)<0$, so that $\chi$ has no real root and is hence always strictly positive. This proves the claim and concludes the proof of the lemma.
\end{proof}
The following two results complete the proof of Theorem \ref{thm:HestonLargeT}, by studying the behaviour of the two integrals in \eqref{eq:I} as the time to maturity tends to infinity. The following lemma proves that the integral along $\gamma_{p^*(x)}$ is negligible and Proposition \ref{prop:MainProp} hereafter provides the asymptotic behaviour of the integral along the contour $\zeta_{p^*(x)}$.
\begin{lemma}\label{LemmaTailEstimates}
For any $x\in\mathbb{R}$ and any $m>V^*(x)$, there exists $R(m)>0$ such that for every $k\in\mathcal{Z}$ with $|k_r|>R(m)$, we have
\begin{equation}\label{eq:InequalityModulus}
\left|\exp\left(\I kxt\right)\phi_t(-k)\right|\leq\exp(-mt),\quad\text{for all }t\geq 1.
\end{equation}
Therefore
\begin{equation}\label{eq:ModulusIntegral}
\left|\E^{xt}\int_{\gamma_{p^*(x)}}\E^{\I kxt}\frac{\phi_t(-k)}{\I k-k^2}\D k\right|=O\left(\E^{-(m-x)t}\right),
\end{equation}
where the contour $\gamma_{p^*(x)}$ is defined in \eqref{ContourGamma}.
\end{lemma}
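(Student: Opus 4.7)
My plan is to combine the closed form of $\phi_t$ supplied by Lemma~\ref{lem:AsymptoticBehaviourPhi} with the convexity-type estimate of Lemma~\ref{lem:FAlongHorizontalContour}. On $\gamma_{p^*(x)}$ the imaginary part $k_i=p^*(x)$ lies in $(p_-,p_+)$ by Proposition~\ref{PropIndicators}, so Lemma~\ref{lem:AsymptoticBehaviourPhi} applies; combined with $|\E^{\I kxt}|=\E^{-xk_it}$ it gives
$$|\E^{\I kxt}\phi_t(-k)|=\E^{-\Re F_x(k)\,t}\,|U(-\I k)|\,|1+\epsilon(-k,t)|,$$
where $F_x$ is the function from \eqref{eq:F}, so that $\Re F_x(k)=xk_i-\Re V(-\I k)$ and, in particular, $\Re F_x(\I p^*(x))=xp^*(x)-V(p^*(x))=V^*(x)$ by \eqref{DefOfVStar}. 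I would then invoke Lemma~\ref{lem:FAlongHorizontalContour} to conclude that $k_r\mapsto\Re F_x(k_r+\I p^*(x))$ attains its unique minimum $V^*(x)$ at $k_r=0$ and is strictly monotone on each half-line. Moreover, the polynomials $u(k_r)=\sigma^2\bar{\rho}^2k_r^2+\psi(p^*(x))$ and $v(k_r)$ from the proof of that lemma satisfy $\Re\sqrt{u+\I v}\sim\sigma\bar{\rho}|k_r|$ as $|k_r|\to\infty$, so that $\Re V(-\I k)\to-\infty$ and $\Re F_x(k)\to+\infty$ linearly in $|k_r|$. Consequently, for every $m>V^*(x)$ and every prescribed $L>0$, there exists $R(m,L)>0$ with $\Re F_x(k)\geq m+L$ whenever $|k_r|>R(m,L)$.

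\textbf{Uniform control of the prefactor and conclusion of \eqref{eq:InequalityModulus}.} Next I would argue that $|U(-\I k)|$ is uniformly bounded on the contour: by \eqref{eq:DefOfU}, $|\exp((y_0/\kappa\theta)V(-\I k))|=\exp((y_0/\kappa\theta)\Re V(-\I k))$ tends to zero as $|k_r|\to\infty$, while the remaining factor tends to $1$. Using the exact representation of $\phi_t$ from the proof of Lemma~\ref{lem:AsymptoticBehaviourPhi}, a direct estimate shows that $|g(-k)\E^{-d(-k)t}|\leq|g(-k)|$ is bounded away from $1$ for $|k_r|$ large and $t\geq 1$, so the two exponential factors building $1+\epsilon(-k,t)$ admit a joint uniform bound for $|k_r|>R_0$ and $t\geq 1$ (for some $R_0>0$). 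Let $M$ be a joint uniform bound for $|U(-\I k)||1+\epsilon(-k,t)|$ on this region; choosing $L\geq\max(0,\log M)$ and $R(m):=\max(R(m,L),R_0)$ yields
$$|\E^{\I kxt}\phi_t(-k)|\leq M\,\E^{-(m+L)t}\leq\E^{-mt}$$
for all $t\geq 1$ and $|k_r|>R(m)$, which is \eqref{eq:InequalityModulus}.

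\textbf{Integral bound and main obstacle.} For \eqref{eq:ModulusIntegral}, the triangle inequality together with \eqref{eq:InequalityModulus} will give
$$\left|\E^{xt}\int_{\gamma_{p^*(x)}}\E^{\I kxt}\frac{\phi_t(-k)}{\I k-k^2}\D k\right|\leq\E^{(x-m)t}\int_{\gamma_{p^*(x)}}\frac{|\D k|}{|\I k-k^2|};$$
the residual contour integral is finite because the integrand decays like $|k_r|^{-2}$ and the poles of $(\I k-k^2)^{-1}$ at $k=0$ and $k=\I$ are avoided as soon as $R(m)>0$. The hard part, as I see it, is not the monotonicity or the saddle-point identification (already supplied by the preceding lemmas) but rather securing the \emph{linear} growth of $\Re F_x$ at a rate strong enough to absorb the constant $M$ uniformly, and extracting the uniform-in-$(k,t)$ bound on $|1+\epsilon(-k,t)|$ for $t\geq 1$ from the exact expression for $\phi_t$ — the pointwise asymptotic statement of Lemma~\ref{lem:AsymptoticBehaviourPhi} alone is not sufficient for this step.
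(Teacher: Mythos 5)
Your proof is correct and follows essentially the same route as the paper's: both rest on the exact representation of $\phi_t$ behind Lemma~\ref{lem:AsymptoticBehaviourPhi}, the linear growth $\Re\left(d(-k)\right)\sim\sigma\bar{\rho}|k_r|$ and decay $\Re\left(V(-\I k)\right)\sim-\kappa\theta\bar{\rho}|k_r|/\sigma$ as $|k_r|\to\infty$, a uniform bound on the prefactor absorbed into the choice of $R(m)$ using $t\ge 1$, and the $|k|^{-2}$ integrability of $(\I k-k^2)^{-1}$ along $\gamma_{p^*(x)}$. Your extra appeal to Lemma~\ref{lem:FAlongHorizontalContour} is harmless but unnecessary, and one small wording point: $|g(-k)|$ itself tends to $1$, so the uniform smallness of $g(-k)\E^{-d(-k)t}$ comes from $\Re\left(d(-k)\right)\to\infty$ combined with $t\ge 1$, exactly as the rest of your argument in fact uses.
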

\begin{remark}
(i) For every $x\in\mathbb{R}$, we have $V^*(x)\geq x$ by (d) on page \pageref{PropertiesOfVStar} and hence $m-x>0$. Therefore the modulus of the integral \eqref{eq:ModulusIntegral} tends to zero exponentially in time and in $m$.\\
(ii) Recall that $p^*(x)\in(p_-,p_+)$ by Proposition \ref{PropIndicators} and hence inequality \eqref{eq:InequalityModulus} can be applied when estimating integral \eqref{eq:ModulusIntegral}.
\end{remark}
\begin{proof}
We only need to prove \eqref{eq:ModulusIntegral}. Recall from Lemma \ref{lem:AsymptoticBehaviourPhi}, after some rearrangements, that
$$\phi_t(-k)=\E^{\left(t+y_0/(\kappa\theta)\right)V(-\I k)}\left(1-g(-k)\right)^{2\kappa\theta/\sigma^2}\left(1+O\left(\E^{-t\Re(d(-k))}\right)\right).$$
It follows from equations \eqref{DefinitionOfD}, \eqref{eq:V(p)} and \eqref{eq:DefOfG} that
\begin{align*}
\Re\left(d\left(-(k_r+\I k_i)\right)\right) & \sim \sigma\bar{\rho}\left|k_r\right|,\quad\text{ as } \left|k_r\right|\to\infty,\\
V\left(-\I(k_r+\I k_i)\right) & \sim -\kappa\theta\bar{\rho}\left|k_r\right|/\sigma,\quad\text{ as } \left|k_r\right|\to\infty,\\
\lim\limits_{\left|k_r\right|\to\infty} g\left(-(k_r+\I k_i)\right) & = \left(\rho-\I\bar{\rho}\right)^2\ne 1,\quad\text{ since } \left|\rho\right|<1.
\end{align*}
Hence there exists a constant $C>0$, 
independent of 
$k$
and
$t$,
such that 
the following inequality holds
$$\left|\E^{\I kxt}\phi_t(-k)\right|\leq 
C\exp\Big(-t\left(k_i x -1+\sigma^{-1}\kappa\theta\bar{\rho}\left|k_r\right|\right)\Big).$$
Define $R(m):=\max\{\sigma\left(m+1-k_ix+ \log(C)\right)/(\kappa\theta\bar{\rho}),1\}$. 
Then if $|k_r|>\max\left\{R(m),R\right\}$,
where the positive constant
$R$
is given in definition~\eqref{ContourGamma},
the equality
\eqref{eq:ModulusIntegral} follows.
\end{proof}

\begin{proposition}\label{prop:MainProp}
For any $R>0$ and $x\in\mathbb{R}\setminus\left\{-\theta/2,\bar{\theta}/2\right\}$, we have as $t\to\infty$,
\begin{equation}\label{eq:IntegralI}
\frac{\exp(xt)}{2\pi}\Re\left(\int_{\zeta_{p^*(x)}}
\E^{\I kxt}\frac{\phi_t(-k)}{\I k-k^2}\D k\right)=\frac{\exp\left(-\left(V^*(x)-x\right)t\right)}{\sqrt{2\pi t}}\left(A(x)+O\left(1/t\right)\right),
\end{equation}
where $A$ is given in \eqref{eq:AHeston}, $V^*$ in \eqref{DefOfVStar} and $\zeta_{p^*(x)}$ in \eqref{ContourZeta}.
\end{proposition}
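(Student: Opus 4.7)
The plan is to treat the integral on the horizontal contour $\zeta_{p^*(x)}$ as a standard Laplace/saddle-point integral after substituting the large-$t$ form of the characteristic function. First I would parametrise $k=u+\I p^*(x)$ for $u\in(-R,R)$ and use Lemma~\ref{lem:AsymptoticBehaviourPhi} to replace $\phi_t(-k)$ by $\exp(V(-\I k)t)U(-\I k)$, with error of relative order $O(\exp(-t\Re d(-k)))$. Combining the two exponentials, the integrand takes the form $g(k)\exp(-tF_x(k))$ with $F_x$ as in \eqref{eq:F} and $g(k):=U(-\I k)/(\I k-k^2)$; note $g$ is holomorphic on a neighbourhood of the contour since $x\in\mathbb{R}\setminus\{-\theta/2,\bar{\theta}/2\}$ keeps us away from $p^*(x)\in\{0,1\}$.

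Next I would Taylor expand around the saddlepoint. Using $V'(p^*(x))=x$ from \eqref{VPrimeOfpStar},
$$F_x(u+\I p^*(x)) = V^*(x) + \tfrac{1}{2}V''(p^*(x))\,u^2 - \tfrac{\I}{6}V'''(p^*(x))\,u^3 + O(u^4),$$
so the linear term in $u$ vanishes (this is the whole point of using $p^*(x)$), and the quadratic coefficient is strictly positive by property (b) on page~\pageref{PropertiesOfVStar}. Inserting the factor $\exp(xt)$ and collecting the constant part gives the prefactor $\exp(-(V^*(x)-x)t)$ claimed in the statement. The remaining oscillatory integral is
$$\int_{-R}^{R} g(u+\I p^*(x))\,\exp\Big(-\tfrac{t}{2}V''(p^*(x))u^2+\tfrac{\I t}{6}V'''(p^*(x))u^3+O(tu^4)\Big)\,\D u.$$

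To evaluate this I would split the integration range at $|u|=t^{-2/5}$ (any exponent strictly between $1/3$ and $1/2$ will do). On $|u|\le t^{-2/5}$ the cubic term is $O(t^{-1/5})$ inside the exponential, so a Taylor expansion of the exponential and of $g$ gives the Gaussian integral $g(\I p^*(x))\sqrt{2\pi/(tV''(p^*(x)))}$ as the leading term, with a correction of order $1/t$ coming from the even-order corrections (the odd-order corrections $u^3$ and $u g'$ integrate to zero by symmetry against the Gaussian). On $|u|>t^{-2/5}$ the quadratic term in the exponent is $\ge c\,t^{1/5}$ and Lemma~\ref{lem:FAlongHorizontalContour} guarantees that $\Re F_x$ is strictly increasing in $|u|$ along the horizontal contour, so the integrand is bounded by $\exp(-c\,t^{1/5})$ times the leading Gaussian, which is absorbed in $O(1/t)$. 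Substituting $g(\I p^*(x))=U(p^*(x))/(p^*(x)(p^*(x)-1))$ reproduces $A(x)$ exactly, and the whole quantity is real to leading order (the imaginary contributions cancel by symmetry after taking~$\Re$), so dividing by $2\pi$ yields \eqref{eq:IntegralI}.

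The delicate point will be keeping the error truly at $O(1/t)$ uniformly. Two issues compound: (i) the error $\epsilon(k,t)$ from Lemma~\ref{lem:AsymptoticBehaviourPhi} is $O(\exp(-t\Re d(-k)))$ pointwise, but $\Re d(-k)$ may be small when $|u|$ is small, so one needs a uniform lower bound on $\Re d$ along a compact piece of $\zeta_{p^*(x)}$ (it is positive there since $p^*(x)\in(p_-,p_+)$), and one must check that this cheap bound survives integration against the Gaussian weight; and (ii) one must track the prefactor $g$ carefully because its derivatives multiply the higher moments of the Gaussian. Both are standard in the saddle-point literature (cf.\ \cite{Bleistein}) and should go through once the above splitting is executed.
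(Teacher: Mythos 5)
Your proposal is correct and follows essentially the same route as the paper: substitute the large-time form of $\phi_t$ from Lemma~\ref{lem:AsymptoticBehaviourPhi} (with a uniform bound for $\epsilon(k,t)$ on the compact contour), expand $F_x$ about the saddlepoint $\I p^*(x)$ using $V'(p^*(x))=x$, and control the rest of $\zeta_{p^*(x)}$ via the monotonicity in Lemma~\ref{lem:FAlongHorizontalContour}, arriving at $A(x)=U(p^*(x))/\bigl(p^*(x)(p^*(x)-1)\sqrt{V''(p^*(x))}\bigr)$. The only difference is presentational: where you carry out Laplace's method by hand with the $|u|\le t^{-2/5}$ splitting, the paper simply invokes Theorem 7.1, Chapter 4 of \cite{Olv74}.
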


\begin{proof}
Let $x\in\mathbb{R}\setminus\left\{-\theta/2,\bar{\theta}/2\right\}$. Applying Lemma \ref{lem:AsymptoticBehaviourPhi} on the compact interval $[-R,R]$, we have
$$\int_{\zeta_{p^*(x)}}\E^{\I kxt} \frac{\phi_t(-k)}{\I k-k^2}\D k=\int_{\zeta_{p^*(x)}}\frac{U\left(-\I k\right)}{\I k-k^2}\E^{(\I kx +V(-\I k))t}\left(1+\epsilon(k,t)\right)\D k,$$
for $t$ large enough. 
By Lemma \ref{lem:FAlongHorizontalContour}, 
we know that 
$k_r\mapsto -\Re(\I (k_r+\I p^*(x)) x +V(-\I(k_r+\I p^*(x))))$
has a unique minimum at 
$k_r=0$
and
the value of the function at this minimum equals
$V^*(x)$
by the definition of 
$V^*$.
The functions 
$V$ and $U$ are analytic along the contour of integration and
thus, by Theorem 7.1, section 7, chapter 4 in \cite{Olv74}, we have 
\begin{align*}
\Re\left(\int_{\zeta_{p^*(x)}}\frac{U(-\I k)}{\I k-k^2}\E^{(\I kx +V(-\I k))t}\D k\right) & = \frac{\E^{xt}}{\sqrt{\pi t}}\E^{-V^*(x)t}\left(\frac{U(p^*(x))}{\sqrt{2V''(p^*(x))}}+O(1/t)\right)\\
 & =\frac{\exp(-(V^*(x)-x)t)}{\sqrt{2\pi t}}\left(A(x)+O(1/t)\right)
\end{align*}
as $t$
tends to infinity.
The $\epsilon(k,t)$ term is a higher order term which we can ignore at the level we are interested in.
\end{proof}
Lemma \ref{LemmaTailEstimates} and Proposition \ref{prop:MainProp} complete the proof of Theorem \ref{thm:HestonLargeT} for the general case. Concerning the two special cases, we first introduce a new contour, the path of steepest descent, which represents the optimal (in a sense made precise below) path of integration. Note that, the general case can also be proved using this path, but Lemma \ref{lem:FAlongHorizontalContour} simplifies the proof.

\subsection{Construction of the path of steepest descent}
\label{sec:Construction}
We first recall the definition of the path of steepest descent before computing it explicitly for the Heston model in the large time case.\begin{definition}(see~\cite{SS03}) Let $z:=x+\I y,\ x,y\in\mathbb{R}$ and $F:\mathbb{C}\to\mathbb{C}$ be an analytic complex function. The steepest descent contour $\gamma:\mathbb{R}\to\mathbb{C}$ is a map such that
\begin{itemize}
\item $\Re(F)$ has a minimum at some point $z_0\in\gamma$ and $\Re(F''(z_0))>0$ along $\gamma$.
\item $\Im(F)$ is constant along $\gamma$.
\end{itemize}
These two conditions imply that $F'(z_0)=0$.
\end{definition}
\noindent The following lemma computes the path of steepest descent explicitly in the Heston case for the function $F_x$ given in \eqref{eq:F} passing through $\I p^*(x)$.
\begin{lemma}\label{lem:LemmaContour}
The path of steepest descent $\gamma$ in the Heston model is the map $\gamma:\mathbb{R}\to\mathbb{C}$ defined by
$$\gamma(s):=s+\I k_i(s),\quad\text{for all }s\in\mathbb{R},$$
where
\begin{equation}\label{eq:kiofs}
k_i(s):=-\frac{\beta-(\kappa\theta\rho+x\sigma)\sqrt{\psi(s)}}{2\kappa\theta\sigma\xi\bar{\rho}^2},
\end{equation}
$\beta:=\kappa\theta\xi(2\kappa\rho-\sigma),\ \psi(s):=4\sigma^2\bar{\rho}^2\xi^2 s^2+\kappa^2\theta^2\left((2\kappa\rho-\sigma)^2+4\kappa^2\bar{\rho}^2\right)\xi\text{ and }\xi:=(\kappa\theta\rho+x\sigma)^2+\kappa^2\theta^2\bar{\rho}^2.$
\end{lemma}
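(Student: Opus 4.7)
The plan is to characterize the path $\gamma$ as the level set of $\Im(F_x)$ through the saddlepoint and then solve the resulting algebraic equation. First I would observe that $\I p^*(x)\in\mathcal{Z}$ is purely imaginary and $V$ is real-valued on $(p_-,p_+)$, so $F_x(\I p^*(x))=p^*(x)x-V(p^*(x))\in\mathbb{R}$ and hence $\Im(F_x(\I p^*(x)))=0$. The steepest descent path passing through this point is therefore the zero level set of $\Im(F_x)$.

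Writing $\gamma(s)=s+\I k_i$ and using the representation $V(-\I k)=\frac{\kappa\theta}{\sigma^2}(\kappa+\I\rho\sigma k-d(-k))$, a direct computation of imaginary parts gives
$$\Im(F_x(s+\I k_i))=-sx-\frac{\kappa\theta\rho\, s}{\sigma}+\frac{\kappa\theta}{\sigma^2}\Im(d(-(s+\I k_i))).$$
Setting this equal to zero yields the single scalar condition
$$\Im\bigl(d(-(s+\I k_i))\bigr)=\frac{\sigma s(\sigma x+\kappa\theta\rho)}{\kappa\theta}.\qquad(\star)$$
The remaining work is to translate $(\star)$ into an equation directly on $k_i$. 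Using \eqref{DefinitionOfD} I would expand
$$d(-k)^{2}=\bigl[\kappa^{2}+\sigma^{2}\bar{\rho}^{2}(s^{2}-k_{i}^{2})+\sigma(\sigma-2\rho\kappa)k_{i}\bigr]+\I\, s\sigma\bigl[2\sigma\bar{\rho}^{2}k_{i}-(\sigma-2\rho\kappa)\bigr]=:U+\I W,$$
and write $d(-k)=A+\I B$. Squaring yields $A^{2}-B^{2}=U$ and $2AB=W$. Condition $(\star)$ fixes $B$, so for $s\neq 0$ the second equation gives $A=W/(2B)$ as an explicit rational function of $s$ and $k_i$; plugging these expressions into $A^{2}-B^{2}=U$ eliminates $A$ and produces a single polynomial equation in $k_i$.

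The decisive step is showing that this polynomial equation is in fact \emph{quadratic} in $k_i$ (the quartic and cubic terms must cancel) with coefficients which, after regrouping, coincide exactly with the quantities $\xi$, $\beta$ and $\psi(s)$ introduced in the statement. Indeed one finds, after collecting terms, an equation of the form
$$\bigl(2\kappa\theta\sigma\xi\bar{\rho}^{2}\bigr)k_{i}^{2}+2\beta\,k_{i}+\bigl(\text{terms in }s\bigr)=0,$$
whose discriminant is precisely $(\kappa\theta\rho+x\sigma)^{2}\psi(s)$ up to a square factor. Solving and picking the root with the $+$ sign in front of $\sqrt{\psi(s)}$ gives the formula \eqref{eq:kiofs}. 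Finally, to confirm the branch choice I would check that at $s=0$ one has $\sqrt{\psi(0)}=\kappa\theta\eta\sqrt{\xi}$, so that the stated formula reduces to
$$k_{i}(0)=\frac{\sigma-2\kappa\rho}{2\sigma\bar{\rho}^{2}}+\frac{(\kappa\theta\rho+x\sigma)\eta}{2\sigma\bar{\rho}^{2}\sqrt{\xi}}=p^{*}(x),$$
using the identity $\xi=x^{2}\sigma^{2}+2x\kappa\theta\rho\sigma+\kappa^{2}\theta^{2}$ (valid because $\kappa^{2}\theta^{2}=\kappa^{2}\theta^{2}\rho^{2}+\kappa^{2}\theta^{2}\bar{\rho}^{2}$) and comparing with \eqref{Saddlepoint}. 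This matches the saddlepoint, confirming that $\gamma$ passes through $\I p^{*}(x)$, while $F_{x}'(\I p^{*}(x))=0$ and $\Re(F_{x}''(\I p^{*}(x)))>0$ are immediate from Lemma~\ref{lemma:saddlepointz} and property (b) on page~\pageref{PropertiesOfVStar}.

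The principal obstacle is purely algebraic: verifying that the substitution $A=W/(2B)$ into $A^{2}-B^{2}=U$ produces a quadratic (and not a quartic) in $k_i$ and that its coefficients package into the compact expressions $\xi,\beta,\psi(s)$. Once this cancellation is established the branch selection is automatic from the boundary value at $s=0$, and the other defining properties of a steepest descent contour follow from the saddlepoint analysis already carried out in Lemma~\ref{lemma:saddlepointz} and Lemma~\ref{lem:FAlongHorizontalContour}.
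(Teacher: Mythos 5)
Your proposal is correct and takes essentially the same approach as the paper: both characterize $\gamma$ as the zero level set of $\Im(F_x)$ passing through the saddlepoint $\I p^*(x)$ and then handle the algebra of the complex square root in $d$. The only difference is one of execution — the paper simply verifies the stated $k_i(s)$ via the identity for the imaginary part of a complex square root, while you derive it by solving the resulting equation in $k_i$ (which is automatically quadratic, since your $B$ is independent of $k_i$, so no quartic cancellation actually needs to be checked) and fix the branch by matching $p^*(x)$ at $s=0$.
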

\noindent Note that $\xi$ is strictly positive, so that the function $k_i$ is well defined.
\begin{proof}
By definition, the contour of steepest descent is such that the function $\Im\left(F_x\circ\gamma\right)$ remains constant. So we look for the map $\gamma$ such that
$\Im(F_x(\gamma(s)))=0$, for all $s\in\mathbb{R}$ because $F_x(\gamma(0))$ is already real.  Using the identity $\Im\left(\sqrt{x+\I y}\right)=4\left(2x+2\sqrt{x^2+y^2}\right)^{-1/2}$,
for all $x,y\in\mathbb{R}$, we find that the function $F_x\circ\gamma$ is real along the contour $\gamma:s\mapsto s+\I k_i(s)$. Note also that this contour is orthogonal to the imaginary axis at $k_i(0)$ (see Exercise 2, Chapter 8 in \cite{SS03}).
\end{proof}
\begin{remark}\text{}
\begin{itemize}
\item The contour $\gamma$ depends on $x$, but for clarity we do not write this dependence explicitly.
\item The construction of $\gamma$ is such that the saddlepoint defined in \eqref{Saddlepoint} satisfies $\I p^*(x)=\gamma(0)$.
\item We have  $k_i(s)=\Im(\gamma(s))$ is an even function of $s$, i.e. $\gamma$ is symmetric around the imaginary axis.
\end{itemize}
\end{remark}
We now prove Theorem \ref{thm:HestonLargeT} in the two special cases $x\in\left\{-\theta/2,\bar{\theta}/2\right\}$.
In these cases, we need a result similar to Proposition \ref{prop:MainProp}, as Lemma \ref{LemmaTailEstimates} still holds, i.e. we need the asymptotic behaviour of the integral in \eqref{eq:IntegralI} for the two special cases. The problem with these special cases is that $(\I k-k^2)^{-1}$ in the integrand in \eqref{eq:IntegralI} has a pole at the saddlepoint, so we need to deform the contour using Cauchy's integral theorem and take the real part to remove the singularity, before we can use a saddlepoint expansion. 

\subsection{Proof of the call price expansion for the special cases}
\label{SubsectionProofSpecial}
We here prove Theorem \ref{thm:HestonLargeT} in the case $x=\bar{\theta}/2$
for which $p^*\left(\bar{\theta}/2\right)=1$, $V^*\left(\bar{\theta}/2\right)=\bar{\theta}/2$, and for simplicity we also assume that
$\kappa<\left(\sigma-2\rho^2\sigma\right)/(2\rho)$ (the other cases follows similarly). From \eqref{eq:kiofs}, we see that in this case, $\gamma$ lies below the
horizontal contour $\gamma_H:\mathbb{R}\to\mathbb{C}$ such that $\gamma_H(s):=s+\I $ (in the other case, $\gamma$ lies above $\gamma_H$). We want to construct a new
contour leaving the pole outside. Let $\epsilon>0$ and $\gamma_\epsilon:(-\pi,\pi]\to\mathbb{C}$ denote the clockwise oriented circular keyhole contour parameterised as $\gamma_{\epsilon}(\theta):= \I+\epsilon\,\E^{\I\theta}$ around the pole. To leave the pole outside the new contour of integration, we need to follow $\gamma$ on $\mathbb{R}_-$, switch to the keyhole contour as soon as we touch it, follow it clockwise (above the pole), and get back to $\gamma$ on $\mathbb{R}_+$. As $\gamma$ is below $\gamma_H$, it intersects $\gamma_\epsilon$ on its lower half, which can be analytically represented as $\gamma_\epsilon^-:[-\epsilon,\epsilon]\to\mathbb{C}$ such that $\gamma_{\epsilon}^-(s):=\I+s-\I\sqrt{\epsilon^2-s^2}$. From \eqref{eq:kiofs}, the two contours intersect at $s^*=\pm\epsilon$. Choose now $0<\epsilon<\delta<R$ (Lemma \ref{LemmaSpecialCase} makes the choice of $\delta$ precise and $\epsilon$ must be such that $1+2\epsilon<p_+$) and define the following contours
(they are all considered anticlockwise, see Figure \ref{fig:keyhole})
\begin{itemize}
\item $\gamma_{\delta,R}:[-R,-\delta]\cup[\delta,R]\to\mathbb{C}$ given by $\gamma_{\delta,R}(u)=u+\I k_i(\delta)$, with $k_i(\delta)$ 
defined in~\eqref{eq:kiofs};
\item $\gamma_{\epsilon,\delta}$ is the restriction of $\gamma$ to the union of the intervals $[-\delta,-\epsilon] \cup [\epsilon,\delta]$;
\item $\gamma_\epsilon^U$ is the portion of the circular keyhole contour $\gamma_{\epsilon}$ which lies above $\gamma$, i.e. the upper half keyhole contour as well as the two sections of $\gamma_\epsilon$ between $\gamma$ and $\gamma_H$;
\item $\Gamma_{R,\epsilon,\delta}^{\pm}$ are the two vertical strips joining $\pm R+\I(1+2\epsilon)$ to $\pm R-\I k_i(\delta)$.
\end{itemize}
\begin{figure}
\centering
\includegraphics[scale=0.4]{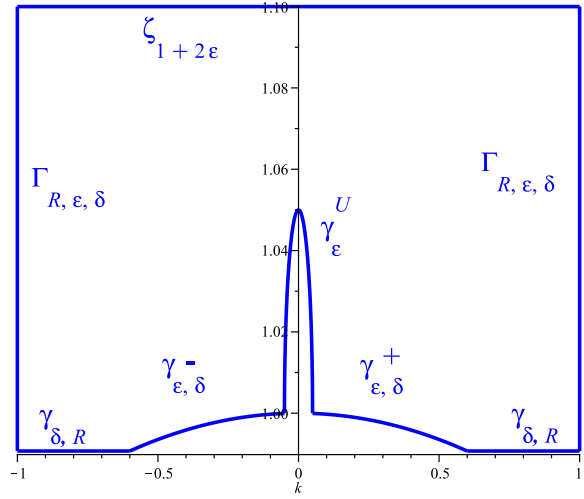}
\caption{Here we have plotted the closed contour of integration in
\eqref{Eq:Cauchy} for $\epsilon=0.05,\ \delta=0.6,\ R=1$.}
\label{fig:keyhole}
\end{figure}
By Cauchy's integral theorem, we now have
\begin{equation}\label{Eq:Cauchy}
\left(\int_{\gamma_{\delta,R}}+\int_{\gamma_{\epsilon,\delta}} \,+\int_{\gamma_\epsilon^U}+ \int_{\Gamma_{R,\epsilon,\delta}^{\pm}} -\int_{\zeta_{1+2\epsilon}}\right)\frac{\phi_t(-k)}{\I k-k^2}\exp\left(\I k\bar{\theta}t/2\right)\D k=0,
\end{equation}
Recall that the curves
$\zeta_{1+2\epsilon}$
and
$\gamma_{1+2\epsilon}$
are defined in~\eqref{ContourZeta} and~\eqref{ContourGamma}
respectively
and rewrite \eqref{Eq:Cauchy} as
\begin{align}\label{Eq:Cauchee}
\left(\int_{\gamma_{1+2\epsilon}}+\int_{\zeta_{1+2\epsilon}}\right)\frac{\phi_t(-k)}{\I k-k^2}\E^{\I k\bar{\theta}t/2}\D k & = \left(\int_{\gamma_{\delta,R}}+ \int_{\Gamma_{R,\epsilon,\delta}^{\pm}} +\int_{\gamma_{1+2\epsilon}}+\int_{\gamma_\epsilon^U}\right)\frac{\phi_t(-k)}{\I k-k^2}\E^{\I k\bar{\theta}t/2}\D k\nonumber \\
& + \int_{\gamma_{\epsilon,\delta}}\frac{\phi_t(-k)}{\I k-k^2}\E^{\I k\bar{\theta}t/2}\D k.
\end{align}
The integral on the left-hand side is equal to the normalised call price $2\pi S_0^{-1} \E^{-\bar{\theta}t/2}\mathbb{E}\left(S_t-S_0 \E^{\bar{\theta}t/2}\right)^{+}$ by Theorem 5.1 in Lee \cite{Lee04}, which is independent of $\epsilon$ (this holds because $1+2\epsilon<p_+$). For $k$ close to $\I$, we have
$$\frac{\phi_t(-k)}{\I k-k^2}\exp\left(\I k\bar{\theta}t/2\right) = \left(\frac{\I}{k-\I}+O\left(1\right)\right)\E^{-\bar{\theta}t/2},$$
so that
\begin{equation}\label{eq:Keyhole}
\int_{\gamma_\epsilon^U} \frac{\phi_t(-k)}{\I k-k^2}\exp\left(\I k\bar{\theta}t/2\right)\D k=\left(\pi+O\left(\epsilon\right)\right)\E^{-\bar{\theta}t/2}.
\end{equation}
Lemma \ref{LemmaSpecialCase} gives the behaviour of the last integral on the rhs of \eqref{Eq:Cauchee} as $\epsilon$ tends to $0$ for $\delta$ small enough. 

The other integrals can be bounded as follows.
By Lemma \ref{LemmaTailEstimates}, the integral along $\gamma_{1+2\epsilon}$ is $O\left(\E^{-\bar{\theta}t/2}\right)$, for $t>t^*(m)$, $R>R(m)$, as $\epsilon$ tends to $0$.

The curves
$\Gamma_{R,0,\delta}^{\pm}$ 
are both vertical strips of length 
$\delta$ 
and therefore their images are compact sets. 
Applying the tail estimate of 
Lemma~\ref{LemmaTailEstimates} 
along 
$\Gamma_{R,0,\delta}^{\pm}$, 
we know that for any $m>\bar{\theta}/2$, there exist $t(m)$ and $R(m)$ such that
$$\int_{\Gamma_{R,0,\delta}^{\pm}} \frac{\phi_t(-k)}{\I k-k^2}\E^{\I k\bar{\theta}t/2}\D k=O\left(\E^{-mt}\right),\quad\text{for all }t>t(m),|k|>R(m).$$

Lemma~\ref{lem:FAlongHorizontalContour}
implies that 
the real function 
$k_r\mapsto\Re\left(-\I (k_r+\I k_i)\bar{\theta}/2-V(-\I (k_r+\I k_i))\right)$ 
attains its 
global minimum 
at
$0$ 
for any fixed
$k_i\in(p_-,p_+)$
and is strictly decreasing (resp. increasing)
for 
$k_r<0$
(resp.
$k_r>0$).
It therefore follows that 
the function 
$u\mapsto\Re\left(-\I \gamma_{\delta,R}(u)\bar{\theta}/2-V(-\I \gamma_{\delta,R}(u))\right)$,
where
$u\in[-R,-\delta]\cup[\delta,R]$,
attains its minimum value
$g(\delta):=\Re\left((k_i(\delta)-\I\delta)\bar{\theta}/2-V(k_i(\delta)-\I \delta)\right)$,
where 
$k_i(\delta)$
is defined in~\eqref{eq:kiofs},
at the points
$u=\pm\delta$.
It can be checked directly that
$g(0)=\bar{\theta}/2$,
$g'(0)=0$
and
$g''(0)>0$
and hence 
for every
$\delta>0$
there exists
$\epsilon_0>0$ 
such that the following inequality holds
$$
\Re\left(-\I \gamma_{\delta,R}(u)\bar{\theta}/2-V(-\I \gamma_{\delta,R}(u))\right)>\bar{\theta}/2+\epsilon_0\quad
\text{for all}\quad u\in[-R,-\delta]\cup[\delta,R].
$$
Therefore Lemma~\ref{lem:AsymptoticBehaviourPhi} yields the following
inequality
$$\left|\int_{\gamma_{\delta,R}}\frac{\phi_t(-k)}{\I k-k^2}\E^{\I k\bar{\theta}t/2} \D k\right|\leq\E^{-(\bar{\theta}/2+\epsilon_0)t}\int_{\gamma_{\delta,R}}\left|\frac{U(-\I k)(1+\epsilon(k,t))}{\I k-k^2}\right|\D k=O\left(\exp\left(-\left(\bar{\theta}/2+\epsilon_0\right)t\right)\right).$$

We now prove the following lemma about the integral along $\gamma_{\epsilon,\delta}$ as $\epsilon$ tends to $0$.
\begin{lemma}\label{LemmaSpecialCase}
For 
$\delta>0$ 
and
sufficiently small we have
$$\lim_{\epsilon\searrow0}\int_{\gamma_{\epsilon,\delta}}\frac{\phi_t(-k)}{\I k-k^2}\exp\left(\I k\bar{\theta}t/2\right)\D k=
\sqrt{\frac{2\pi}{V''(1)t}}\E^{-\bar{\theta}t/2}\left(-1-\frac{1}{6}\frac{V'''(1)}{V''(1)}+U'(1)\right)\left(1+O(1/t)\right),$$
where $U$ is given by \eqref{eq:DefOfU} and $V$ by \eqref{eq:V(p)}.
\end{lemma}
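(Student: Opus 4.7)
The plan is to decompose the integrand by partial fractions around the simple pole at $k=\I$, handle the non-singular piece by a direct saddlepoint/Laplace expansion along the steepest descent contour $\gamma$, and treat the singular piece via a principal value argument. The key identity is
$$
\frac{1}{\I k-k^2}=\frac{\I}{k-\I}-\frac{\I}{k},
$$
so that
$$
\int_{\gamma_{\epsilon,\delta}}\frac{\phi_t(-k)\,\E^{\I k\bar{\theta}t/2}}{\I k-k^2}\D k
= \int_{\gamma_{\epsilon,\delta}}\frac{\I\,\phi_t(-k)\,\E^{\I k\bar{\theta}t/2}}{k-\I}\D k
\;-\;\int_{\gamma_{\epsilon,\delta}}\frac{\I\,\phi_t(-k)\,\E^{\I k\bar{\theta}t/2}}{k}\D k.
$$
The second integrand is analytic at $\I$, so in the limit $\epsilon\searrow 0$ it converges to the same integral over the full piece of $\gamma$ between $-\delta$ and $\delta$. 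By Lemma~\ref{lem:AsymptoticBehaviourPhi} and the construction of $\gamma$ (the exponent $F(k):=V(-\I k)+\I k\bar{\theta}/2$ is real along $\gamma$, with $F(\I)=-\bar{\theta}/2$, $F'(\I)=0$ and $F''(\I)=-V''(1)<0$), Olver's theorem yields its leading asymptotic $-U(1)\sqrt{2\pi/(V''(1)t)}\,\E^{-\bar{\theta}t/2}(1+O(1/t))$. A direct computation using $V(1)=0$ (the martingale property) and $d(-\I)=\bar\kappa$ gives $U(1)=1$, so this first contribution equals $-\sqrt{2\pi/(V''(1)t)}\,\E^{-\bar{\theta}t/2}(1+O(1/t))$.

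For the singular piece, I parameterize by the steepest descent path $k=\gamma(s)=s+\I k_i(s)$. Since $k_i$ is even with $k_i(0)=1$, we have $\gamma(s)-\I=s\bigl(1+\I k_i''(0) s/2+O(s^2)\bigr)$ and $\gamma'(s)=1+\I k_i''(0)s+O(s^2)$. I separate the pole by writing
$$
\frac{\gamma'(s)}{\gamma(s)-\I}=\frac{1}{s}+h(s),\qquad h(0)=\frac{\I k_i''(0)}{2},
$$
with $h$ analytic at $0$. The integral involving $h(s)$ is a regular Laplace integral whose leading-order evaluation gives a contribution $\I h(0)\,U(1)\,\sqrt{2\pi/(V''(1)t)}\,\E^{-\bar\theta t/2}(1+O(1/t))$. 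The requirement that $F(\gamma(s))$ be real along the steepest descent path forces the cubic imaginary part of $F\circ\gamma$ to vanish, yielding the identity
$$
k_i''(0)=\frac{V'''(1)}{3V''(1)},
$$
which I will verify by direct Taylor expansion of $F(\gamma(s))$ at $s=0$. Substituting this value converts the $h$ contribution to $-\dfrac{V'''(1)}{6V''(1)}\sqrt{2\pi/(V''(1)t)}\,\E^{-\bar{\theta}t/2}(1+O(1/t))$.

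The remaining term is the principal value $\mathrm{PV}\int_{-\delta}^{\delta}\I\,\phi_t(-\gamma(s))\E^{\I\gamma(s)\bar\theta t/2}/s\,\D s$. Using the symmetry $\gamma(-s)=-\overline{\gamma(s)}$ and $\overline{\phi_t(z)}=\phi_t(-\bar z)$, I show that $\phi_t(-\gamma(-s))\E^{\I\gamma(-s)\bar\theta t/2}=\overline{\phi_t(-\gamma(s))\E^{\I\gamma(s)\bar\theta t/2}}$, so the odd/even decomposition of the integrand under $s\mapsto -s$ reduces the PV to an ordinary integral of twice the odd part divided by $s$. Taylor expanding $U(-\I\gamma(s))=U(1)-\I U'(1) s+O(s^2)$ (so the imaginary part at order $s$ is $-U'(1)s$) and using $E(s):=\exp(F(\gamma(s))t)$ being real and even, one obtains at leading order $U'(1)\int_{-\delta}^{\delta} E(s)\,\D s$, yielding $U'(1)\sqrt{2\pi/(V''(1)t)}\,\E^{-\bar\theta t/2}(1+O(1/t))$. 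Summing the three contributions produces the claimed coefficient $-1-V'''(1)/(6V''(1))+U'(1)$.

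The main obstacle is the principal-value calculation: one must keep track of the interaction between the $1/s$ singularity, the imaginary part of the Taylor expansion of $U(-\I\gamma(s))$, and the $\I k_i''(0)/2$ term from $h(0)$ (which also contains $V'''(1)$ via the steepest descent constraint). Getting both the linear $U'(1)$ term (from the $1/s$ principal value) and the two $U(1)$-contributions (from $h$ and from the $-\I/k$ piece), and then using $U(1)=1$, is what makes the coefficients align with the stated formula. The control of the error term $O(1/t)$ comes from the usual saddlepoint expansion arguments together with the uniform bound on $\epsilon(k,t)$ in Lemma~\ref{lem:AsymptoticBehaviourPhi}.
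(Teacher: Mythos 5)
Your proof is correct and reaches the right coefficient, but via a genuinely different decomposition than the paper's. The paper keeps the full prefactor $q(s):=U(-\I\gamma(s))\gamma'(s)/(\I\gamma(s)-\gamma(s)^2)$ together, computes its single Laurent expansion at $s=0$ (equation \eqref{TaylorSeriesQ}), observes that the coefficient of $\I/s$ is real so that $\Re(q)$ extends holomorphically across $s=0$, and applies Olver's theorem once; the reduction of the complex line integral to this real Laplace integral uses exactly the parity you invoke (real part even, imaginary part odd under $s\mapsto-s$). You instead split $\frac{1}{\I k-k^2}=\frac{\I}{k-\I}-\frac{\I}{k}$, further split $\gamma'(s)/(\gamma(s)-\I)=\frac{1}{s}+h(s)$, and apply Olver to the three resulting pieces separately. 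This requires the explicit steepest-descent identity $k_i''(0)=V'''(1)/(3V''(1))$, which you correctly read off from $\Im G'''(0)=0$; the paper never isolates $k_i''(0)$ because it is absorbed into the Laurent expansion of $q$. What your version buys is transparency about the provenance of each of the three summands $-1$, $-V'''(1)/(6V''(1))$ and $U'(1)$ in the final coefficient: the first from the regular $-\I/k$ piece (via $U(1)=1$, which indeed follows from $d(-\I)=\bar\kappa$ and $V(1)=0$), the second from the contour curvature $h(0)=\I k_i''(0)/2$, and the third from the odd part $\Im\bigl(U(-\I\gamma(s))\bigr)\sim -U'(1)s$ in the principal-value integral. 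The paper's single-expansion route is more compact and invokes Olver only once. One small detail worth making explicit in yours: $\Im\bigl(U(-\I\gamma(s))\bigr)$ is an odd function of $s$ (Schwarz reflection gives $U(\bar w)=\overline{U(w)}$ since $U$ is real on $(p_-,p_+)$, and $-\I\gamma(-s)=\overline{-\I\gamma(s)}$), so after dividing by $s$ the remainder is even and $O(s^2)$, hence $O(1/t)$-suppressed by the Laplace expansion, which is what the stated error term requires.
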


\begin{proof}
Recall that 
$\gamma$
is the contour of steepest descent defined in
Lemma~\ref{lem:LemmaContour}
and that the curve
$\gamma_{\epsilon,\delta}$
is its restriction to the intervals
$[-\delta,-\epsilon]\cup[\epsilon,\delta]$.
Note that 
$s\mapsto\Re\left(\frac{\phi_t(-\gamma(s))\gamma'(s)}{\I\gamma(s)-\gamma(s)^2}\E^{\I\gamma(s)\bar{\theta}t/2}\right)$
is an even function and 
$s\mapsto\Im\left(\frac{\phi_t(-\gamma(s))\gamma'(s)}{\I\gamma(s)-\gamma(s)^2}\E^{\I\gamma(s)\bar{\theta}t/2}\right)$
is an odd function. We therefore obtain
\begin{equation}\label{EQSpecialCase1}
\int_{\gamma_{\epsilon,\delta}}\frac{\phi_t(-k)}{\I k-k^2}\exp\left(\I k\bar{\theta}t/2\right)\D k=\int_{[-\delta,-\epsilon] \cup [\epsilon,\delta]} \Re\left(\frac{\phi_t(-\gamma(s))\gamma'(s)}{\I\gamma(s)-\gamma(s)^2}\exp\left(\I\gamma(s)\bar{\theta}t/2\right)\right)\D s.
\end{equation}
From \eqref{eq:kiofs}, for $s$ around $0$, we have $\left(\I\gamma(s)-\gamma(s)^2\right)^{-1}=\I/s-1+O(s)$, so $\Re\left(\left(\I\gamma(s)-\gamma(s)^2\right)^{-1}\right)=-1+O(s)$,
i.e. taking the real part removes the singularity at $k=\I$.
Using Lemma \ref{lem:AsymptoticBehaviourPhi}, we then have
$$\int_{-\delta}^{\delta}\Re\left(\frac{\phi_t(-\gamma(s))\gamma'(s)}{\I\gamma(s)-\gamma(s)^2}\E^{\I\gamma(s)\bar{\theta}t/2}\right)\D s=\int_{-\delta}^{\delta}\Re\left(q(s)\right)\E^{\I\gamma(s)\bar{\theta}t/2+V\left(-\I\gamma(s)\right)t}\D s+O\left(\E^{-mt}\right),$$
for some $m>0$ large enough, where we define the function $q:\mathbb{R}\backslash\{0\}\to\mathbb{C}$ by
$$q(s):=\frac{U(-\I\gamma(s))\gamma'(s)}{\I\gamma(s)-\gamma(s)^2},\quad\text{for all }s\in\mathbb{R}.$$
Then, from \eqref{eq:kiofs}, we have the following expansion
\begin{equation}\label{TaylorSeriesQ}
q(s)=\left(\frac{\I}{s}-\left(\frac{V'''(1)}{6V''(1)}+1\right)\right)\left(1-\I U'(1)s\right)+O\left(s^3\right).
\end{equation}
We can therefore extend the function
$q$
to the map
$q:B_{\delta}(0)\backslash\{0\}\to\mathbb{C}$
for some 
$\delta>0$,
where
$B_{\delta}(0):=\{z\in\mathbb{C}:|z|<\delta\}$ 
is an open disc of radius 
$\delta$.
Note that for 
$s\in\mathbb{R}$
we have
$\Re\left(q(s)\right)=-1-V'''(1)/\left(6V''(1)\right)+U'(1)+O(s)$ 
and hence the function
$\Re(q):[-\delta,\delta]\to\mathbb{R}$
does not have a
singularity at $s=0$. 

Recall that if a function $G:B_{\delta}(0)\backslash\{0\}\to\mathbb{C}$ 
has a Laurent series expansion
$$G(z)=\frac{\I a_{-1}}{z}+ \sum_{n=0}^{\infty} a_n z^n,\quad\text{for }z\in B_{\delta}(0)\setminus\{0\},$$
with $a_{-1}\in\mathbb{R}$, 
then the function
$\Re(G):\mathbb{R} \cap B_{\delta}(0)\to\mathbb{R}$ 
has an analytic continuation on the whole disc
$B_{\delta}(0)$.
It follows from~\eqref{TaylorSeriesQ} 
that 
there exists a holomorphic function
$Q:B_\delta(0)\to\mathbb{C}$ 
such that 
$Q(s)=\Re(q(s))$ for any $s\in(-\delta,\delta)$.
Thus by Theorem 7.1, Chapter 4 of \cite{Olv74}, we have
\begin{align*}
\int_{-\delta}^{\delta} \Re\left(q(s)\right)\E^{\I\gamma(s)\bar{\theta}t/2+V\left(-\I\gamma(s)\right)t} \D s & = \int_{-\delta}^{\delta} Q(s)\E^{\I\gamma(s)\bar{\theta}t/2+V\left(-\I\gamma(s)\right)t}\D s\\
 & = \sqrt{\frac{2\pi\,\E^{-\bar{\theta}t}}{V''(1)t}}\left(-1-\frac{1}{6}\frac{V'''(1)}{V''(1)}+U'(1)\right)\left(1+O(1/t)\right).
\end{align*}
\end{proof}

Letting $\epsilon$ go to $0$ in equation \eqref{Eq:Cauchee}, 
applying Lemma~\ref{LemmaSpecialCase}
and the bounds developed above for the other integrals in \eqref{Eq:Cauchee} 
the theorem follows in the case 
$x=\bar{\theta}/2$. 
The case $x=-\theta/2$ is analogous.

\appendix
\renewcommand{\theequation}{A-\arabic{equation}}
\setcounter{equation}{0}  
\section*{APPENDIX}  

\section{Proof of Lemma \ref{lem:LargeTFixedK}}
\label{section:ProofofFixedStrike}
The proof is analogous to the proof of Theorem \ref{thm:HestonLargeT}. The residue in Theorem \ref{thm:HestonLargeT} is equal to $1$ (arising from the $\ind_{\{-\theta/2<x<\bar{\theta}/2\}}$ term), and from \eqref{eq:I}, the integral part is equal, for $R$ large enough, reads
$$\frac{\exp(xt)}{2\pi}\Re\left(\left(\int_{\zeta_{p^*(0)}}
+\int_{\gamma_{p^*(0)}}\right)\frac{\phi_t(-k)}{\I k-k^2} \E^{\I k xt}\D k\right),$$
and the behaviour of these integrals follows exactly the lines of the proof of Theorem \ref{thm:HestonLargeT}.

\section{Proof of Proposition \ref{prop:BStimedep}}\label{ProofLemma24}
Let us now consider a squared volatility of the form $\hat{\sigma}^2_t=\sigma^2+a_1/t>0$, then the Black-Scholes call option reads
\be\label{eq:BSFormula}
\frac{1}{S_0}C_{\mathrm{BS}}\left(S_0,S_0\E^{xt},t,\hat{\sigma}_t\right)=\Phi\left(\frac{-x+\left(\sigma^2+a_1/t\right)/2}{\sqrt{\sigma^2+a_1/t} }\sqrt{t}\right)-\E^{xt}\Phi\left(\frac{-x-\left(\sigma^2+a_1/t\right)/2}{\sqrt{\sigma^2+a_1/t}}\sqrt{t}\right),
\ee
and let $z_\pm=\left(-x\pm\frac{1}{2}\left(\sigma^2+a_1/t\right)\right)\sqrt{t}/\sqrt{\sigma^2+a_1/t}$.
Recall that (see \cite{Olv74})
\begin{equation}\label{OlverApproxPhi}
\Phi(-z)=1-\Phi(z)=\frac{\exp\left(-z^2/2\right)}{z\sqrt{2\pi}}\left(1+O\left(1/z^2\right)\right),\quad\text{ as }z\to+\infty
\end{equation}
\textbf{The case $x>\sigma^2/2$}. As $\sigma^2/2=\lim\limits_{t\to\infty}\hat{\sigma}^2_t/2$,
there exists $t^*$ such that for all $t>t^*$, $x>\hat{\sigma}^2_t/2=(\sigma+a_1/t)^2/2$.
From \eqref{eq:BSFormula}, we have, using a Taylor expansion for $z_\pm$,
\begin{eqnarray*}
\frac{1}{S_0}C_{\mathrm{BS}}(S_0,S_0 \E^{xt},t,\hat{\sigma}_t)
&=&\E^{\frac{1}{8}a_1\left(4x^2/\hat{\sigma}^4-1\right)}\left(\frac{\sigma}{x-\sigma^2/2}-\frac{\sigma}{x+\sigma^2/2 }\right)\frac{1}{\sqrt{2\pi t}}\exp\left(-\frac{(-x+\sigma^2/2)^2 t}{2\sigma^2}\right)\left(1+O(1/t)\right)\\
&=&\left(2\pi t\right)^{-1/2}\exp\left(-(V_{\mathrm{BS}}^*(x,\sigma)-x)t\right)A_{\mathrm{BS}}(x,\sigma,a_1)\left(1+O(1/t)\right).
\end{eqnarray*}
The cases $x<-\sigma^2/2$ and $-\sigma^2/2<x<\sigma^2/2$ follow likewise.\\
\textbf{The case $x=\sigma^2/2$}. From \eqref{eq:BSFormula}, we have
\begin{eqnarray*}
\frac{1}{S_0}C_{\mathrm{BS}}\left(S_0,S_0\E^{\sigma^2t/2},t,\hat{\sigma}_t\right)&=&\Phi\left(\frac{a_1/2}{\sqrt{\sigma^2t+a_1}}\right)-\frac{\E^{\sigma^2t/2}\sqrt{\sigma^2t+a_1}}{xt+(\sigma^2t+a_1)/2}\E^{-\frac{(-xt-(\sigma^2t+a_1)/2)^2}{2(\sigma^2t+a_1)}}(1+O(1/t))\\
&=& \frac{1}{2}+ \frac{a_1/2}{\sigma\sqrt{2\pi t}}-\frac{1}{\sigma\sqrt{2\pi t}}(1+O(1/t))= \frac{1}{2}+ \frac{A_{\mathrm{BS}}\left(\sigma^2/2,\sigma,a_1\right)}{\sqrt{2\pi t}}(1+O(1/t)).
\end{eqnarray*}
The case $x=-\sigma^2/2$ is analogous.

\begin{thebibliography}{99}

\bibitem{AitSahalia}Ait-Sahalia, Y. \& Yu, J. 2006 Saddlepoint Approximations for Continuous-Time Markov Processes. \textit{Journal of Econometrics}, {\tt 134}, 507-551. (doi:10.1016/j.jeconom.2005.07.004).

\bibitem{AMST06}Albrecher, H., Mayer, P., Schoutens, W. \& Tistaert, J. 2007 The Little Heston Trap.
\textit{Wilmott Magazine}, January issue, 83-92.

\bibitem{AP07}Andersen, L.B.G. \& Piterbarg, V.V. 2007 Moment Explosions in Stochastic Volatility Models. \textit{Finance \& Stochastics} {\tt 11} (1), 29-50. (doi:       10.1007/s00780-006-0011-7).

\bibitem{BF1}Benaim, S. \& Friz, P. 2009 Regular Variation and Smile Asymptotics.
\textit{Mathematical Finance} {\tt 19} (1), 1-12. (doi:10.1111/j.1467-9965.2008.00354).

\bibitem{BF2}Benaim, S. \& Friz, P. 2008 Smile Asymptotics 2: Models with Known Moment Generating Functions. \textit{Journal of Applied Probability} {\tt 45} (1), 16-32. (doi:10.1239/jap/1208358948).

\bibitem{Berestycki}Berestycki, H., Busca, J. \& Florent, I. 2004 Computing the Implied Volatility in Stochastic Volatility models. 
\textit{Comm. Pure  App. Math.} {\tt 57} (10), 1352-1373. (doi:10.1002/cpa.20039).

\bibitem{Bleistein}Bleistein, N. \& Handelsman, R.A. 1975 \textit{Asymptotic expansions of integrals}.
New York, London, Holt, Rinehart and Winston.

\bibitem{But}Butler, R.W. 2007 \textit{Saddlepoint Approximations with Applications}.
Cambridge University Press.

\bibitem{CarrMadan}Carr, P. \& Madan, D. 2009 Saddlepoint Methods for Option Pricing.
\textit{Journ. Comp. Fin.} {\tt 13} (1), 49-62.

\bibitem{FFF09}Feng, J., Forde, M. \& Fouque, J.P. 2009 Short maturity asymptotics for a fast mean-reverting Heston stochastic volatility model. 
\textit{SIAM Journal of Financial Mathematics} {\tt 1}, 126-141. (doi:10.1137/090745465).

\bibitem{FJ09I}Forde, M. \& Jacquier, A. 2009 Small-time asymptotics for implied volatility under the Heston model. \textit{International Journal of Theoretical and Applied Finance} {\tt 12} (6), 861-876. (doi:10.1142/S021902490900549).

\bibitem{FJ09II}Forde, M. \& Jacquier, A. 2010 The large-maturity smile for the Heston model.
Forthcoming in \textit{Finance \& Stochastics}.

\bibitem{FPS00}Fouque, J.P., Papanicolaou, G. \& Sircar, R.K. 2000 \textit{Derivatives in financial markets with stochastic volatility}.
Cambridge University Press, Cambridge.

\bibitem{Gatheral}Gatheral, J. 2006 \textit{The volatility surface: A practitioner's guide}. Wiley Finance, New York.

\bibitem{Glasserman}Glasserman, P. \& Kim, K. 2009 Saddlepoint Approximations for Affine Jump-Diffusion Models. \textit{Journal of Economic Dynamics and Control} {\tt 33}, 37-52. (doi:10.1016/j.jedc.2008.04.007).

\bibitem{Gulisashvili1}Gulisashvili, A. 2009 Asymptotic Formulas with Error Estimates for Call Pricing Functions and the Implied Volatility at Extreme Strikes.
\textit{Preprint available at } arxiv1.library.cornell.edu/abs/0906.0394.

\bibitem{Gulisashvili2}Gulisashvili, A. \& Stein, E. 2010 Asymptotic Behavior of the Stock Price Distribution Density and Implied Volatility in Stochastic Volatility Models.
Forthcoming in \textit{Applied Mathematics and Optimization}. (doi:10.1007/S00245-009-9085-x).

\bibitem{SABR}Hagan, P., Kumar, D., Lesniewski, A. \& Woodward, D. 2002 Managing smile risk.
\textit{Wilmott Magazine}, September issue, 84-108.

\bibitem{Labordere}Henry-Labord\`ere, P. 2005 A general asymptotic implied volatility for stochastic volatility models. \textit{Preprint available at } arxiv.org/abs/cond-mat/0504317.

\bibitem{Heston}Heston, S.L. 1993 A closed-form solution for options with stochastic volatility with applications to bond and currency options.
\textit{Review of Financial Studies} {\tt 6}, 237-343. (doi:10.2307/2962057).

\bibitem{Foulon}Hout, K.J.I. \& Foulon, S. 2010 ADI Finite difference schemes for option pricing in the Heston model with correlation.
Forthcoming in \textit{International Journal of Numerical Analysis and Modeling} {\tt 7} (2), 303-320.

\bibitem{KarlinTaylor}Karlin, S. \&\ Taylor, H.M. 1981 
\textit{A second course in stochastic processes}. Academic Press.

\bibitem{KR08}Keller-Ressel, M. 2010 Moment Explosions and Long-Term Behavior of Affine Stochastic Volatility Models. Forthcoming in \textit{Mathematical Finance}.

\bibitem{Kluge}Kluge, T. 2002 Pricing Derivatives in Stochastic Volatility Models using the Finite Difference Method. Diploma thesis, Technical University, Chemnitz, available at \href{http://archiv.tu-chemnitz.de/pub/2003/0008/index.html}{archiv.tu-chemnitz.de/pub/2003/0008/index.html}

\bibitem{Lee042}Lee, R.W. 2004 The Moment Formula for Implied Volatility at Extreme Strikes.
\textit{Mathematical Finance} {\tt 14} (3) , 469-480. (doi:10.1111/j.0960-1627.2004.00200).

\bibitem{Lee04}Lee, R.W. 2004 Option Pricing by Transform Methods: Extensions, Unification, and Error Control. \textit{Journal of Computational Finance} {\tt 7} (3), 51-86.

\bibitem{Lewis00}Lewis, A. 2000 Option valuation under stochastic volatility, Finance Press, California, USA.

\bibitem{LK08}Lord, R. \& Kahl, C. 2010 Complex Logarithms in Heston-Like Models.
Forthcoming in \textit{Mathematical Finance}.

\bibitem{Olv74}Olver, F.W. 1974 \textit{Asymptotics and Special Functions}. Academic Press.

\bibitem{RogersSaddle}Rogers, L.C.G. \& Zane, O. 1999 Saddlepoint approximations to option prices.
\textit{Annals of Applied Probability} {\tt 9}, 493-503. (doi:10.1214/aoap/1029962752).

\bibitem{Rogers}Rogers, L.C.G. \& Tehranchi, M.R. 2010 Can the implied volatility move by parallel shifts?
\textit{Finance \& Stochastics} {\tt 14} (2), 235-248. (doi:10.1007/s00780-008-0081-9).

\bibitem{SS03}Stein, E.M. \& Sharkarchi, R. 2003 \textit{Complex Analysis}. Princeton University Press.

\bibitem{Tehr09}Tehranchi, M.R. 2009 Asymptotics of implied volatility far from maturity.
\textit{Journal of Applied Probability} {\tt 46} (3), 629-650. (doi:10.1239/jap/1253279843).

\end{thebibliography}
\end{document}